\documentclass[11pt,letterpaper]{article}

\usepackage[T1]{fontenc}
\usepackage{lmodern}
\usepackage[utf8]{inputenc}
\usepackage{amsmath,amssymb,amsthm,mathtools,bm}
\usepackage{booktabs,array}
\usepackage{graphicx}
\usepackage{microtype}
\usepackage{enumitem}
\usepackage{float}
\usepackage[margin=1in]{geometry}
\usepackage[hidelinks]{hyperref}
\usepackage[nameinlink,noabbrev]{cleveref}

\title{Geometric and Arithmetic Likelihood Aggregation for Diffusions with Heterogeneous Volatility}
\author{Jan Vecer\thanks{Department of Probability and Mathematical Statistics, Charles University, Prague, Czech Republic\\ (\href{mailto:vecer@karlin.mff.cuni.cz}{\texttt{vecer@karlin.mff.cuni.cz}}).}}

\date{September 8, 2026}
\hypersetup{
  pdftitle={Geometric and Arithmetic Likelihood Aggregation for Diffusions with Heterogeneous Volatility},
  pdfauthor={Jan Vecer}
}

\theoremstyle{plain}
\newtheorem{theorem}{Theorem}[section]
\newtheorem{proposition}[theorem]{Proposition}
\newtheorem{lemma}[theorem]{Lemma}
\newtheorem{corollary}[theorem]{Corollary}
\newtheorem{assumption}[theorem]{Assumption}
\theoremstyle{definition}
\newtheorem{definition}[theorem]{Definition}
\theoremstyle{remark}
\newtheorem{remark}[theorem]{Remark}

\crefname{theorem}{Theorem}{Theorems}
\Crefname{theorem}{Theorem}{Theorems}
\crefname{proposition}{Proposition}{Propositions}
\Crefname{proposition}{Proposition}{Propositions}
\crefname{lemma}{Lemma}{Lemmas}
\Crefname{lemma}{Lemma}{Lemmas}
\crefname{corollary}{Corollary}{Corollaries}
\Crefname{corollary}{Corollary}{Corollaries}
\crefname{definition}{Definition}{Definitions}
\Crefname{definition}{Definition}{Definitions}
\crefname{assumption}{Assumption}{Assumptions}
\Crefname{assumption}{Assumption}{Assumptions}
\crefname{remark}{Remark}{Remarks}
\Crefname{remark}{Remark}{Remarks}
\crefname{example}{Example}{Examples}
\Crefname{example}{Example}{Examples}

\newcommand{\R}{\mathbb{R}}
\newcommand{\E}{\mathbb{E}}
\newcommand{\Pp}{\mathbb{P}}

\newcommand{\F}{\mathcal{F}}
\newcommand{\A}{\mathcal{A}}

\newcommand{\D}{\mathfrak{D}}
\newcommand{\ellit}{\mathfrak{l}}
\newcommand{\KL}{D_{\mathrm{KL}}}
\newcommand{\AW}{\mathcal{AW}}
\newcommand{\SW}{\mathcal{SW}}
\newcommand{\BW}{d_{\mathrm{BW}}}
\newcommand{\tr}{\operatorname{tr}}

\newcommand{\argmin}{\operatorname*{arg\,min}}
\newcommand{\dd}{\,\mathrm{d}}

\newcommand{\law}{\mathcal{L}}
\newcommand{\Spp}{\mathbb{S}_{++}^{d}}
\newcommand{\Sp}{\mathbb{S}_{+}^{d}}
\newcommand{\norm}[1]{\left\lVert #1\right\rVert}

\newcommand{\Fid}{\operatorname{F}}

\numberwithin{equation}{section}
\allowdisplaybreaks

\begin{document}
\maketitle

\begin{abstract}
We study how to combine diffusion models that disagree about drift and covariance. Candidate-first relative-entropy minimization gives geometric pooling, whereas expert-first minimization gives the arithmetic mixture associated with weighted logarithmic wealth. Different quadratic variations can make path-space entropy infinite, and the arithmetic mixture need not be a Markov diffusion. We therefore specify a local criterion combining drift information, normalized by the second argument's covariance, with quadratic transport between Gaussian shocks in a fixed state metric. A Gaussian identity and an Euler convergence estimate justify this chosen criterion. The expert-first projection has posterior-mean drift and an inverse-covariance penalty for drift dispersion; in one dimension this penalty increases volatility. For Ornstein--Uhlenbeck experts with a common mean-reversion rate, coefficient regularity holds on the full horizon for common volatility and away from the initial time for heterogeneous volatilities. The candidate-first problem has a Hamilton--Jacobi--Bellman characterization. Its matrix covariance selector reduces by congruence to a Bures--Wasserstein barycenter. The condition $H+\lambda M\succ0$, with value Hessian $H$ and state metric $M$, is sharp for finiteness of the unrestricted local covariance problem; compact constraints keep that problem finite. A covariance-disagreement budget interprets the penalty parameter. Linear--quadratic, exact-transition, and financial examples distinguish dynamic volatility reduction, drift-dispersion inflation, and martingale restrictions.
\end{abstract}

\noindent\textbf{Keywords:} Kullback--Leibler divergence, diffusion aggregation, Bures--Wasserstein distance, stochastic volatility, stochastic control

\noindent\textbf{2020 Mathematics Subject Classification:} 49L25, 49Q22, 60H10, 91G20, 94A17

\section{Introduction}\label{sec:intro}

Suppose several diffusion models describe the same state variables but disagree about drift and covariance. How should they be combined into one model for a decision problem? In finance, the state may contain asset prices and volatility factors; in economics, it may describe output, consumption, or aggregate risk. Averaging coefficients gives a diffusion, but does not identify the approximation problem it solves. We formulate two such problems and characterize their optimal coefficients.

The distinction begins with the order of relative entropy. For expert laws $P_k$ and positive weights $\pi_k$ summing to one, the classical identities are
\begin{align}
 \argmin_Q\sum_k\pi_k\KL(Q\|P_k)
 &\quad\text{is the normalized geometric pool},
 \label{eq:intro_forward_pool}\\
 \argmin_Q\sum_k\pi_k\KL(P_k\|Q)
 &\quad\text{is the arithmetic pool }P^A=\sum_k\pi_kP_k.
 \label{eq:intro_reverse_pool}
\end{align}
The first requires suitable common support and integrability; the second holds for arbitrary probability laws \cite{Abbas2009,Csiszar1975,GenestZidek1986}. We call these the \emph{candidate-first} and \emph{expert-first} orders, according to the position of the variable law $Q$.

Both have a financial interpretation. With physical beliefs $P$ fixed and a martingale measure $Q$ variable, logarithmic utility has the dual objective $\KL(P\|Q)$, while exponential utility leads to $\KL(Q\|P)$. Vecer~\cite{VecerNumeraire2026} shows that, on compatible martingale-measure domains, the logarithmic selector commutes with the change-of-numeraire likelihood transform, whereas the minimal-entropy selector need not. Those results concern pricing-measure selection. They motivate retaining both directions here, where the unknown is a consensus law.

The expert-first order also follows from weighted logarithmic wealth. For a pricing numeraire $Y$, pricing measure $P^Y$, and normalized likelihood payoff $Z=\dd Q/\dd P^Y$,
\begin{equation}\label{eq:intro_log_wealth}
 \sum_k\pi_k\E^{P_k}\log Z
 =\sum_k\pi_k\KL(P_k\|P^Y)-\sum_k\pi_k\KL(P_k\|Q)
\end{equation}
under the support and finiteness conditions stated below. The arithmetic pool maximizes this objective over the unrestricted claim class. If $Z_t^k=\E^{P^Y}[\dd P_k/\dd P^Y\mid\F_t]$, then
\begin{equation}\label{eq:intro_dynamic_weights}
 Z_t^A=\sum_k\pi_kZ_t^k,\qquad
 w_t^k=\frac{\pi_kZ_t^k}{Z_t^A}.
\end{equation}
These weights are posterior expert probabilities, or relative subfund wealth shares when the likelihood claims are attainable \cite{Cover1991,HoetingEtAl1999,Vecer2026,VecerRichardTaylor2025}.

Covariance disagreement creates a further difficulty. A continuous path reveals its quadratic variation, and incompatible quadratic-variation identities make diffusion laws mutually singular. The candidate-first path-space entropy objective is then infinite for every candidate. The common-covariance construction of Jaimungal and Pesenti~\cite{JaimungalPesenti2026} cannot be applied directly across these laws. The arithmetic mixture remains defined, but generally is not a diffusion with coefficients depending only on the current time and state. A diffusion-valued aggregate therefore requires an additional approximation criterion.

We measure drift disagreement in the covariance of the second argument, as in Gaussian relative entropy, and covariance disagreement by optimal quadratic matching of Gaussian shocks. A fixed positive definite state metric $M$ specifies the relative importance of state directions. With $d_M$ the resulting covariance distance, the local cost is
\begin{equation}\label{eq:local_cost_intro}
 \ellit_{\lambda,M}((b,A),(\bar b,B))
 =\frac12(b-\bar b)^\top B^{-1}(b-\bar b)
 +\frac\lambda2d_M^2(A,B),\qquad\lambda>0.
\end{equation}
For $M=I_d$, $d_M$ is the Bures--Wasserstein distance. The normalization, metric, and penalty are specified modelling choices. The Gaussian calculation identifies their one-step cost, and the Euler theorem proves convergence of the resulting discretized functional; neither selects those choices uniquely.

The contributions are as follows.
\begin{enumerate}[label=(\roman*),leftmargin=2em]
\item We formulate a directed local-characteristic criterion across heterogeneous covariance specifications, establish linear-coordinate equivariance with the metric transformed accordingly, and prove an $O(h^{1/2})$ approximation bound for its renormalized Euler-kernel functional (\Cref{sec:divergence,sec:approx}).
\item We solve the expert-first coefficient projection. Its drift is the current-state posterior mean, and its covariance balances transport loss against drift dispersion. We compute the alternative first-argument normalization, distinguish this projection from marginal-preserving mimicking, and prove global coefficient regularity for Ornstein--Uhlenbeck experts sharing a mean-reversion rate, with common input volatility on $[0,T]$ and heterogeneous input volatilities on $[t_0,T]$, $t_0>0$ (\Cref{sec:reverse_projection}).
\item We characterize candidate-first aggregation by diffusion control. A congruence formula reduces the unrestricted matrix covariance selector to an ordinary Bures--Wasserstein barycenter, without commutativity assumptions. The condition $H+\lambda M\succ0$ is necessary and sufficient for finiteness of that local problem. We also characterize the constrained selector, which remains finite on a compact covariance set (\Cref{sec:barycenter}).
\item We interpret $\lambda$ through a budget for covariance disagreement and establish the associated monotone trade-off. In the scalar expert-first problem, this yields an explicit calibration to a volatility tolerance (\Cref{subsec:covariance_budget,rem:scalar_tolerance}).
\item We identify distinct effects of drift disagreement: volatility inflation under the expert-first normalization and dynamic volatility reduction in a candidate-first linear--quadratic model. Exact Ornstein--Uhlenbeck transitions reveal a signed correction beyond the Euler criterion; martingale and square-root examples show how financial restrictions alter aggregation (\Cref{sec:LQ_example,sec:OU,sec:finance,sec:cir}).
\end{enumerate}

The economic distinction is where approximation error is evaluated. Fixed expert state distributions lead to posterior-weighted pointwise optimization. Evaluation under the aggregate's own distribution allows it to change future exposure to disagreement and requires dynamic programming. A marginal-preserving diffusion is instead appropriate when the objective is to retain mixture expectations of European payoffs \cite{BrunickShreve2013,Gyongy1986}. A pricing aggregate must also satisfy martingale restrictions for traded assets.

Adapted Wasserstein methods compare processes while preserving their information structure \cite{BackhoffBartlBeiglbockEder2020,BartlBeiglbockPammer2026,BartlBeiglbockPammerSchrottZhang2025}. Related work studies bicausal diffusion transport, multicausal barycenters, and Gaussian adapted geometry \cite{AcciaioBartlGrassHouPammer2026,AcciaioKrsekPammer2025,BackhoffKallbladRobinson2025,GunasingamMattesiniWieselWong2026,GunasingamWong2025,HitzRobinson2024}. Filtered Gaussian transport and an adapted Brenier theorem further describe optimal couplings \cite{BeiglbockPammerSchrott2025,GunasingamWong2026}. These are path-space transport problems. Our general criterion combines local covariance transport with directed drift information; its deterministic Gaussian martingale reduction is a consistency result within that literature.

\Cref{sec:setup,sec:orientations} give the diffusion setup and classical pooling identities. The remaining sections follow the contributions above; the scalar martingale-constrained calculation is proved in the appendix.

\section{Diffusion models and the quadratic-variation obstruction}\label{sec:setup}

Let $\Omega=C([0,T];\R^d)$ with canonical process $X$, canonical filtration $(\F_t)_{0\le t\le T}$, and Borel sigma-field $\F_T$.  A diffusion model is a pair $m=(b,a)$ with
\[
 b:[0,T]\times\R^d\to\R^d,
 \qquad a:[0,T]\times\R^d\to\Spp.
\]
We write $P^m=P^{b,a}$ for the law under which
\begin{equation}\label{eq:sde_general}
 \dd X_t=b(t,X_t)\dd t+a(t,X_t)^{1/2}\dd W_t,
 \qquad X_0=x_0,
\end{equation}
where the principal symmetric square root is used only as a representative; all definitions below depend on $a$, not on a chosen factorization.

\begin{assumption}[Regular diffusion class]\label{ass:regular}
There are constants $0<\underline a\le \overline a<\infty$ and $L<\infty$ such that every model considered satisfies
\[
 \underline a I_d\le a(t,x)\le \overline a I_d,
\]
The coefficients are globally Lipschitz in $x$, uniformly in $t$, and $|b(t,x)|\le L(1+|x|)$. Their time regularity is uniform on compact time intervals, including the endpoints, with the explicit envelope
\begin{equation}\label{eq:time_regularity}
 |b(t,x)-b(s,x)|+\norm{a(t,x)-a(s,x)}_F
 \le L(1+|x|)|t-s|^{1/2},\qquad s,t\in[0,T],\ x\in\R^d.
\end{equation}
The constants are common whenever a class of models is compared. The initial state $x_0$ is common.
\end{assumption}

Assumption \ref{ass:regular} gives a unique strong solution and uniform moment bounds.  More general weakly well-posed martingale problems can be used, but the regular class keeps the coefficient estimates and approximation arguments explicit.

For a continuous path $\omega$, let $[\omega]$ denote its pathwise quadratic variation along dyadic partitions whenever the limit exists.  Under $P^{b,a}$,
\begin{equation}\label{eq:qv_identity}
 [X]_t=\int_0^t a(s,X_s)\dd s,
 \qquad 0\le t\le T,
 \quad P^{b,a}\text{-a.s.}
\end{equation}

\begin{proposition}[Quadratic variation forces singularity]\label{prop:singularity}
Let $m_i=(b_i,a_i)$, $i=1,2$, satisfy Assumption \ref{ass:regular}.  Define
\[
 \Gamma_i:=\left\{\omega:\ [\omega]_t=\int_0^t a_i(s,\omega_s)\dd s
 \text{ for every }t\in[0,T]\right\}.
\]
Then $P^{m_i}(\Gamma_i)=1$.  If
\begin{equation}\label{eq:sing_condition}
 P^{m_2}\left(\int_0^T\norm{a_1(t,X_t)-a_2(t,X_t)}_F\dd t>0\right)=1,
\end{equation}
then $P^{m_2}(\Gamma_1)=0$ and $P^{m_1}\perp P^{m_2}$.  In particular, $\KL(P^{m_1}\|P^{m_2})=+\infty$.
\end{proposition}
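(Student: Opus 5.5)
The argument has three steps. First we show $P^{m_i}(\Gamma_i)=1$. Then we show that $\Gamma_1$ is $P^{m_2}$-null. Singularity and infinite entropy then follow at once.

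\emph{Step 1: $P^{m_i}(\Gamma_i)=1$.} Under $P^{m_i}$ the canonical process is a continuous semimartingale. Its martingale part is $\int_0^\cdot a_i(s,X_s)^{1/2}\dd W_s$ and its drift has finite variation. By the standard theory, the dyadic sums $\sum_j (X_{t_{j+1}\wedge t}-X_{t_j\wedge t})(X_{t_{j+1}\wedge t}-X_{t_j\wedge t})^\top$ converge in probability to $\int_0^t a_i(s,X_s)\dd s$, uniformly in $t$. Almost sure convergence along the full dyadic sequence is needed, since $[\omega]$ is defined pathwise. For this we use the classical result (L\'evy for Brownian motion, extended to continuous semimartingales) that dyadic quadratic variation converges a.s. One way to obtain it is through the $L^2$ bound $\E\bigl|\sum_j(\Delta_j X)^2-\int a\bigr|^2=O(2^{-n})$, which is summable; here Assumption~\ref{ass:regular} gives bounded $a_i$ and moment bounds on the drift. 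Alternatively, we can pass to a subsequence, and this is enough if $[\omega]$ is understood via that subsequence. Uniformity in $t$ comes from monotonicity of the diagonal entries, together with polarization and continuity of the limit. This gives \eqref{eq:qv_identity} simultaneously for all $t$ a.s., that is, $P^{m_i}(\Gamma_i)=1$. I expect this step to be the main technical point, because the pathwise definition requires a.s. convergence along the fixed dyadic sequence, and I would carry out the summable $L^2$ estimate in detail.

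\emph{Step 2: $P^{m_2}(\Gamma_1)=0$.} Take $\omega\in\Gamma_1\cap\Gamma_2$. Then $\int_0^t a_1(s,\omega_s)\dd s=\int_0^t a_2(s,\omega_s)\dd s$ for all $t$. The integrands are continuous in $s$: the coefficients are Lipschitz in $x$ and $\tfrac12$-Hölder in $t$ by \eqref{eq:time_regularity}, and $\omega$ is continuous. Differentiating therefore gives $a_1(s,\omega_s)=a_2(s,\omega_s)$ for every $s$. Hence $\int_0^T\norm{a_1(t,\omega_t)-a_2(t,\omega_t)}_F\dd t=0$. Condition \eqref{eq:sing_condition} says this event has $P^{m_2}$-probability zero, so $P^{m_2}(\Gamma_1\cap\Gamma_2)=0$. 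Since $P^{m_2}(\Gamma_2)=1$ by Step~1, we conclude $P^{m_2}(\Gamma_1)=0$.

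\emph{Step 3: singularity and entropy.} Measurability of $\Gamma_i$ holds because the dyadic sums are $\F_T$-measurable, so limits and the equality sets are Borel. The set $\Gamma_1$ has full $P^{m_1}$-measure and zero $P^{m_2}$-measure, so $P^{m_1}\perp P^{m_2}$. In particular, $P^{m_1}$ is not absolutely continuous with respect to $P^{m_2}$, and by the definition of relative entropy $\KL(P^{m_1}\|P^{m_2})=+\infty$.
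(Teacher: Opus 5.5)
Your proposal is correct and follows essentially the same route as the paper. The paper likewise shows that $\Gamma_1\cap\Gamma_2$ lies in the event $\{\int_0^T\norm{a_1(t,X_t)-a_2(t,X_t)}_F\dd t=0\}$ and then combines \eqref{eq:sing_condition} with $P^{m_2}(\Gamma_2)=1$ and $P^{m_1}(\Gamma_1)=1$. The differences are minor: the paper simply cites \eqref{eq:qv_identity} for Step~1 rather than proving a.s.\ dyadic convergence, and it differentiates the absolutely continuous integrals Lebesgue-a.e., so it does not need your appeal to the time regularity in Assumption~\ref{ass:regular} for continuity of the integrands.
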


\begin{proof}
The identity $P^{m_i}(\Gamma_i)=1$ is \eqref{eq:qv_identity}.  On $\Gamma_1\cap\Gamma_2$ the absolutely continuous matrix-valued functions
\[
 t\mapsto\int_0^t a_1(s,\omega_s)\dd s,
 \qquad
 t\mapsto\int_0^t a_2(s,\omega_s)\dd s
\]
coincide, hence their derivatives coincide for Lebesgue-a.e. $t$.  Therefore
\[
 \Gamma_1\cap\Gamma_2\subseteq
 \left\{\int_0^T\norm{a_1(t,X_t)-a_2(t,X_t)}_F\dd t=0\right\}.
\]
Condition \eqref{eq:sing_condition} gives $P^{m_2}(\Gamma_1)=0$, while $P^{m_1}(\Gamma_1)=1$.  Thus the laws are singular.  Relative entropy is infinite whenever the first measure is not absolutely continuous with respect to the second.
\end{proof}

\begin{remark}
For deterministic covariance rates $a_i(t)$, any difference on a set of positive Lebesgue measure implies singularity.  For a Cox--Ingersoll--Ross variance factor, different coefficients governing volatility of volatility give different quadratic variation $[V]_t=\int_0^t\xi_i^2V_s\dd s$ and hence singular laws under the usual positivity assumptions.  This is the financially relevant case developed in \Cref{sec:cir}.
\end{remark}

\section{Geometric and arithmetic consensus on law space}\label{sec:orientations}

Let $P_1,\ldots,P_K$ be probability measures on a measurable space $(\Omega,\mathcal F)$, with weights $\pi_k>0$ and $\sum_k\pi_k=1$.  The two Kullback--Leibler orientations select different consensus laws.  We recall the classical pooling identities \cite{Abbas2009,Csiszar1975,GenestZidek1986} before relating them to logarithmic wealth and diffusion projection.

\begin{proposition}[Classical Kullback--Leibler pooling identities]\label{thm:two_KL_pools}
Let $\pi_k>0$ and $\sum_k\pi_k=1$.
\begin{enumerate}[label=(\roman*),leftmargin=2em]
\item Suppose $P_k\ll R$ with strictly positive densities $Z_k$ and
$0<c_G:=\E^R\prod_kZ_k^{\pi_k}<\infty$. Define
\begin{equation}\label{eq:geometric_pool}
 \frac{\dd P^G}{\dd R}=c_G^{-1}\prod_kZ_k^{\pi_k}.
\end{equation}
For every $Q\ll R$ for which the terms are well defined,
\begin{equation}\label{eq:forward_KL_decomp}
 \sum_k\pi_k\KL(Q\|P_k)=\KL(Q\|P^G)-\log c_G.
\end{equation}
Thus $P^G$ is the unique candidate-first minimizer.
\item For arbitrary $P_k$, put
\begin{equation}\label{eq:arithmetic_pool}
 P^A=\sum_k\pi_kP_k.
\end{equation}
Then, for every probability law $Q$,
\begin{equation}\label{eq:reverse_KL_decomp}
 \sum_k\pi_k\KL(P_k\|Q)
 =\operatorname{JS}_{\pi}(P_1,\ldots,P_K)+\KL(P^A\|Q),
\end{equation}
where
\begin{equation}\label{eq:JS_def}
 \operatorname{JS}_{\pi}:=\sum_k\pi_k\KL(P_k\|P^A)
 \le-\sum_k\pi_k\log\pi_k<\infty.
\end{equation}
Hence $P^A$ is the unique expert-first minimizer. If $P_k\ll R$, then
\begin{equation}\label{eq:arithmetic_density}
 \dd P^A/\dd R=\sum_k\pi_k Z_k.
\end{equation}
\end{enumerate}
\end{proposition}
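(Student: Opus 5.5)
Both parts rest on direct density computations followed by Gibbs' inequality, $\KL(Q\|P)\ge0$ with equality iff $Q=P$.

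\emph{Part (i).} Fix $Q\ll R$ with density $q=\dd Q/\dd R$. Since each $Z_k>0$, we have $Q\ll P_k$ with $\dd Q/\dd P_k=q/Z_k$, so
\[
\KL(Q\|P_k)=\E^Q[\log q-\log Z_k].
\]
Summing with weights $\pi_k$, and using $\sum_k\pi_k=1$, gives
\[
\sum_k\pi_k\KL(Q\|P_k)=\E^Q\Bigl[\log q-\log\prod_kZ_k^{\pi_k}\Bigr].
\]
By \eqref{eq:geometric_pool}, $\log\prod_kZ_k^{\pi_k}=\log(\dd P^G/\dd R)+\log c_G$. Substituting yields $\E^Q[\log(\dd Q/\dd P^G)]-\log c_G$, which is \eqref{eq:forward_KL_decomp}. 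Since $c_G$ is a constant, the minimization reduces to minimizing $\KL(Q\|P^G)$, and Gibbs' inequality gives $P^G$ as the unique minimizer.

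The main obstacle is integrability in this step. Splitting $\E^Q[\log q-\sum_k\pi_k\log Z_k]$ into separate expectations requires the integrals to be well defined. Here I would invoke the hypothesis that ``the terms are well defined'': each $\KL(Q\|P_k)$ is an expectation of a function whose negative part is $Q$-integrable, so the terms combine in $(-\infty,\infty]$. I would add a one-line remark that if some $\KL(Q\|P_k)=\infty$, both sides equal $+\infty$. The right side is infinite because $\log(\dd Q/\dd P^G)$ differs from $\sum_k\pi_k\log(q/Z_k)$ only by the finite constant $\log c_G$.

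\emph{Part (ii).} First, $P_k\le\pi_k^{-1}P^A$, so $P_k\ll P^A$ and $\dd P_k/\dd P^A\le\pi_k^{-1}$. Hence $\KL(P_k\|P^A)\le-\log\pi_k$, which gives \eqref{eq:JS_def}. For any $Q$, if $P^A\not\ll Q$, then some $P_k\not\ll Q$, and both sides of \eqref{eq:reverse_KL_decomp} are $+\infty$. Otherwise, the chain rule $\dd P_k/\dd Q=(\dd P_k/\dd P^A)(\dd P^A/\dd Q)$ gives
\[
\KL(P_k\|Q)=\KL(P_k\|P^A)+\E^{P_k}\log\frac{\dd P^A}{\dd Q}.
\]
Weighting by $\pi_k$ turns the last terms into $\E^{P^A}\log(\dd P^A/\dd Q)=\KL(P^A\|Q)$, which proves \eqref{eq:reverse_KL_decomp}. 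This splitting is legitimate because the first term is bounded above by $-\log\pi_k$ and nonnegative. Uniqueness again follows from Gibbs' inequality, because $\operatorname{JS}_\pi$ is a finite constant. Finally, \eqref{eq:arithmetic_density} follows from linearity of $A\mapsto\int_AZ_k\,\dd R$.
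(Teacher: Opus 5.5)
Your proposal is correct and follows the paper's proof: in (i) you expand $\log(\dd Q/\dd P_k)=\log q-\log Z_k$ and sum, in (ii) you use $P^A\ge\pi_kP_k$ and the Radon--Nikodym chain rule through $P^A$, and in both parts you obtain the minimizer from nonnegativity of relative entropy. Your handling of infinite values is more explicit than the paper's. The ``well defined'' proviso is in fact automatic, because $\E^Q[(\log\frac{\dd Q}{\dd P})^-]\le e^{-1}$ for every $Q\ll P$; this bound also justifies combining $\sum_k\pi_k\E^{P_k}\log\frac{\dd P^A}{\dd Q}$ into $\E^{P^A}\log\frac{\dd P^A}{\dd Q}$.
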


\begin{proof}
Expanding $\log(\dd Q/\dd P_k)$ and summing proves (i). For (ii),
$P^A\ge\pi_kP_k$ gives $\KL(P_k\|P^A)\le-\log\pi_k$.
The relative-entropy chain rule and $\sum_k\pi_kP_k=P^A$ yield
\eqref{eq:reverse_KL_decomp}, including infinite values. Both minimizers follow from nonnegativity of relative entropy.
\end{proof}

\subsection{Utility duality and the order of the arguments}
\label{subsec:numeraire_motivation}

For a fixed physical law $P$ and an equivalent martingale measure $Q$, write $Z_Q=\dd Q/\dd P$. The conjugate $V(y)=\sup_x\{U(x)-xy\}$ of logarithmic or exponential utility gives, for $y>0$ and $\gamma>0$,
\begin{align}
 \E^P[V_{\log}(yZ_Q)]
 &=-\log y-1+\KL(P\|Q),\label{eq:numeraire_log_dual}\\*
 \E^P[V_{\exp}(yZ_Q)]
 &=\frac y\gamma\left\{\KL(Q\|P)+\log\frac y\gamma-1\right\},
 \label{eq:numeraire_exp_dual}
\end{align}
where $U_{\log}(x)=\log x$ and $U_{\exp}(x)=-e^{-\gamma x}$. Thus logarithmic utility leads to the expert-first order and the minimal-entropy martingale measure to the candidate-first order \cite{Frittelli2000}. The latter is also used in entropy-penalized robust control \cite{HansenSargent2008}.

Vecer~\cite{VecerNumeraire2026} relates these objectives to numeraire choice. For positive traded numeraires $N,Y$, let $L=(Y_T/N_T)/(Y_0/N_0)$. On a martingale-measure class satisfying $\E^Q L=1$, the likelihood transform is $T_LQ=LQ$. For fixed $P$ and $\E^P|\log L|<\infty$,
\begin{equation}\label{eq:numeraire_constant_shift}
 \KL(P\|T_LQ)=\KL(P\|Q)-\E^P\log L.
\end{equation}
When $T_L$ maps the admissible classes bijectively, the minimizers of the logarithmic objective are therefore carried into one another. The change in $\KL(Q\|P)$ generally depends on $Q$, so the analogous conclusion need not hold.

These are statements about constrained pricing-measure selection with physical beliefs held fixed. Both unrestricted pools in \Cref{thm:two_KL_pools} are independent of the dominating measure used to write their densities. Neither fact makes the covariance-transport loss introduced below invariant under a change of numeraire. The present aggregation problems retain the two argument orders while specifying that additional loss separately.

\subsection{Likelihood payoffs and arithmetic aggregation}

\begin{corollary}[Weighted logarithmic wealth selects the arithmetic pool]
\label{cor:log_wealth_arithmetic}
Fix a pricing numeraire $Y$ with pricing measure $P^Y$. Suppose
$P_k\ll P^Y$, $\KL(P_k\|P^Y)<\infty$ for every $k$, and
$P^A=\sum_k\pi_kP_k\sim P^Y$. Write
$Z_k=\dd P_k/\dd P^Y$. For every strictly positive normalized claim
$Z=\dd Q/\dd P^Y$, with $\E^{P^Y}Z=1$,
\begin{equation}\label{eq:weighted_log_identity}
 \sum_{k=1}^K\pi_k\E^{P_k}[\log Z]
 =\sum_{k=1}^K\pi_k\KL(P_k\|P^Y)
  -\sum_{k=1}^K\pi_k\KL(P_k\|Q).
\end{equation}
The identity holds in $[-\infty,\infty)$. Over all such claims, the
unique maximizer is the arithmetic likelihood payoff
\begin{equation}\label{eq:arithmetic_payoff}
 Z^A=\frac{\dd P^A}{\dd P^Y}=\sum_{k=1}^K\pi_kZ_k.
\end{equation}
Under the hypotheses of \Cref{thm:two_KL_pools}(i), with $R=P^Y$,
the geometric pool instead has normalized payoff
\begin{equation}\label{eq:geometric_payoff}
 Z^G=\frac{\prod_k Z_k^{\pi_k}}
 {\E^{P^Y}[\prod_kZ_k^{\pi_k}]}.
\end{equation}
\end{corollary}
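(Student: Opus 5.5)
The plan is to prove the pointwise identity \eqref{eq:weighted_log_identity} for each $k$ and then average, using \Cref{thm:two_KL_pools}(ii) for the maximization.

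Fix $k$. Since $Z>0$ and $Z=\dd Q/\dd P^Y$, we have $Q\sim P^Y$, so $P_k\ll Q$ with
\[
 \frac{\dd P_k}{\dd Q}=\frac{Z_k}{Z}\qquad P_k\text{-a.s.}
\]
Hence $\log Z=\log Z_k-\log(\dd P_k/\dd Q)$ holds $P_k$-a.s. By hypothesis $\E^{P_k}\log Z_k=\KL(P_k\|P^Y)$ is finite. The term $\E^{P_k}\log(\dd P_k/\dd Q)=\KL(P_k\|Q)$ lies in $[0,\infty]$.

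The main technical point is integrability: $\log Z$ may fail to be $P_k$-integrable, so the expectation must be justified. I would handle the positive part as follows. Using $\log x\le x-1$,
\[
 \E^{P_k}(\log Z)^+\le \E^{P_k}\bigl(\log Z_k\bigr)^+ + \E^{P_k}\bigl(\log(Z/Z_k)\bigr)^+ .
\]
The first term is finite because $\KL(P_k\|P^Y)<\infty$ forces $\E^{P_k}|\log Z_k|<\infty$; this is the standard fact that the negative part of $\log Z_k$ under $P_k$ is bounded by $1/e$. For the second term, $\E^{P_k}(Z/Z_k)\le \E^{P^Y}Z=1$, so it is bounded by $1$. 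Thus $\E^{P_k}\log Z$ is well defined in $[-\infty,\infty)$, and linearity of expectation gives
\[
 \E^{P_k}\log Z=\KL(P_k\|P^Y)-\KL(P_k\|Q),
\]
with both sides equal to $-\infty$ exactly when $\KL(P_k\|Q)=\infty$. Multiplying by $\pi_k$ and summing finitely many terms, each in $[-\infty,\infty)$ with finite positive parts, yields \eqref{eq:weighted_log_identity}.

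For the maximization, the first sum on the right is a constant independent of $Q$. Maximizing the left side is therefore equivalent to minimizing $\sum_k\pi_k\KL(P_k\|Q)$. By \eqref{eq:reverse_KL_decomp} this equals $\operatorname{JS}_\pi+\KL(P^A\|Q)$, which is minimized uniquely at $Q=P^A$. The maximizer is admissible: $P^A\sim P^Y$ makes $Z^A=\dd P^A/\dd P^Y$ strictly positive $P^Y$-a.s. with $\E^{P^Y}Z^A=1$, and \eqref{eq:arithmetic_density} with $R=P^Y$ gives $Z^A=\sum_k\pi_kZ_k$. At $Q=P^A$ the objective is finite, since $\operatorname{JS}_\pi<\infty$ by \eqref{eq:JS_def}. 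Uniqueness, up to $P^Y$-null sets, follows because $\KL(P^A\|Q)=0$ only if $Q=P^A$.

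Finally, \eqref{eq:geometric_payoff} is just \eqref{eq:geometric_pool} with $R=P^Y$, so it is a restatement rather than a new claim.
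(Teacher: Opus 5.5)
Your proof is correct and follows the paper's route: bound $\E^{P_k}(\log Z)^+$, use the Radon--Nikodym chain rule $\dd P_k/\dd Q=Z_k/Z$, invoke \eqref{eq:reverse_KL_decomp} for the maximization, and use $P^A\sim P^Y$ to make $Z^A$ strictly positive. The only difference is how the positive part is bounded: the paper applies the entropy inequality to $\log\max(1,Z)$ to get $\KL(P_k\|P^Y)+\log 2$, whereas your split $\log Z=\log Z_k+\log(Z/Z_k)$ with $\E^{P_k}[Z/Z_k]\le 1$ gives $\KL(P_k\|P^Y)+1+e^{-1}$, which is equally sufficient.
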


\begin{proof}
Finite $\KL(P_k\|P^Y)$ and $\E^{P^Y}Z=1$ imply
$\E^{P_k}(\log Z)^+\le\KL(P_k\|P^Y)+\log 2$, by the entropy
inequality applied to $\log\max(1,Z)$. Hence each expected logarithm
is well defined, and the Radon--Nikodym chain rule gives
\eqref{eq:weighted_log_identity}. The optimization now follows from
\Cref{thm:two_KL_pools}(ii). Equivalence $P^A\sim P^Y$ makes the
maximizing claim strictly positive.
\end{proof}

For the dynamic interpretation, let
\[
 Z_t^k:=\E^{P^Y}[Z_k\mid\mathcal F_t],
 \qquad
 Z_t^A:=\sum_{k=1}^K\pi_kZ_t^k,
 \qquad
 w_t^k:=\frac{\pi_kZ_t^k}{Z_t^A}.
\]
If the expert likelihood claims are attainable, $Z_t^A$ is the value of their aggregate subfund and $w_t^k$ is its relative wealth share in normalized $Y$-units. Under the latent-expert representation, $w_t^k$ is the Bayesian posterior probability of expert $k$.

\begin{proposition}[Dynamic arithmetic pool under common covariance]\label{thm:dynamic_arithmetic_pool}
Let $R=P^{b_0,a}$ be a reference diffusion law and suppose every expert law $P_k=P^{b_k,a}$ has the same covariance field $a$, is equivalent to $R$, and satisfies the Girsanov integrability conditions.  Put
\[
 \theta_k(t,x):=a(t,x)^{-1/2}(b_k(t,x)-b_0(t,x)).
\]
Let $Z_t^k=\E^R[\dd P_k/\dd R\mid\mathcal F_t]$, $Z_t^A=\sum_k\pi_kZ_t^k$, and $w_t^k=\pi_kZ_t^k/Z_t^A$.  Then $Z_T^A=\dd P^A/\dd R$ and, under $P^A$,
\begin{equation}\label{eq:dynamic_arithmetic_drift}
 \dd X_t=\left(\sum_{k=1}^K w_t^k b_k(t,X_t)\right)\dd t
 +a(t,X_t)^{1/2}\dd W_t^A.
\end{equation}
If a latent index $K$ is sampled with $\Pp(K=k)=\pi_k$ and $X\mid\{K=k\}\sim P_k$, then
\begin{equation}\label{eq:posterior_weights}
 w_t^k=P^A(K=k\mid\mathcal F_t).
\end{equation}
Hence the arithmetic pool updates expert influence by Bayesian evidence and, when the likelihood claims are attainable, by relative subfund performance.  In general, the weights depend on the full observed path, so the arithmetic pool need not be Markov in $X_t$ alone.
\end{proposition}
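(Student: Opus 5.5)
The plan is to prove the three claims in the order stated: the density identity, the drift formula under $P^A$, and the posterior representation. For the first, Girsanov under the stated integrability conditions gives that
\[
 Z_t^k=\mathcal E\Bigl(\int_0^\cdot\theta_k(s,X_s)^\top\dd W^R_s\Bigr)_t
\]
is a true $R$-martingale with $Z_T^k=\dd P_k/\dd R$. Here $W^R$ is the $R$-Brownian motion defined by $\dd X_t=b_0\dd t+a^{1/2}\dd W_t^R$. By linearity, $Z^A$ is a positive $R$-martingale with $Z_T^A=\sum_k\pi_k\,\dd P_k/\dd R=\dd P^A/\dd R$, which is \eqref{eq:arithmetic_density}. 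Positivity holds because each $Z^k$ is a stochastic exponential, and $Z_0^A=1$.

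For the drift, write $\dd Z_t^k=Z_t^k\theta_k^\top\dd W_t^R$. Summing gives
\[
 \dd Z_t^A=Z_t^A\Bigl(\sum_kw_t^k\theta_k(t,X_t)\Bigr)^\top\dd W_t^R ,
\]
so $Z^A=\mathcal E(\int\bar\theta^\top\dd W^R)$ with $\bar\theta_t=\sum_kw_t^k\theta_k$. Because $Z^A$ is a true martingale with terminal value $\dd P^A/\dd R$, the Girsanov theorem applies with this adapted, path-dependent integrand. Thus $W_t^A:=W_t^R-\int_0^t\bar\theta_s\dd s$ is a $P^A$-Brownian motion. Substituting into the $R$-dynamics gives
\[
 \dd X_t=\bigl(b_0+a^{1/2}\bar\theta_t\bigr)\dd t+a^{1/2}\dd W_t^A ,
\]
and $a^{1/2}\theta_k=b_k-b_0$ together with $\sum_kw_t^k=1$ yields \eqref{eq:dynamic_arithmetic_drift}.

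The main point requiring care is the legitimacy of Girsanov for the mixture. The usual route via Novikov for $\bar\theta$ is unnecessary, since the true-martingale property of $Z^A$ is inherited directly from the individual $Z^k$ by linearity; I will stress this. I also need $Z^A>0$ under $R$ so that $w^k$ is well defined, which the stochastic-exponential form gives.

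For \eqref{eq:posterior_weights}, take the joint law $\Pp$ on $\{1,\dots,K\}\times\Omega$ with $\Pp(K=k,X\in\dd\omega)=\pi_kP_k(\dd\omega)$; its $X$-marginal is $P^A$. For $F\in\F_t$,
\[
 \Pp(K=k,\,X\in F)=\pi_k\E^R[Z_T^k1_F]=\pi_k\E^R[Z_t^k1_F]=\E^{P^A}\!\Bigl[\tfrac{\pi_kZ_t^k}{Z_t^A}1_F\Bigr].
\]
The middle equality is the tower property under $R$, and the last is the change of measure using $\E^R[Z_T^A\mid\F_t]=Z_t^A$. This identifies $w_t^k$ as the conditional probability. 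The wealth interpretation follows from the subfund discussion after \Cref{cor:log_wealth_arithmetic}. The non-Markov remark follows because $w_t^k$ is a functional of the whole path through the stochastic exponentials, and in general is not a function of $(t,X_t)$. A two-expert constant-drift example could illustrate this: there $w_t$ depends on $X_t-x_0$ only, so it is Markov, which shows that the claim is only the cautious ``need not''. I would state it as such rather than prove a general negative.
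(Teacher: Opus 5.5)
Your proposal is correct and follows the paper's own route. The paper likewise writes $\dd Z_t^k=Z_t^k\theta_k^\top\dd W_t^R$, sums to obtain $\dd Z_t^A=Z_t^A(\sum_k w_t^k\theta_k)^\top\dd W_t^R$, and invokes Girsanov and Bayes' formula; your remark that $Z^A$ is a true martingale by linearity (so no Novikov condition on $\bar\theta$ is needed) and your tower-property computation of the posterior make explicit what the paper leaves implicit.

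One loose end, which the paper shares, is the closing ``need not be Markov'' sentence. A ``need not'' claim still requires one genuinely path-dependent instance, and your constant-drift example only shows that the Markov property can hold. A suitable instance is Ornstein--Uhlenbeck experts $b_k(x)=-\kappa_k x$ with distinct $\kappa_k$ and a common constant $a$, whose log-likelihood ratios contain $\int_0^t X_s^2\dd s$.
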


\begin{proof}
Under $R$, $\dd Z_t^k=Z_t^k\theta_k^\top\dd W_t^R$, so
$\dd Z_t^A=Z_t^A(\sum_kw_t^k\theta_k)^\top\dd W_t^R$.
Girsanov gives \eqref{eq:dynamic_arithmetic_drift}, and Bayes' formula gives \eqref{eq:posterior_weights}.
\end{proof}

\begin{proposition}[Arithmetic consensus survives singularity but may leave the Markov diffusion class]\label{prop:mixture_not_markov}
The arithmetic pool $P^A=\sum_k\pi_kP_k$ is well defined for arbitrary expert laws and can always be realized on the enlarged state space $\{1,\ldots,K\}\times\Omega$ by sampling a latent expert index once at time zero.  However, it need not equal $P^{b,a}$ for any Markov coefficients $(b(t,x),a(t,x))$.

In particular, let $P_k$ be the law of $X_t=\sigma_kW_t$ on $C([0,T];\R)$, with $\sigma_1\ne\sigma_2$ and positive weights.  Then $P^A=\pi_1P_1+\pi_2P_2$ is not the law of an It\^o diffusion
\[
 \dd X_t=b(t,X_t)\dd t+\sqrt{a(t,X_t)}\dd W_t
\]
with Borel coefficients depending only on $(t,X_t)$.
\end{proposition}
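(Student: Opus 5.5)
The first assertion is constructive. On $\{1,\ldots,K\}\times\Omega$ put the product $\sigma$-field and the measure $\tilde P(\{k\}\times B):=\pi_kP_k(B)$. This is a probability measure because $\sum_k\pi_k=1$. Its $\Omega$-marginal is $\sum_k\pi_kP_k=P^A$, and the conditional law given the first coordinate $k$ is $P_k$. No absolute continuity between the $P_k$ is used, so the realization survives the singularity of \Cref{prop:singularity}.

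For the counterexample, I argue by contradiction through quadratic variation. Suppose $P^A=P^{b,a}$ for Borel coefficients with $\dd X_t=b(t,X_t)\dd t+\sqrt{a(t,X_t)}\dd W_t$. I need only the weak statement that, under $P^{b,a}$, $[X]_t=\int_0^ta(s,X_s)\dd s$ almost surely. This is the identity \eqref{eq:qv_identity}, and it holds for any continuous semimartingale of this form, with the dyadic quadratic variation existing almost surely.

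On the other hand, under $P_k$ we have $[X]_t=\sigma_k^2t$. Hence $P^A$-almost surely $[X]_t=\sigma_{\kappa}^2t$, where $\kappa$ is the (path-measurable) label, recoverable from $[X]_T$ since $\sigma_1^2\ne\sigma_2^2$. Comparing the two expressions gives $a(t,X_t)=\sigma_\kappa^2$ for a.e. $t$, $P^A$-a.s.

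The key step is to show this is impossible for a function of $(t,X_t)$ alone. Fix $t\in(0,T]$. Under $P_1$ and under $P_2$, $X_t$ is a centered Gaussian with positive variance, so the two laws of $X_t$ are mutually equivalent, both equivalent to Lebesgue measure. Set $D:=\{(t,x):a(t,x)\ne\sigma_1^2\}$. By Fubini, $\int_0^T\mathbf 1_D(t,X_t)\dd t=0$ $P_1$-a.s. This gives $\mathrm{Leb}\otimes\mathrm{Leb}(D\cap((0,T]\times\R))=0$, because the density of $X_t$ under $P_1$ is positive.

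Then, by the same Fubini argument applied under $P_2$, we get $a(t,X_t)=\sigma_1^2$ for a.e. $t$, $P_2$-a.s. That contradicts $a(t,X_t)=\sigma_2^2$ a.e. $P_2$-a.s. Equivalently, $[X]_T=\sigma_1^2T$ with positive $P^A$-probability $\pi_2$ on the event where it should equal $\sigma_2^2T$.

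The main obstacle is this null-set bookkeeping: the identity holds only for a.e. $t$ and almost surely, so it is not a pointwise statement in $x$. Joint measurability of $(t,\omega)\mapsto a(t,X_t(\omega))$ from the Borel assumption, together with the positivity of the Gaussian densities, handles it. I would also note that the drift plays no role: the obstruction is purely that the covariance rate would have to remember the latent label. This can also be phrased via $[X]_T/T$ being a path functional that is not determined by $(t,X_t)$.
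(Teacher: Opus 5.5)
Your proposal is correct and follows essentially the same route as the paper. Quadratic variation forces $a(t,X_t)=\sigma_k^2$ $\dd t\otimes P_k$-a.e., and the strictly positive Gaussian densities of $X_t$ under both experts turn this into incompatible Lebesgue-a.e. statements in $(t,x)$; your Fubini argument with the set $D$ simply writes out the paper's final step in more detail.
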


\begin{proof}
The latent-index realization is immediate.  For the Brownian example, suppose such a Markov representation existed.  Pathwise quadratic variation would imply
\[
 a(t,X_t)=\sigma_k^2
 \quad \dd t\otimes P_k\text{-a.e.},
 \qquad k=1,2,
\]
because $P_k\ll P^A$.  For Lebesgue-a.e. $t>0$, the law of $X_t$ under each $P_k$ has a strictly positive Gaussian density on all of $\R$.  Hence $a(t,x)=\sigma_1^2$ for Lebesgue-a.e. $x$ and also $a(t,x)=\sigma_2^2$ for Lebesgue-a.e. $x$, a contradiction.
\end{proof}

Thus arithmetic aggregation survives singularity, but a Markov diffusion output requires a further projection.

\section{An information--transport divergence for drift and covariance}\label{sec:divergence}

\subsection{Gaussian covariance transport and drift normalization}

For $A,B\in\Sp$, define the squared Bures--Wasserstein distance by
\begin{equation}\label{eq:bures}
 \BW^2(A,B)
 :=\tr(A)+\tr(B)-2\tr\!\left[(B^{1/2}AB^{1/2})^{1/2}\right].
\end{equation}
It is the squared $2$-Wasserstein distance between centered Gaussian laws $N(0,A)$ and $N(0,B)$.  Equivalently,
\begin{equation}\label{eq:procrustes}
 \BW^2(A,B)=\min_{O\in\mathbb O(d)}\norm{A^{1/2}-B^{1/2}O}_F^2.
\end{equation}
The distance is continuous and jointly convex; it is independent of the orthogonal representation of Brownian shocks \cite{AguehCarlier2011,Gelbrich1990}.

Fix $M\succ0$, common to all experts and candidates, and put $|z|_M^2=z^\top Mz$. The covariance loss is
\begin{equation}\label{eq:metric_bures}
 d_M^2(A,B):=
 \inf_{\substack{\xi\sim N(0,A)\\\zeta\sim N(0,B)}}\E|\xi-\zeta|_M^2
 =\BW^2(M^{1/2}AM^{1/2},M^{1/2}BM^{1/2}).
\end{equation}
Applying $M^{1/2}$ to each coupled shock proves the equality. This generalized Gaussian geometry is studied in \cite{HanEtAl2023}, with parameter $M^{-1}$ in their notation. Here the metric is fixed before aggregation. The next observation records the additional drift-normalization assumption.

\begin{proposition}[Local cost under specified loss principles]
\label{prop:local_cost_principles}
Fix $\lambda>0$ and $M\succ0$. Suppose a finite local cost
$c((b,A),(\bar b,B))$ satisfies:
\begin{enumerate}[label=(\roman*),leftmargin=2em]
\item for $A=B$, it equals $\tfrac12(b-\bar b)^\top B^{-1}(b-\bar b)$;
\item for $b=\bar b$, it equals $\tfrac\lambda2d_M^2(A,B)$;
\item its increment when $b$ replaces $\bar b$ depends only on $b-\bar b$ and $B$, independently of $A$.
\end{enumerate}
Then, and conversely,
\begin{equation}\label{eq:local_cost_metric}
 c((b,A),(\bar b,B))=
 \frac12(b-\bar b)^\top B^{-1}(b-\bar b)+\frac\lambda2d_M^2(A,B).
\end{equation}
\end{proposition}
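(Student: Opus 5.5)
The plan is a short decomposition argument. Condition (iii) will split the cost at the point $(\bar b,A)$ into a drift increment plus a pure covariance term. Conditions (i) and (ii) will then identify the two pieces.

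\emph{Step 1 (increment function).} Fix $A,B\in\Spp$ and $\bar b$. By (iii), the increment
\[
\Delta(b-\bar b,B):=c((b,A),(\bar b,B))-c((\bar b,A),(\bar b,B))
\]
is a function of $b-\bar b$ and $B$ alone. The essential point is that it does not depend on $A$.

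\emph{Step 2 (identify $\Delta$ by choosing $A=B$).} Because $\Delta$ is independent of $A$, we may evaluate it at $A=B$. Condition (i) with $b=\bar b$ gives $c((\bar b,B),(\bar b,B))=0$. Condition (i) at general $b$ then gives
\[
\Delta(b-\bar b,B)=\tfrac12(b-\bar b)^\top B^{-1}(b-\bar b).
\]
This is consistent with (ii), since at $b=\bar b$ and $A=B$ we have $d_M^2(B,B)=0$.

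\emph{Step 3 (identify the base term).} Condition (ii) gives $c((\bar b,A),(\bar b,B))=\tfrac\lambda2 d_M^2(A,B)$. Adding Steps 2 and 3 yields \eqref{eq:local_cost_metric}.

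\emph{Step 4 (converse).} We check that the right-hand side of \eqref{eq:local_cost_metric} satisfies all three conditions.
\begin{itemize}
\item It is finite, because $B\succ0$ and $d_M$ is finite.
\item For $A=B$ the covariance term vanishes, since $d_M^2(B,B)=\BW^2(M^{1/2}BM^{1/2},M^{1/2}BM^{1/2})=0$ by \eqref{eq:bures}: $\tr[(C^2)^{1/2}]=\tr C$ for $C\succeq0$. This is (i).
\item For $b=\bar b$ the drift term vanishes, which is (ii).
\item The increment when $b$ replaces $\bar b$ equals the quadratic drift term, which depends only on $b-\bar b$ and $B$. This is (iii).
\end{itemize}

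The main obstacle is interpretive rather than technical. Condition (iii) must be read as the difference between $c((b,A),(\bar b,B))$ and $c((\bar b,A),(\bar b,B))$, with $A$ and $B$ held fixed. Its independence of $A$ is exactly what lets Step 2 transport the $A=B$ evaluation to arbitrary $A$. Once this reading is fixed, everything else is substitution, plus the verification that $d_M(B,B)=0$, which uses only the principal square root in \eqref{eq:bures}.
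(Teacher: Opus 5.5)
Your proposal is correct and follows essentially the same route as the paper: you use (iii) to evaluate the drift increment at $A=B$ and identify it via (i), add the equal-drift cost from (ii), and verify the converse by substitution. Your explicit checks that $c((\bar b,B),(\bar b,B))=0$ and that $d_M^2(B,B)=0$ simply spell out details the paper leaves implicit.
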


\begin{proof}
By (iii), evaluate the drift increment at $A=B$ and use (i). Adding the equal-drift cost in (ii) proves the formula; substitution proves the converse.
\end{proof}

Condition (iii) is a modelling assumption. Replacing $B^{-1}$ by
$(1-\eta)B^{-1}+\eta A^{-1}$, $0<\eta\le1$, preserves (i)--(ii).
We retain the second-argument covariance because it is the exact normalization of the Gaussian KL mean term. Its consequences for the expert-first selector, including the alternative $A^{-1}$ convention, are computed in \Cref{rem:normalization_dependence}.

\subsection{A two-scale Gaussian identity}

Let $W_{2,M}^2(\mu,\nu)=\inf\E|U-V|_M^2$, where the infimum is over couplings of $\mu$ and $\nu$. For $b\in\R^d$, $A\in\Spp$, and $h>0$, write
\[
 G_h^{b,A}:=N(hb,hA).
\]
Define the Gaussian covariance contribution
\begin{equation}\label{eq:kappa_cov}
 \mathcal K(A\|B)
 :=\frac12\left\{\tr(B^{-1}A)-d-\log\det(B^{-1}A)\right\}.
\end{equation}

\begin{theorem}[Two-scale local Gaussian decomposition]\label{thm:local_gaussian}
For every $b,\bar b\in\R^d$, $A,B\in\Spp$, $h>0$, and $\lambda>0$, with fixed $M\succ0$,
\begin{align}
 &\frac{1}{h}\left\{
 \KL\!\left(G_h^{b,A}\middle\|G_h^{\bar b,B}\right)-\mathcal K(A\|B)
 \right\}
 +\frac{\lambda}{2h}W_{2,M}^2\!\left(G_h^{0,A},G_h^{0,B}\right)
 \notag\\
 &\hspace{3cm}
 =\frac12(b-\bar b)^\top B^{-1}(b-\bar b)
 +\frac{\lambda}{2}d_M^2(A,B).
 \label{eq:local_gaussian_identity}
\end{align}
In particular, the right-hand side is independent of $h$.
\end{theorem}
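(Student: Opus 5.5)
The identity will follow from two exact Gaussian computations, one for each term on the left-hand side of \eqref{eq:local_gaussian_identity}. Neither involves any limit in $h$.

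First, I will use the closed form for relative entropy between nondegenerate Gaussians,
\[
 \KL\bigl(N(\mu_1,\Sigma_1)\,\|\,N(\mu_2,\Sigma_2)\bigr)
 =\tfrac12\bigl\{\tr(\Sigma_2^{-1}\Sigma_1)-d-\log\det(\Sigma_2^{-1}\Sigma_1)+(\mu_1-\mu_2)^\top\Sigma_2^{-1}(\mu_1-\mu_2)\bigr\},
\]
with $\mu_1=hb$, $\Sigma_1=hA$, $\mu_2=h\bar b$ and $\Sigma_2=hB$. The factors of $h$ cancel in $\Sigma_2^{-1}\Sigma_1=B^{-1}A$, so the trace and log-determinant terms together give exactly $\mathcal K(A\|B)$ from \eqref{eq:kappa_cov}, whatever the value of $h$. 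The mean term becomes
\[
 \tfrac12 h^2(b-\bar b)^\top (hB)^{-1}(b-\bar b)=\tfrac h2(b-\bar b)^\top B^{-1}(b-\bar b).
\]
Subtracting $\mathcal K(A\|B)$ and dividing by $h$ therefore leaves $\tfrac12(b-\bar b)^\top B^{-1}(b-\bar b)$.

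Second, I will compute $W_{2,M}^2(G_h^{0,A},G_h^{0,B})$ by scaling. If $U\sim N(0,hA)$ and $V\sim N(0,hB)$ are coupled, then $(U/\sqrt h,V/\sqrt h)$ is a coupling of $N(0,A)$ and $N(0,B)$. Conversely, multiplying any such coupling by $\sqrt h$ gives a coupling of the scaled laws. Since $\E|U-V|_M^2=h\,\E|U/\sqrt h-V/\sqrt h|_M^2$, the two infima are related by
\[
 W_{2,M}^2(G_h^{0,A},G_h^{0,B})=h\,W_{2,M}^2(N(0,A),N(0,B)).
\]
I also need the fact that the infimum in \eqref{eq:metric_bures} over Gaussian couplings agrees with the infimum over all couplings of the two Gaussian marginals. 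That is what allows $W_{2,M}^2(N(0,A),N(0,B))=d_M^2(A,B)$. The infimum over all couplings of $N(0,A),N(0,B)$ is unchanged by restricting to jointly Gaussian couplings, because the cost $\E|U-V|_M^2$ depends only on the joint covariance of $(U,V)$. Every coupling can therefore be replaced by the jointly Gaussian one with the same joint covariance, at the same cost. Combining this with the scaling step, the second term equals $\frac{\lambda}{2h}\cdot h\,d_M^2(A,B)=\frac\lambda2 d_M^2(A,B)$.

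Adding the two pieces gives \eqref{eq:local_gaussian_identity}, and the right-hand side visibly does not involve $h$. The only step requiring any care is the coupling point: the infimum in \eqref{eq:metric_bures} must be read as ranging over all couplings of the marginals, and restricting to Gaussian couplings costs nothing. The rest is direct substitution.
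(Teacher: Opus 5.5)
Your proof is correct and follows the paper's argument. You substitute into the Gaussian relative-entropy formula to obtain $\mathcal K(A\|B)+\frac h2(b-\bar b)^\top B^{-1}(b-\bar b)$, then use the degree-one homogeneity of the transport cost under $(U,V)\mapsto(\sqrt h\,U,\sqrt h\,V)$; the paper phrases this second step via $M^{1/2}$-whitening and the homogeneity of $\BW^2$. Your extra step about Gaussian couplings is correct but unnecessary, since \eqref{eq:metric_bures} already defines $d_M^2(A,B)$ as the infimum over all couplings of $N(0,A)$ and $N(0,B)$, that is, as $W_{2,M}^2(N(0,A),N(0,B))$.
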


\begin{proof}
The Kullback--Leibler divergence between Gaussian distributions is
\[
 \KL\!\left(N(m,C)\middle\|N(\bar m,\bar C)\right)
 =\frac12\left\{\tr(\bar C^{-1}C)+(m-\bar m)^\top\bar C^{-1}(m-\bar m)-d
 +\log\frac{\det\bar C}{\det C}\right\}.
\]
Substituting $(m,C)=(hb,hA)$ and $(\bar m,\bar C)=(h\bar b,hB)$ gives
\[
 \KL(G_h^{b,A}\|G_h^{\bar b,B})
 =\mathcal K(A\|B)+\frac h2(b-\bar b)^\top B^{-1}(b-\bar b).
\]
Applying $M^{1/2}$ to both shocks and using homogeneity gives
$W_{2,M}^2(N(0,hA),N(0,hB))=hd_M^2(A,B)$.
\end{proof}

The theorem displays the two time scales.  If $A\neq B$, each short-time conditional Kullback--Leibler contributes the order-one quantity $\mathcal K(A\|B)$, so the path-space Kullback--Leibler of a grid with $T/h$ increments diverges at order $h^{-1}$.  After this singular covariance entropy is separated, the remaining information term is order $h$ and measures drift.  The centered quadratic transport discrepancy is also order $h$. Its addition retains covariance disagreement after the order-one entropy term has been removed. The identity verifies the specified local cost; it does not select the transport metric or the penalty parameter.

\subsection{The integrated criterion and its two orientations}

\begin{definition}[Information--transport divergence]\label{def:ITD}
Let $m=(b,a)$ and $\bar m=(\bar b,\bar a)$ satisfy Assumption \ref{ass:regular}.  For $\lambda>0$ and fixed $M\succ0$, define
\begin{equation}\label{eq:ITD}
 \D_{\lambda,M}(P^m\|P^{\bar m})
 :=\E^{P^m}\!\left[\int_0^T
 \ellit_{\lambda,M}\big(m(t,X_t),\bar m(t,X_t)\big)\dd t\right],
\end{equation}
where
\begin{equation}\label{eq:ell_def}
 \ellit_{\lambda,M}\big((b,A),(\bar b,B)\big)
 :=\frac12(b-\bar b)^\top B^{-1}(b-\bar b)
 +\frac{\lambda}{2}d_M^2(A,B).
\end{equation}
\end{definition}

The quantity $\D_{\lambda,M}$ is directed and is not a metric.  Its volatility component is symmetric pointwise, but the expectation and drift normalization are taken under the first model and relative to the second model's covariance.

For expert models $m_1,\ldots,m_K$, the same information--transport divergence induces two aggregation objectives:
\begin{align}
 J_{\lambda,M}^{\rightarrow}(m)&:=\sum_{k=1}^K\pi_k\D_{\lambda,M}(P^m\|P^{m_k}),
 \label{eq:forward_IT_objective}\\*
 J_{\lambda,M}^{\leftarrow}(m)&:=\sum_{k=1}^K\pi_k\D_{\lambda,M}(P^{m_k}\|P^m).
 \label{eq:reverse_IT_objective}
\end{align}
The candidate-first objective extends geometric Kullback--Leibler aggregation within the diffusion class.  The expert-first objective projects arithmetic likelihood aggregation into the diffusion class.  Their optimization theories differ because the expectation in \eqref{eq:forward_IT_objective} is taken under the controlled candidate law, whereas the expectations in \eqref{eq:reverse_IT_objective} are taken under the fixed expert laws.

\begin{proposition}[Basic properties]\label{prop:basic}
Under Assumption \ref{ass:regular}:
\begin{enumerate}[label=(\roman*),leftmargin=2em]
\item $\D_{\lambda,M}(P^m\|P^{\bar m})\ge0$.  It vanishes if and only if
$b(t,X_t)=\bar b(t,X_t)$ and $a(t,X_t)=\bar a(t,X_t)$ for $\dd t\otimes P^m$-a.e. $(t,\omega)$.
\item For fixed $\bar m$, the local cost is convex separately in the candidate drift $b$ and covariance $a$; on $\R^d\times\Spp$ it is strictly convex in each variable.
\item On every coefficient class satisfying common uniform ellipticity and growth bounds, $\D_{\lambda,M}$ is finite and locally bounded by a quadratic function of the coefficient differences.
\end{enumerate}
\end{proposition}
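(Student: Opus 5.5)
The plan is to argue pointwise on the local cost $\ellit_{\lambda,M}$ and then integrate against $\dd t\otimes P^m$. Both summands in \eqref{eq:ell_def} are nonnegative. The first is a quadratic form in the positive definite matrix $B^{-1}$. The second is $\tfrac\lambda2$ times an infimum of expected squared $M$-norms in \eqref{eq:metric_bures}. Hence $\D_{\lambda,M}\ge0$.

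For the vanishing statement in (i), we use that an integral of a nonnegative function vanishes iff the integrand vanishes $\dd t\otimes P^m$-a.e. It then suffices to show that $\ellit_{\lambda,M}((b,A),(\bar b,B))=0$ iff $b=\bar b$ and $A=B$. The drift term vanishes iff $b=\bar b$, since $B^{-1}\succ0$. For the covariance term, $d_M^2(A,B)=\BW^2(M^{1/2}AM^{1/2},M^{1/2}BM^{1/2})$ is a squared $2$-Wasserstein distance between centered Gaussians. It vanishes iff the Gaussian laws coincide, i.e.\ iff $M^{1/2}AM^{1/2}=M^{1/2}BM^{1/2}$. Invertibility of $M^{1/2}$ then gives $A=B$. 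Measurability of $(t,\omega)\mapsto\ellit(m(t,X_t),\bar m(t,X_t))$ follows from continuity of $\ellit$ in its arguments.

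For (ii), convexity in $b$ is clear because the cost is a positive definite quadratic in $b$ plus a constant, so it is strictly convex. For the covariance variable, the term $\tr(M^{1/2}AM^{1/2})$ is linear in $A$. The cross term $-2\tr[(C^{1/2}\tilde AC^{1/2})^{1/2}]$, with $C=M^{1/2}BM^{1/2}$ and $\tilde A=M^{1/2}AM^{1/2}$, is convex in $A$ by operator concavity of the square root: $X\mapsto\tr(X^{1/2})$ is concave and $A\mapsto C^{1/2}\tilde AC^{1/2}$ is linear. Strict convexity is the main obstacle, and I would handle it through the identity $\tr[(C^{1/2}\tilde AC^{1/2})^{1/2}]=\tr[(C\#\tilde A^{-1})\ldots]$, or more concretely through the representation $\tr[(C^{1/2}XC^{1/2})^{1/2}]=\tr(C^{1/2}(C^{1/2}XC^{1/2})^{1/2}C^{-1/2})$ with $C\succ0$ fixed. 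The linear map $X\mapsto C^{1/2}XC^{1/2}$ is injective. The function $Y\mapsto\tr Y^{1/2}$ is strictly concave on $\Spp$, because $t\mapsto t^{1/2}$ is strictly concave and a trace function of a strictly concave scalar function is strictly concave on Hermitian matrices (Klein-type argument via the spectral theorem). Composing a strictly concave function with an injective linear map preserves strictness. Hence $A\mapsto d_M^2(A,B)$ is strictly convex on $\Spp$ whenever $B\succ0$.

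For (iii), suppose the coefficients satisfy $\underline aI\le a,\bar a\le\overline aI$. The drift term is then bounded by $(2\underline a)^{-1}|b-\bar b|^2$. For the covariance term, \eqref{eq:procrustes} with $O=I$ gives
\[
d_M^2(A,B)\le\norm{(M^{1/2}AM^{1/2})^{1/2}-(M^{1/2}BM^{1/2})^{1/2}}_F^2.
\]
On the spectrum bounded below by $\underline a\lambda_{\min}(M)$, the matrix square root is Lipschitz, with constant $(2\sqrt{\underline a\lambda_{\min}(M)})^{-1}$. This yields $d_M^2(A,B)\le c\,\norm{M}^2\norm{A-B}_F^2$. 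So the local cost is at most a constant times $|b-\bar b|^2+\norm{a-\bar a}_F^2$. Linear growth of the drifts, boundedness of the covariances, and the uniform second-moment bound $\sup_t\E^{P^m}|X_t|^2<\infty$ (from Assumption \ref{ass:regular}) then make the integral finite and locally quadratically bounded.
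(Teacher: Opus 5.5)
Your proposal is correct and follows the paper's argument. For (i) you use pointwise nonnegativity and a.e.\ vanishing of both integrands; for (ii), strict concavity of $Y\mapsto\tr Y^{1/2}$ on $\Spp$ (the paper cites Bhatia--Jain--Lim, Theorem~7, where you sketch the Peierls/Klein argument) composed with the injective congruence; and for (iii), ellipticity, linear growth and moment bounds, which you make quantitative through the Procrustes bound with $O=I$ and the Lipschitz estimate for the matrix square root. The half-written identity involving $C\#\tilde A^{-1}$ is unnecessary and can be deleted, since the composition argument alone gives strictness.
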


\begin{proof}
Nonnegativity follows from positive definiteness of $B^{-1}$ and the quadratic transport representation. Strict convexity of $d_M^2(\cdot,B)$ follows from that of $\BW^2(\cdot,M^{1/2}BM^{1/2})$ by an invertible congruence.  Vanishing of the integral implies vanishing of both nonnegative integrands $\dd t\otimes P^m$-a.e.  For fixed $(\bar b,B)$, the drift term is a strictly convex quadratic function of $b$, while $A\mapsto d_M^2(A,B)$ is convex, and strictly convex on $\Spp$ when $B\succ0$.  This proves (ii) without asserting joint convexity of the full model-to-model functional.  Uniform ellipticity and the linear-growth bounds give (iii).
\end{proof}

\begin{proposition}[Equivariance under linear changes of state]
\label{prop:metric_equivariance}
Let $L$ be invertible and transform the state by $x'=Lx$.  Set
\[
 b'=Lb,\quad \bar b'=L\bar b,\quad
 A'=LAL^\top,\quad B'=LBL^\top,\quad
 M'=L^{-\top}ML^{-1}.
\]
Then
\begin{equation}\label{eq:metric_equivariance}
 d_{M'}^2(A',B')=d_M^2(A,B),\qquad
 \ellit_{\lambda,M'}((b',A'),(\bar b',B'))
 =\ellit_{\lambda,M}((b,A),(\bar b,B)).
\end{equation}
Consequently the integrated divergences and both aggregation problems
are unchanged after transforming the initial state, coefficient fields,
and admissible classes by the same map.
\end{proposition}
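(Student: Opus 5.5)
The plan is to verify the two pointwise identities in \eqref{eq:metric_equivariance} by direct substitution, and then lift them to the integrated divergences by a change of variables on path space.

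\textbf{Covariance term.} I would use the coupling representation \eqref{eq:metric_bures} rather than the square-root formula, which avoids handling matrix square roots of congruences. If $\xi\sim N(0,A)$ and $\zeta\sim N(0,B)$ are coupled, then $(L\xi,L\zeta)$ is a coupling of $N(0,A')$ and $N(0,B')$. Moreover,
\[
|L\xi-L\zeta|_{M'}^2=(\xi-\zeta)^\top L^\top L^{-\top}ML^{-1}L(\xi-\zeta)=|\xi-\zeta|_M^2 .
\]
Since $L$ is invertible, $(\xi,\zeta)\mapsto(L\xi,L\zeta)$ is a bijection between the two sets of Gaussian couplings. Taking infima on both sides gives $d_{M'}^2(A',B')=d_M^2(A,B)$.

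\textbf{Drift term.} The computation is immediate:
\[
(b'-\bar b')^\top (B')^{-1}(b'-\bar b')=(b-\bar b)^\top L^\top L^{-\top}B^{-1}L^{-1}L(b-\bar b)=(b-\bar b)^\top B^{-1}(b-\bar b).
\]
Adding this to the covariance identity yields equality of $\ellit$.

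\textbf{Lifting to path space.} If $X$ solves \eqref{eq:sde_general} with coefficients $(b,a)$ and $X_0=x_0$, then by Itô's formula $X'=LX$ solves the SDE with coefficients
\[
b'(t,x')=Lb(t,L^{-1}x'),\qquad a'(t,x')=La(t,L^{-1}x')L^\top,\qquad X'_0=Lx_0 .
\]
The diffusion matrix of $X'$ is $La^{1/2}$, and its square is $a'$; only $a'$ enters, consistent with the paper's factorization-independence. The transformed coefficients again satisfy Assumption \ref{ass:regular}, with constants modified by $\|L\|$ and $\|L^{-1}\|$. The law of the transformed model is the pushforward of $P^m$ under $\omega\mapsto L\omega$.

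Evaluating the integrand of \eqref{eq:ITD} at $X'_t=LX_t$ then gives, pointwise in $(t,\omega)$,
\[
\ellit_{\lambda,M'}\bigl(m'(t,LX_t),\bar m'(t,LX_t)\bigr)=\ellit_{\lambda,M}\bigl(m(t,X_t),\bar m(t,X_t)\bigr).
\]
It follows that $\D_{\lambda,M'}(P^{m'}\|P^{\bar m'})=\D_{\lambda,M}(P^m\|P^{\bar m})$.

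\textbf{Aggregation problems.} Both $J^{\rightarrow}$ and $J^{\leftarrow}$ are $\pi$-weighted sums of such divergences, so they are preserved termwise. Since $m\mapsto m'$ is a bijection of admissible classes (the transformed class being defined as its image), minimizers correspond.

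\textbf{Main obstacle.} The only step needing care is the coupling bijection in the covariance term, in particular checking that it matches Gaussian couplings exactly; invertibility of $L$ settles this. The rest is bookkeeping. One point to state explicitly is that the uniform constants change under $L$, which is harmless because the statement transforms the admissible classes along with the coefficients.
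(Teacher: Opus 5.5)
Your proposal is correct and follows the paper's own proof: the coupling bijection $(\xi,\zeta)\mapsto(L\xi,L\zeta)$ together with $|L(\xi-\zeta)|_{M'}^2=|\xi-\zeta|_M^2$ gives the covariance term, and the identity $(LBL^\top)^{-1}=L^{-\top}B^{-1}L^{-1}$ gives the drift term. Pushforward of the path laws under $\omega\mapsto L\omega$, combined with the bijection of admissible classes, then gives the integrated statement; your added remarks on the factorization $La^{1/2}$ and on the rescaled regularity constants are harmless extra bookkeeping.
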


\begin{proof}
The map $(\xi,\zeta)\mapsto(L\xi,L\zeta)$ is a bijection between
the coupling classes in \eqref{eq:metric_bures}, and
$|L(\xi-\zeta)|_{M'}^2=|\xi-\zeta|_M^2$.  The drift equality follows
from $(LBL^\top)^{-1}=L^{-\top}B^{-1}L^{-1}$.  Integration under the
pushforward laws and the bijection of admissible models prove the last
claim.
\end{proof}

Changing measurement units therefore preserves the aggregate when the
loss metric is transformed with them.  Resetting $M$ to the identity
after a general linear change specifies a different covariance loss.
This result does not assert invariance under nonlinear state changes.
For a $C^2$ diffeomorphism $f$, the transformed drift difference is
\[
 Df\,(b-\bar b)+\tfrac12 D^2f:(A-B),
\]
where the contraction is componentwise.  Thus price and log-price
coordinates can lead to different criteria when the covariances differ.

\subsection{Metric normalization and a covariance-disagreement budget}\label{subsec:covariance_budget}

The scale of the state metric must be fixed before interpreting $\lambda$.
For $c>0$,
\[
 d_{cM}^2(A,B)=c\,d_M^2(A,B),
 \qquad
 \lambda d_{cM}^2(A,B)=(c\lambda)d_M^2(A,B).
\]
Thus the scale of $M$ and the value of $\lambda$ cannot be identified
separately from the objective.  One convention is to fix a reference
covariance of state changes $\Sigma_{\mathrm{ref}}\succ0$ and take
$M=\Sigma_{\mathrm{ref}}^{-1}$.  If the reference input is a covariance
\emph{rate} $A_{\mathrm{ref}}$, a reference horizon
$\tau_{\mathrm{ref}}>0$ specifies
$\Sigma_{\mathrm{ref}}=\tau_{\mathrm{ref}}A_{\mathrm{ref}}$.
State changes are then measured in reference-standard-deviation units;
both integrated components of the criterion, and hence $\lambda$, are
dimensionless.  This normalization is a modeling convention, fixed in
advance and common to all experts and candidates.  The Euler mesh is an
approximation parameter and does not determine $\lambda$.

For $m=(b,a)$, separate the two aggregation objectives as
\begin{equation}\label{eq:budget_decomposition}
 J_{\lambda,M}^{\diamond}(m)
 =\mathcal I^{\diamond}(m)+\lambda\mathcal T_M^{\diamond}(m),
 \qquad \diamond\in\{\rightarrow,\leftarrow\},
\end{equation}
where
\begin{align*}
 \mathcal I^{\rightarrow}(m)
 &=\frac12\sum_k\pi_k\E^{P^m}\int_0^T
 (b-b_k)^\top a_k^{-1}(b-b_k)(t,X_t)\dd t,\\
 \mathcal T_M^{\rightarrow}(m)
 &=\frac12\sum_k\pi_k\E^{P^m}\int_0^T
 d_M^2\big(a(t,X_t),a_k(t,X_t)\big)\dd t,\\
 \mathcal I^{\leftarrow}(m)
 &=\frac12\sum_k\pi_k\E^{P^{m_k}}\int_0^T
 (b_k-b)^\top a^{-1}(b_k-b)(t,X_t)\dd t,\\
 \mathcal T_M^{\leftarrow}(m)
 &=\frac12\sum_k\pi_k\E^{P^{m_k}}\int_0^T
 d_M^2\big(a_k(t,X_t),a(t,X_t)\big)\dd t.
\end{align*}
In either orientation, a covariance-disagreement budget gives the problem
\begin{equation}\label{eq:covariance_budget}
 \inf_{m\in\mathfrak M}\mathcal I^{\diamond}(m)
 \quad\text{subject to}\quad
 \mathcal T_M^{\diamond}(m)\le\delta,
\end{equation}
on a prescribed admissible class $\mathfrak M$.  The following elementary result relates the budget to the penalized objective.

\begin{proposition}[Covariance budgets and the information--transport trade-off]
\label{prop:budget_tradeoff}
Fix either orientation and suppress its superscript.  Let $M$ and
$\mathfrak M$ be independent of $\lambda$.  Suppose that some
$m\in\mathfrak M$ has both $\mathcal I(m)<\infty$ and
$\mathcal T_M(m)<\infty$, and define
\[
 F(\lambda):=\inf_{m\in\mathfrak M}
 \{\mathcal I(m)+\lambda\mathcal T_M(m)\},\qquad \lambda>0.
\]
Then $F$ is finite, nondecreasing, and concave on $(0,\infty)$.
If $m_\lambda$ is a global minimizer, it also globally minimizes
\eqref{eq:covariance_budget} for
$\delta=\mathcal T_M(m_\lambda)$.
If global minimizers $m_{\lambda_1}$ and $m_{\lambda_2}$ exist for
$0<\lambda_1<\lambda_2$, write
$\mathcal I_i=\mathcal I(m_{\lambda_i})$ and
$\mathcal T_i=\mathcal T_M(m_{\lambda_i})$.  Then
\begin{equation}\label{eq:budget_monotonicity}
 \mathcal T_2\le\mathcal T_1,
 \qquad
 \lambda_1(\mathcal T_1-\mathcal T_2)
 \le\mathcal I_2-\mathcal I_1
 \le\lambda_2(\mathcal T_1-\mathcal T_2).
\end{equation}
At any $\lambda>0$ where $F$ is differentiable and a global minimizer
exists,
\begin{equation}\label{eq:budget_envelope}
 F'(\lambda)=\mathcal T_M(m_\lambda).
\end{equation}
\end{proposition}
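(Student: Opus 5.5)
The plan is the standard Lagrangian/envelope argument: $F$ is an infimum of affine functions of $\lambda$, and the rest comes from comparing minimizers. Throughout, $\mathcal I\ge0$ and $\mathcal T_M\ge0$ by \Cref{prop:basic}(i), since both are integrals of nonnegative local terms.

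\textbf{Finiteness, monotonicity, concavity.} Fix the model $m_0\in\mathfrak M$ with $\mathcal I(m_0),\mathcal T_M(m_0)<\infty$. Then
\[
0\le F(\lambda)\le\mathcal I(m_0)+\lambda\mathcal T_M(m_0)<\infty,
\]
so $F$ is finite. For each $m$, the map $\lambda\mapsto\mathcal I(m)+\lambda\mathcal T_M(m)$ is affine and nondecreasing, with value $+\infty$ allowed. If $\mathcal T_M(m)=\infty$, this map is identically $+\infty$ and does not affect the infimum. Hence $F$ is a pointwise infimum of nondecreasing affine functions, which makes it nondecreasing and concave. I would spell out concavity directly: for any $m$,
\[
\mathcal I(m)+(\theta\lambda+(1-\theta)\mu)\mathcal T_M(m)=\theta(\cdots)+(1-\theta)(\cdots)\ge\theta F(\lambda)+(1-\theta)F(\mu),
\]
and then take the infimum over $m$.

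\textbf{Budget optimality.} Let $m_\lambda$ be a global minimizer and set $\delta=\mathcal T_M(m_\lambda)$, which is finite because $F(\lambda)<\infty$. Take any $m\in\mathfrak M$ with $\mathcal T_M(m)\le\delta$. Then
\[
\mathcal I(m)\ge F(\lambda)-\lambda\mathcal T_M(m)\ge F(\lambda)-\lambda\delta=\mathcal I(m_\lambda).
\]
So $m_\lambda$ solves \eqref{eq:covariance_budget}.

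\textbf{Monotone trade-off.} All $\mathcal I_i,\mathcal T_i$ are finite. Optimality of $m_{\lambda_1}$ against $m_{\lambda_2}$ gives
\[
\mathcal I_1+\lambda_1\mathcal T_1\le\mathcal I_2+\lambda_1\mathcal T_2,
\]
which is the left inequality of \eqref{eq:budget_monotonicity}. Optimality of $m_{\lambda_2}$ against $m_{\lambda_1}$ gives the right inequality. Combining the two yields $\lambda_1(\mathcal T_1-\mathcal T_2)\le\lambda_2(\mathcal T_1-\mathcal T_2)$. Since $\lambda_2>\lambda_1$, this forces $\mathcal T_1\ge\mathcal T_2$.

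\textbf{Envelope formula.} Define $g(\mu)=\mathcal I(m_\lambda)+\mu\mathcal T_M(m_\lambda)$. Then $g\ge F$ everywhere and $g(\lambda)=F(\lambda)$, so $g-F$ attains its minimum at $\lambda$. For $\mu>\lambda$,
\[
\frac{F(\mu)-F(\lambda)}{\mu-\lambda}\le\mathcal T_M(m_\lambda),
\]
and the reverse inequality holds for $\mu<\lambda$. Letting $\mu\to\lambda$ and using differentiability of $F$ gives \eqref{eq:budget_envelope}.

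The only delicate point is bookkeeping with infinite values. I would handle it by first checking that $\mathcal T_M$ and $\mathcal I$ are finite at every minimizer, which follows from $F<\infty$. The step from the envelope inequalities to $F'(\lambda)$ is immediate once differentiability is assumed.
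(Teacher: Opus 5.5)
Your proof is correct and follows essentially the same route as the paper's. Both get finiteness from the finite competitor and nonnegativity, concavity and monotonicity from $F$ being an infimum of nondecreasing affine maps, the budget and two-sided trade-off claims by comparing penalized objectives at the two parameters, and the envelope formula from the supporting line $F(\mu)\le F(\lambda)+(\mu-\lambda)\mathcal T_M(m_\lambda)$. Your explicit handling of infinite values, which the paper leaves implicit, is a minor addition.
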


\begin{proof}
Nonnegativity of both components and the finite competitor give
$0\le F(\lambda)<\infty$.  The infimum of affine nondecreasing functions
of $\lambda$ is concave and nondecreasing.  Any model with covariance
discrepancy at most $\mathcal T_M(m_\lambda)$ and strictly smaller
information cost would have a strictly smaller penalized objective,
which proves the budget assertion.  Optimality at the two parameters
gives
\[
 \mathcal I_1+\lambda_1\mathcal T_1
 \le\mathcal I_2+\lambda_1\mathcal T_2,
 \qquad
 \mathcal I_2+\lambda_2\mathcal T_2
 \le\mathcal I_1+\lambda_2\mathcal T_1.
\]
Adding yields $\mathcal T_2\le\mathcal T_1$; rearranging gives the other
bounds in \eqref{eq:budget_monotonicity}.  Finally, for every $\mu>0$,
\[
 F(\mu)\le F(\lambda)
 +(\mu-\lambda)\mathcal T_M(m_\lambda).
\]
The difference quotients from either side give
\eqref{eq:budget_envelope} whenever $F$ is differentiable.
\end{proof}

The proposition needs no convexity of the model class. Its monotonicity concerns integrated covariance discrepancy. A converse for every prescribed budget requires additional duality assumptions.

More precisely, let $v(\delta)$ denote the value of
\eqref{eq:covariance_budget}, with value $+\infty$ for an infeasible
budget.  For every $\lambda>0$,
\[
 v(\delta)\ge F(\lambda)-\lambda\delta,
\]
with equality at $\delta_\lambda=\mathcal T_M(m_\lambda)$ whenever a
global penalized minimizer exists.  Thus $-\lambda$ is a supporting slope
of the budget value at $\delta_\lambda$.  If $v$ is finite in a
neighborhood of that budget and differentiable there, then
$v'(\delta_\lambda)=-\lambda$: the multiplier is the marginal reduction
in optimal drift-information cost from relaxing the covariance budget.

\subsection{Exact reductions and the coordinate convention}

For the process results below, work in coordinates $y=M^{1/2}x$ and suppress the change in notation. By \Cref{prop:metric_equivariance}, the loss metric is then the identity. We write $\D_\lambda=\D_{\lambda,I}$, $\ellit_\lambda=\ellit_{\lambda,I}$, and $J^{\diamond}=J_{\lambda,I}^{\diamond}$, with $\diamond\in\{\rightarrow,\leftarrow\}$. Coefficients, initial states, and admissible sets are all expressed in these coordinates. This convention covers every fixed positive definite metric. The matrix selector in \Cref{thm:congruence_selector} is also stated for general $M$, so it can be applied directly in the original coordinates. The worked examples choose $M=I$ in the displayed state variables.

\begin{theorem}[Reduction to path-space Kullback--Leibler divergence]\label{thm:KL_reduction}
Suppose $m=(b,a)$ and $\bar m=(\bar b,a)$ have the same covariance field.  Assume the Girsanov exponential is a true martingale, for example Novikov's condition holds.  Then
\begin{equation}\label{eq:KL_reduction}
 \D_\lambda(P^{b,a}\|P^{\bar b,a})
 =\KL(P^{b,a}\|P^{\bar b,a})
 =\frac12\E^{P^{b,a}}\int_0^T
 (b-\bar b)^\top a^{-1}(b-\bar b)(t,X_t)\dd t.
\end{equation}
\end{theorem}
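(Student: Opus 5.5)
The plan splits into two separate equalities. The first is trivial: with common covariance field, $d_I^2(a(t,x),a(t,x))=\BW^2(a,a)=0$ by \eqref{eq:bures}, since $(a^{1/2}aa^{1/2})^{1/2}=a$. So the local cost \eqref{eq:ell_def} reduces pointwise to $\frac12(b-\bar b)^\top a^{-1}(b-\bar b)$, and \Cref{def:ITD} gives $\D_\lambda(P^{b,a}\|P^{\bar b,a})$ equal to the right-hand integral, for every $\lambda>0$. The substantive part is identifying this integral with $\KL(P^{b,a}\|P^{\bar b,a})$ via Girsanov.

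For that step, put $\theta:=a^{-1/2}(b-\bar b)$, which is well defined by uniform ellipticity in Assumption~\ref{ass:regular}. Under $P:=P^{b,a}$ the canonical process satisfies \eqref{eq:sde_general} with a $P$-Brownian motion $W$. Define
\[
 \mathcal E_T:=\exp\Big(-\int_0^T\theta^\top\dd W_t-\tfrac12\int_0^T|\theta|^2\dd t\Big).
\]
By the assumed martingale property (e.g.\ Novikov), $\dd\bar Q:=\mathcal E_T\dd P$ is a probability measure. Under $\bar Q$, $\bar W_t=W_t+\int_0^t\theta\,\dd s$ is Brownian, and $X$ solves the SDE with coefficients $(\bar b,a)$. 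Strong (hence weak) uniqueness under the Lipschitz assumption gives $\bar Q\circ X^{-1}=P^{\bar b,a}$. It follows that $P\sim P^{\bar b,a}$ on $\F_T$, with
\[
 \log\frac{\dd P^{b,a}}{\dd P^{\bar b,a}}(X)=\int_0^T\theta^\top\dd W_t+\tfrac12\int_0^T|\theta|^2\dd t .
\]
One technical point needs care here. The density must be a functional of the canonical path, not of the auxiliary $W$. This holds because $W$ is recovered from $X$ via $\dd W=a^{-1/2}(\dd X-b\,\dd t)$, which is legitimate since $a$ is invertible.

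Taking $P$-expectations gives $\KL=\frac12\E^P\int_0^T|\theta|^2\dd t$, provided the stochastic integral has zero mean. I expect this integrability to be the main obstacle. Under Assumption~\ref{ass:regular}, $|\theta|^2\le\underline a^{-1}|b-\bar b|^2\le C(1+|X_t|^2)$, and the uniform second-moment bounds for $X$ give $\E^P\int_0^T|\theta|^2\dd t<\infty$. So $\int\theta^\top\dd W$ is a square-integrable $P$-martingale with mean zero. This simultaneously shows the right-hand side is finite and the log-density is $P$-integrable, so the KL is the expectation of the log-density with no $\infty-\infty$ issue. If one wanted to avoid moment bounds, localization by stopping times $\tau_n$ plus monotone convergence on the $|\theta|^2$ term and Fatou/lower semicontinuity would give the same equality. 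Under the regular class, however, the direct square-integrability argument suffices. Finally, since $|\theta|^2=(b-\bar b)^\top a^{-1}(b-\bar b)$, combining with the first step gives \eqref{eq:KL_reduction}.
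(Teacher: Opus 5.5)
Your proposal is correct and follows the same route as the paper. The paper observes that the Bures--Wasserstein term vanishes for equal covariances and then cites the standard Girsanov entropy formula. You additionally supply the details it delegates to the reference: writing the density as a functional of the canonical path, and using the linear-growth and ellipticity bounds to make $\int_0^T\theta^\top\dd W_t$ a mean-zero square-integrable martingale, which rules out any $\infty-\infty$ ambiguity.
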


\begin{proof}
The Bures--Wasserstein term is zero.  The remaining equality is the standard entropy formula obtained from Girsanov's theorem; see, e.g., \cite{KaratzasShreve1991}.
\end{proof}

Thus $\D_\lambda$ is a genuine extension of relative entropy on each fixed-volatility absolute-continuity class.  Which consensus it induces depends on whether the candidate occupies the first or second argument.

To state the transport case, let $\AW_{2,T}$ denote the bicausal transport cost with terminal quadratic loss,
\[
 \AW_{2,T}^2(P,Q):=\inf_{\pi\in\Pi_{bc}(P,Q)}\E^\pi|X_T-Y_T|^2,
\]
where $\Pi_{bc}$ is the set of bicausal couplings.  This terminal-loss cost is used only for the exact deterministic-Gaussian reduction below.  In \Cref{sec:approx} we use the distinct path-$L^2$ adapted cost $\AW_{2,L^2}$ to control Euler approximations over the whole trajectory.  The next result is included as a consistency statement, not as a claim to the Gaussian transport geometry developed in \cite{AcciaioBartlGrassHouPammer2026,GunasingamMattesiniWieselWong2026,GunasingamWong2025}.

\begin{theorem}[Deterministic Gaussian adapted-transport case]\label{thm:AW_reduction}
Let $b=\bar b=0$ and let $a(t),\bar a(t)$ be deterministic, measurable, uniformly elliptic covariance rates.  Then
\begin{equation}\label{eq:AW_reduction}
 \AW_{2,T}^2(P^{0,a},P^{0,\bar a})
 =\int_0^T\BW^2(a(t),\bar a(t))\dd t,
\end{equation}
and consequently
\begin{equation}\label{eq:D_AW}
 \D_\lambda(P^{0,a}\|P^{0,\bar a})
 =\frac\lambda2\AW_{2,T}^2(P^{0,a},P^{0,\bar a}).
\end{equation}
\end{theorem}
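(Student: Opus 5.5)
The plan is to prove \eqref{eq:AW_reduction} by matching upper and lower bounds. Equation \eqref{eq:D_AW} then follows at once from \Cref{def:ITD}: with $b=\bar b=0$ and deterministic coefficients, the drift term vanishes and the integrand $\frac\lambda2\BW^2(a(t),\bar a(t))$ is deterministic, so the expectation under $P^{0,a}$ is trivial. Under $P^{0,a}$ the canonical process is $X_t=\int_0^t a(s)^{1/2}\dd W_s$, and similarly for $Y$ under $P^{0,\bar a}$. Both are continuous Gaussian martingales, with quadratic variations $\int_0^t a$ and $\int_0^t\bar a$.

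\emph{Upper bound.} By \eqref{eq:procrustes}, for each $t$ there is an orthogonal $O_t$ attaining $\BW^2(a(t),\bar a(t))=\norm{a(t)^{1/2}-\bar a(t)^{1/2}O_t}_F^2$. The minimizing set is compact and depends measurably on $t$, so a measurable selection theorem gives a measurable choice of $t\mapsto O_t$. On one Brownian space, set $X_t=\int_0^t a^{1/2}\dd W$ and $Y_t=\int_0^t\bar a^{1/2}O_s\dd W_s$. Since $\tilde W=\int O\dd W$ is again a Brownian motion, $Y$ has law $P^{0,\bar a}$.

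The coupling is bicausal for the following reason. Uniform ellipticity makes $a^{1/2}$ and $\bar a^{1/2}$ invertible, so each of $X$ and $Y$ generates the same filtration as $W$ (equivalently $\tilde W$), and the two processes share the full filtration. By the It\^o isometry,
\[
\E|X_T-Y_T|^2=\int_0^T\norm{a^{1/2}-\bar a^{1/2}O_t}_F^2\dd t=\int_0^T\BW^2(a(t),\bar a(t))\dd t .
\]

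\emph{Lower bound.} Take any bicausal coupling $\pi$ on $\Omega\times\Omega$ with the joint filtration. The key and most delicate step is that $X$ and $Y$ remain martingales in the joint filtration. For this I would invoke the standard fact that causality is equivalent to the immersion (H-hypothesis) property: the own-filtration of $X$ is immersed in the joint filtration under a causal coupling, and symmetrically for $Y$. Hence $X$ and $Y$ are square-integrable martingales there, with the same quadratic variations as before, and $[X,Y]$ is well defined.

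By Kunita--Watanabe, $[X,Y]_t=\int_0^tC_s\dd s$, where the block matrix $\begin{pmatrix}a&C\\C^\top&\bar a\end{pmatrix}\succeq0$ holds for a.e.\ $s$. Then
\[
\E|X_T-Y_T|^2=\E\,\tr[X-Y]_T=\int_0^T\E\bigl(\tr a+\tr\bar a-2\tr C_s\bigr)\dd s .
\]
Pointwise, the constraint that the block matrix is positive semidefinite is exactly the constraint on cross-covariances of couplings of $N(0,a)$ and $N(0,\bar a)$. The Gelbrich bound therefore gives $\tr C\le\tr[(\bar a^{1/2}a\bar a^{1/2})^{1/2}]$. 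Substituting yields $\E|X_T-Y_T|^2\ge\int_0^T\BW^2(a(t),\bar a(t))\dd t$, and taking the infimum over $\pi$ completes the lower bound.

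I expect the main obstacle to be the martingale-preservation step in the lower bound. The rest is the It\^o isometry plus the pointwise Gaussian coupling inequality already encoded in \eqref{eq:bures}.
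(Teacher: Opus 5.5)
Your proposal is correct and follows the paper's proof essentially step for step. Both arguments use the same three ingredients: bicausality (equivalently, immersion of each coordinate filtration) keeps both coordinates martingales in the joint filtration; the absolutely continuous cross-variation satisfies a positive semidefinite block constraint, which the Gaussian trace bound turns into the lower bound; and a measurable Procrustes synchronous coupling with deterministic invertible integrands is bicausal and attains that bound. The only difference is cosmetic: you build both processes on one Brownian space, whereas the paper writes $Y$ as a stochastic integral against $X$, and these give the same coupling.
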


\begin{proof}
Let $\pi$ be bicausal and let $\mathbb G$ be the joint filtration.  Causality from $X$ to $Y$ implies that the past of $Y$ is conditionally independent of the future of $X$ given the past of $X$; hence the $P^{0,a}$-martingale $X$ remains a martingale in $\mathbb G$.  Reverse causality gives the same statement for $Y$.  Thus $(X,Y)$ is a continuous $\mathbb G$-martingale.  Write its instantaneous cross-covariance as $c_t\dd t=\dd\langle X,Y\rangle_t$.  Positive semidefiniteness of the block covariance matrix implies
$c_t=a(t)^{1/2}K_t\bar a(t)^{1/2}$ for a predictable contraction $K_t$.  It\^o isometry gives
\[
 \E^\pi|X_T-Y_T|^2
 =\int_0^T\{\tr a(t)+\tr\bar a(t)-2\tr c_t\}\dd t.
\]
The maximal trace over contractions is
$\tr[(\bar a(t)^{1/2}a(t)\bar a(t)^{1/2})^{1/2}]$, which yields the lower bound in \eqref{eq:AW_reduction}.  Choose a measurable orthogonal Procrustes optimizer $O(t)$ and, on the law of $X$, define
\begin{equation}\label{eq:explicit_gaussian_coupling}
 Y_t:=x_0+\int_0^t \bar a(s)^{1/2}O(s)a(s)^{-1/2}\dd X_s.
\end{equation}
Then $Y$ has covariance rate $\bar a$, while
$\dd\langle X,Y\rangle_t=a(t)^{1/2}O(t)^\top\bar a(t)^{1/2}\dd t$, and the chosen $O(t)$ attains the maximal trace pointwise.  Because the deterministic integrands in \eqref{eq:explicit_gaussian_coupling} are invertible, each coordinate filtration reconstructs the common driving Brownian filtration.  The coupling is therefore bicausal and attains the lower bound; see also the synchronous-coupling results for SDE laws in \cite{BackhoffKallbladRobinson2025,HitzRobinson2024}.
\end{proof}

\begin{remark}[Relation to Gaussian adapted barycenters]
In discrete time, Gaussian adapted-Wasserstein barycenters reduce to classical Bures--Wasserstein barycenter problems for suitable covariance columns \cite{GunasingamMattesiniWieselWong2026}.  Equation \eqref{eq:AW_reduction} is the deterministic-rate continuous-time case that the proposed divergence recovers.  The remaining problem is the interior control problem in which both drift and covariance are optimized jointly.
\end{remark}

\begin{remark}[Relation to a specific Wasserstein divergence]
In one dimension, \cite[Theorem 2.9]{BackhoffZhang2026} gives, under its hypotheses,
\[
 \SW_2(P^m\|P^{\bar m})
 =\frac{2}{\pi}\E^{P^m}\int_0^T
 \left(|\sigma(t,X_t)|-|\bar\sigma(t,X_t)|\right)^2\dd t
\]
for the martingale components.  Hence the volatility part of \eqref{eq:ITD} equals
$(\lambda\pi/4)\SW_2$ in that scalar setting.  The Bures--Wasserstein term is the multidimensional local-covariance counterpart of this specific Wasserstein divergence.
\end{remark}

\section{Approximation and adapted stability}\label{sec:approx}

Let $\pi_h=\{t_j=jh:j=0,\ldots,N\}$, $h=T/N$.  For a model $m=(b,a)$, let $X^h$ be its continuous-time Euler interpolation,
\begin{equation}\label{eq:euler}
 X^h_t=x_0+\int_0^t b(\eta_h(s),X^h_{\eta_h(s)})\dd s
 +\int_0^t a(\eta_h(s),X^h_{\eta_h(s)})^{1/2}\dd W_s,
\end{equation}
where $\eta_h(s)=t_j$ for $s\in[t_j,t_{j+1})$.

The one-step Euler kernels are Gaussian.  For two models $m,\bar m$, define the discrete local divergence
\begin{align}\label{eq:discrete_div}
 \D_{\lambda,h}(m\|\bar m)
 :=\E\sum_{j=0}^{N-1}h\,
 \ellit_\lambda\big(
 m(t_j,X^h_{t_j}),\bar m(t_j,X^h_{t_j})
 \big).
\end{align}
For $x\in\R^d$, let
\[
 K_{j,h}^{m}(x):=N\!\left(x+h b(t_j,x),h a(t_j,x)\right),
 \qquad
 \widehat K_{j,h}^{m}(x):=N\!\left(0,h a(t_j,x)\right).
\]
Then \Cref{thm:local_gaussian} gives the exact identity
\begin{align}\label{eq:euler_exact_identity}
 \D_{\lambda,h}(m\|\bar m)
 =\E\sum_{j=0}^{N-1}\Big[&
 \KL\!\left(K_{j,h}^{m}(X^h_{t_j})\middle\|K_{j,h}^{\bar m}(X^h_{t_j})\right)
 -\mathcal K\!\left(a(t_j,X^h_{t_j})\middle\|\bar a(t_j,X^h_{t_j})\right)\\
 &+\frac\lambda2W_2^2\!\left(
 \widehat K_{j,h}^{m}(X^h_{t_j}),
 \widehat K_{j,h}^{\bar m}(X^h_{t_j})\right)\Big].\notag
\end{align}
Thus $\D_{\lambda,h}$ is exactly the displayed renormalized Euler-kernel functional.  The passage $h\downarrow0$ concerns only the Euler state process and the Riemann sum.  Exact transition kernels are treated separately in \Cref{sec:OU}.

\begin{theorem}[Euler convergence]\label{thm:euler_conv}
Under Assumption \ref{ass:regular}, there is a constant $C$, depending only on the common bounds, $T$, and fixed $\lambda$, such that
\begin{equation}\label{eq:euler_div_rate}
 \left|\D_{\lambda,h}(m\|\bar m)-\D_\lambda(P^m\|P^{\bar m})\right|
 \le C(1+|x_0|^3)h^{1/2}.
\end{equation}
Moreover, if $\AW_{2,L^2}$ is the adapted Wasserstein distance associated with the path cost $\int_0^T|x_t-y_t|^2\dd t$, then
\begin{equation}\label{eq:euler_aw_rate}
 \AW_{2,L^2}(\law(X^h),P^m)\le C(1+|x_0|)h^{1/2}.
\end{equation}
\end{theorem}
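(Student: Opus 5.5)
The plan is to represent both functionals as time integrals of one function of the state and compare them along a synchronous coupling. Work in the coordinates $y=M^{1/2}x$ fixed in the coordinate convention of \Cref{sec:divergence}, so the metric is the identity. Put
\[
 g(t,x):=\ellit_\lambda\big(m(t,x),\bar m(t,x)\big),
\]
so that $\D_\lambda(P^m\|P^{\bar m})=\E\int_0^Tg(t,X_t)\dd t$ and, by \eqref{eq:discrete_div}, $\D_{\lambda,h}(m\|\bar m)=\E\int_0^Tg(\eta_h(t),X^h_{\eta_h(t)})\dd t$. Realize $X$ (the strong solution of \eqref{eq:sde_general}) and $X^h$ from \eqref{eq:euler} on the same Brownian motion $W$. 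The difference then splits into two parts:
\[
 \E\int_0^T\big[g(t,X_t)-g(\eta_h(t),X_{\eta_h(t)})\big]\dd t
 \;+\;\E\int_0^T\big[g(\eta_h(t),X_{\eta_h(t)})-g(\eta_h(t),X^h_{\eta_h(t)})\big]\dd t .
\]
The first term is a time-discretization error along the exact solution. The second is a strong Euler error.

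\textbf{Regularity estimate for $g$.} I want a bound of the form
\[
 |g(t,x)-g(s,y)|\le C(1+|x|+|y|)^2\big(|x-y|+|t-s|^{1/2}\big).
\]
For the drift part $\tfrac12(b-\bar b)^\top\bar a^{-1}(b-\bar b)$:
\begin{itemize}[leftmargin=2em]
\item $|b-\bar b|\le 2L(1+|x|)$, and $b,\bar b$ are Lipschitz in $x$ and $\tfrac12$-Hölder in $t$ with the envelope \eqref{eq:time_regularity}.
\item $\bar a^{-1}$ is bounded by $\underline a^{-1}$, and $A\mapsto A^{-1}$ is Lipschitz with constant $\underline a^{-2}$ on $\{A\ge\underline aI\}$.
\item Expanding the difference of quadratic forms then gives the bound with a factor $(1+|x|+|y|)^2$.
\end{itemize}
For the covariance part, the Procrustes form \eqref{eq:procrustes} and the triangle inequality in Frobenius norm show that $d_{\mathrm{BW}}$ is $1$-Lipschitz in each argument with respect to $\norm{A^{1/2}-A'^{1/2}}_F$. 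The square root is Lipschitz on $\{A\ge\underline aI\}$ with constant $(2\underline a^{1/2})^{-1}$, by the Sylvester-equation representation of its derivative. Since $d_{\mathrm{BW}}\le 2(d\overline a)^{1/2}$ is bounded, $d_{\mathrm{BW}}^2$ is Lipschitz in $(a,\bar a)$ uniformly on the elliptic class. Combining with the coefficient regularity, this part contributes at most $C(1+|x|)(|x-y|+|t-s|^{1/2})$.

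\textbf{Moment and strong-error inputs.} Next I invoke the standard consequences of Assumption \ref{ass:regular}:
\begin{itemize}[leftmargin=2em]
\item uniform moment bounds $\E\sup_t|X_t|^p+\E\sup_t|X^h_t|^p\le C_p(1+|x_0|^p)$;
\item the increment bound $\E|X_t-X_s|^2\le C(1+|x_0|^2)|t-s|$;
\item the strong Euler rate $\E\sup_t|X_t-X^h_t|^2\le C(1+|x_0|^2)h$.
\end{itemize}
The strong rate follows from Itô's formula, Burkholder–Davis–Gundy and Grönwall. The terms $b(s,X_s)-b(\eta_h(s),X^h_{\eta_h(s)})$ are handled using spatial Lipschitz continuity, the $\tfrac12$-Hölder time envelope, and the increment bound for $X^h$ over one step. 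The diffusion terms use the same bounds, since the square root is Lipschitz under ellipticity.

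\textbf{Assembling the rates.} Apply Cauchy–Schwarz to each term, for example
\[
 \E\big[(1+|X_t|+|X_{\eta}|)^2|X_t-X_\eta|\big]\le \big(\E(1+|X_t|+|X_\eta|)^4\big)^{1/2}\big(\E|X_t-X_\eta|^2\big)^{1/2}.
\]
The first factor is at most $C(1+|x_0|)^2$ and the second at most $C(1+|x_0|)h^{1/2}$. This gives $C(1+|x_0|^3)h^{1/2}$ for both parts, and the $|t-\eta_h(t)|^{1/2}\le h^{1/2}$ terms are of the same order, which proves \eqref{eq:euler_div_rate}.

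\textbf{Adapted Wasserstein bound.} For \eqref{eq:euler_aw_rate} I use the same synchronous coupling as a competitor. Because $a\ge\underline aI$ is invertible:
\begin{itemize}[leftmargin=2em]
\item $W$ is recovered from $X$ via $\dd W=a^{-1/2}(\dd X-b\dd t)$, so the filtrations of $X$ and $W$ coincide;
\item $W$ is recovered from $X^h$ step by step in the same way, so the filtration of $X^h$ also coincides with that of $W$.
\end{itemize}
Hence $\law(X^h,X)$ is bicausal, exactly as in the proof of \Cref{thm:AW_reduction}. Then
\[
 \AW^2_{2,L^2}(\law(X^h),P^m)\le \E\int_0^T|X_t-X^h_t|^2\dd t\le C(1+|x_0|^2)h ,
\]
and taking square roots gives \eqref{eq:euler_aw_rate}.

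The main obstacle is the strong Euler rate under only $\tfrac12$-Hölder time regularity together with the matrix square-root diffusion coefficient. I would verify it directly rather than cite it, checking that the time envelope $L(1+|x|)|t-s|^{1/2}$ contributes exactly $O(h)$ to the mean-square error after integrating over each step.
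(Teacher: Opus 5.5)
Your proposal is correct and follows the paper's proof: a bicausal synchronous coupling, the standard strong Euler rate $\E\sup_{t\le T}|X_t-X_t^h|^2\le C(1+|x_0|^2)h$, local Lipschitz bounds with polynomial growth for $x\mapsto\ellit_\lambda(m(t,x),\bar m(t,x))$, and Cauchy--Schwarz against fourth moments. One small correction: the covariance envelope in \eqref{eq:time_regularity} grows like $1+|x|$, so the term $|b-\bar b|^2\,\|\bar a(t,x)^{-1}-\bar a(s,x)^{-1}\|$ makes the time increment of $g$ of order $(1+|x|)^3|t-s|^{1/2}$, as the paper states, rather than $(1+|x|+|y|)^2|t-s|^{1/2}$; third moments still give $C(1+|x_0|^3)h^{1/2}$, so your conclusion is unaffected.
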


\begin{proof}
The synchronous coupling of $X^h$ and $X$ is bicausal.  Standard strong Euler estimates under global Lipschitz conditions give
\[
 \E\sup_{t\le T}|X_t^h-X_t|^2\le C(1+|x_0|^2)h.
\]
This immediately yields \eqref{eq:euler_aw_rate}.  Uniform ellipticity and Lipschitz continuity imply that the map
$x\mapsto\ellit_\lambda(m(t,x),\bar m(t,x))$ is locally Lipschitz with at most quadratic growth, uniformly in $t$.  More precisely, its spatial increment is bounded by $C(1+|x|^2+|y|^2)|x-y|$, and its time increment by $C(1+|x|^3)|t-s|^{1/2}$. Cauchy--Schwarz, the fourth-moment bounds, and the strong Euler estimate therefore give the Riemann-sum estimate \eqref{eq:euler_div_rate}.
\end{proof}

The next statement is useful for parametric stochastic-volatility families and for numerical model aggregation.

\begin{corollary}[Stability of candidate-first parametric barycenters]\label{cor:parametric}
Let $\Theta$ be compact and let $m_\theta=(b_\theta,a_\theta)$ be a family satisfying Assumption \ref{ass:regular} uniformly in $\theta$, with coefficients continuous in $\theta$ locally uniformly in $(t,x)$.  Fix expert models $m_1,\ldots,m_K$ and weights $\pi_k>0$.  Define
\[
 J^{\rightarrow}(\theta)=\sum_{k=1}^K\pi_k\D_\lambda(P^{m_\theta}\|P^{m_k}),
 \qquad
 J_h^{\rightarrow}(\theta)=\sum_{k=1}^K\pi_k\D_{\lambda,h}(m_\theta\|m_k).
\]
Then $J_h^{\rightarrow}\to J^{\rightarrow}$ uniformly on $\Theta$ at rate $O(h^{1/2})$.  Hence
\[
 \min_{\theta\in\Theta}J_h^{\rightarrow}(\theta)\to\min_{\theta\in\Theta}J^{\rightarrow}(\theta),
\]
and every accumulation point of minimizers of $J_h^{\rightarrow}$ is a minimizer of $J^{\rightarrow}$.  If the minimizer of $J^{\rightarrow}$ is unique, the entire sequence of minimizers converges.
\end{corollary}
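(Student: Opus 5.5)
The plan is to derive the uniform convergence from \Cref{thm:euler_conv} and then apply the standard argmin-continuity argument on the compact set $\Theta$.

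\textbf{Uniform convergence.} The constant $C$ in \eqref{eq:euler_div_rate} depends only on the common bounds of Assumption \ref{ass:regular}, on $T$, and on $\lambda$. By hypothesis these bounds hold uniformly in $\theta$, and the experts $m_k$ satisfy the same assumption with the same constants. Applying the theorem to each pair $(m_\theta,m_k)$ and summing with weights $\pi_k$ therefore gives
\[
 \sup_{\theta\in\Theta}\left|J_h^{\rightarrow}(\theta)-J^{\rightarrow}(\theta)\right|\le C(1+|x_0|^3)h^{1/2},
\]
since the weights sum to one. The only thing to check is that the proof of \Cref{thm:euler_conv} uses no $\theta$-specific information. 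It does not: the local Lipschitz and growth constants of $x\mapsto\ellit_\lambda$, the moment bounds, and the strong Euler constant are all functions of the common ellipticity, Lipschitz, and growth bounds.

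\textbf{Continuity of $J^{\rightarrow}$.} Take $\theta_n\to\theta$. The coefficients converge locally uniformly in $(t,x)$, and the Lipschitz constants are uniform in $n$. A standard stability estimate for SDEs then gives $\E\sup_t|X^{\theta_n}_t-X^\theta_t|^2\to0$ under the synchronous coupling. I would prove this by splitting $b_{\theta_n}(X^{\theta_n})-b_\theta(X^\theta)$ into a Lipschitz part and a coefficient-difference part at the fixed path $X^\theta$, and likewise for the diffusion coefficient. The coefficient-difference part tends to zero by local uniform convergence combined with uniform moment tails (truncate at a large ball). Gronwall's inequality finishes the estimate. Next, the integrand $\ellit_\lambda(m_{\theta_n}(t,X^{\theta_n}_t),m_k(t,X^{\theta_n}_t))$ converges in probability to the corresponding integrand for $\theta$. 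This follows from continuity of $\ellit_\lambda$ on $\R^d\times\Spp$ in its first argument, the local uniform convergence of the coefficients, and the convergence of paths. The integrand has quadratic growth, and the fourth moments are uniformly bounded, so it is uniformly integrable and the expectations converge. Hence $J^{\rightarrow}$ is continuous on the compact set $\Theta$, and its minimum is attained. I expect this SDE-stability step under merely locally uniform coefficient convergence to be the main obstacle, and the truncation argument is where care is needed.

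\textbf{Conclusion.} Uniform convergence of the functions gives $|\min J_h^{\rightarrow}-\min J^{\rightarrow}|\le\sup_\Theta|J_h^{\rightarrow}-J^{\rightarrow}|=O(h^{1/2})$. Now let $\theta_h$ minimize $J_h^{\rightarrow}$, and suppose $\theta_{h_j}\to\bar\theta$ along a subsequence. Then
\[
 J^{\rightarrow}(\bar\theta)=\lim_j J^{\rightarrow}(\theta_{h_j})=\lim_j J^{\rightarrow}_{h_j}(\theta_{h_j})=\min J^{\rightarrow},
\]
where the first equality uses continuity and the second uses uniform convergence, so $\bar\theta$ is a minimizer. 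If the minimizer $\theta^*$ is unique, compactness implies that every subsequence has a further subsequence converging to $\theta^*$, so the whole family converges. If $J_h^{\rightarrow}$ does not attain its minimum, the same argument applies to $\varepsilon_h$-minimizers with $\varepsilon_h\to0$; continuity of $J_h^{\rightarrow}$ in $\theta$ follows from the same stability argument applied to Euler schemes, which is even easier.
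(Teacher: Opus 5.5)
Your proposal is correct and follows the paper's route: the constants in \Cref{thm:euler_conv} are uniform over the compact coefficient family, which gives uniform $O(h^{1/2})$ convergence, and the minimizer conclusions then follow from the standard argmin argument. You also make explicit the continuity of $J^{\rightarrow}$ in $\theta$, via SDE stability and uniform integrability, which the paper's one-line appeal to the argmin theorem leaves implicit; the one correction there is that you need joint continuity of $\ellit_\lambda$ in both arguments, since the expert coefficients are also evaluated along the moving path $X^{\theta_n}$.
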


\begin{proof}
The constants in \Cref{thm:euler_conv} are uniform over the compact coefficient family.  Uniform convergence and the standard argmin theorem give the result.
\end{proof}

The adapted topology matters here.  Weak convergence alone does not control filtrations, conditional laws, or the Doob decomposition, whereas adapted Wasserstein convergence is designed to preserve these operations \cite{BackhoffBartlBeiglbockEder2020,BartlBeiglbockPammer2026}.  The synchronous coupling used in \Cref{thm:euler_conv} gives convergence in the stronger adapted metric directly.

\section{Expert-first projection of the arithmetic pool}\label{sec:reverse_projection}

We now minimize the expert-first objective \eqref{eq:reverse_IT_objective} over Markov diffusion coefficients.  This is the diffusion projection associated with the arithmetic likelihood pool.  Unlike the candidate-first problem, the state distributions in the objective are fixed by the experts, so no dynamic programming equation is needed.

Assume in this section that, for each $t>0$, the time-$t$ law of expert $k$ has a continuous density $\rho_k(t,x)$ with respect to a common $\sigma$-finite measure $\nu_t(\dd x)$.  Under Assumption~\ref{ass:regular}, standard uniformly elliptic diffusion theory supplies such densities with $\nu_t$ equal to Lebesgue measure; the separate formulation also accommodates broader weak settings.  We state selectors on the support of the aggregate density; whenever global uniqueness of the Markov coefficient field is asserted, assume additionally that the aggregate density is strictly positive.  Put
\begin{equation}\label{eq:occupancy_weights}
 \rho(t,x):=\sum_{j=1}^K\pi_j\rho_j(t,x),
 \qquad
 \omega_k(t,x):=\frac{\pi_k\rho_k(t,x)}{\rho(t,x)}
\end{equation}
whenever $\rho(t,x)>0$.  These are the posterior expert probabilities conditional on the current state $X_t=x$, as opposed to the full-path posterior weights $w_t^k$ in \eqref{eq:posterior_weights}.

For Markov candidate coefficients $(\beta,A)$, define
\begin{equation}\label{eq:reverse_projection_objective}
 \mathcal J^{\leftarrow}(\beta,A)
 :=\sum_{k=1}^K\pi_k\D_\lambda(P^{m_k}\|P^{\beta,A}).
\end{equation}
Disintegration gives
\begin{align}\label{eq:reverse_disintegration}
 \mathcal J^{\leftarrow}(\beta,A)
 =\int_0^T\!\int \rho(t,x)\Bigg[&
 \frac12\sum_k\omega_k(t,x)
 (b_k-\beta)^\top A^{-1}(b_k-\beta)\\
 &+\frac\lambda2\sum_k\omega_k(t,x)\BW^2(a_k,A)
 \Bigg]\nu_t(\dd x)\dd t.\notag
\end{align}

Define the state-occupancy arithmetic drift and its dispersion matrix by
\begin{equation}\label{eq:reverse_mean_dispersion}
 \bar b^{\omega}(t,x):=\sum_k\omega_k(t,x)b_k(t,x),
 \qquad
 S_b^{\omega}(t,x):=\sum_k\omega_k(t,x)
 (b_k-\bar b^{\omega})(b_k-\bar b^{\omega})^\top(t,x).
\end{equation}

\begin{theorem}[Markov projection of arithmetic consensus]\label{thm:reverse_projection}
Suppose the candidate drift and covariance take values in compact Euclidean-convex sets $\mathcal B(t,x)\subset\R^d$ and $\mathcal A(t,x)\subset\Spp$, and suppose the data in \eqref{eq:occupancy_weights}--\eqref{eq:reverse_mean_dispersion} are continuous on the support of $\rho$.  Then the minimization of \eqref{eq:reverse_projection_objective} decomposes pointwise in $(t,x)$.

If $\bar b^{\omega}(t,x)\in\mathcal B(t,x)$, the unique drift selector is
\begin{equation}\label{eq:reverse_drift_selector}
 \beta^{\leftarrow}(t,x)=\bar b^{\omega}(t,x).
\end{equation}
After substituting this drift, the covariance selector is the unique minimizer over $\mathcal A(t,x)$ of
\begin{equation}\label{eq:reverse_cov_selector}
 A\longmapsto
 \Psi^{\leftarrow}_{t,x}(A)
 :=\frac12\tr\!\left(A^{-1}S_b^{\omega}(t,x)\right)
 +\frac\lambda2\sum_k\omega_k(t,x)\BW^2(a_k(t,x),A).
\end{equation}
No commutativity of the expert covariance matrices is required.  The selectors are unique almost everywhere with respect to the aggregate occupancy measure $\rho(t,x)\nu_t(\dd x)\dd t$.  They are continuous whenever the admissible multifunctions are continuous in the Hausdorff topology.  If $\rho$ is strictly positive and the selected coefficient fields satisfy Assumption~\ref{ass:regular}, they define the unique Markov diffusion minimizer of $\mathcal J^{\leftarrow}$ within that regular coefficient class.
\end{theorem}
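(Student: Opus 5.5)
The proof starts from the disintegration \eqref{eq:reverse_disintegration}. The integrand at $(t,x)$ depends on the candidate only through the pair $(\beta(t,x),A(t,x))$, and the outer measure $\rho(t,x)\nu_t(\dd x)\dd t$ is fixed by the experts and does not depend on the candidate. So minimizing over measurable selections with values in $\mathcal B(t,x)\times\mathcal A(t,x)$ reduces to minimizing the bracketed integrand pointwise. Whatever pointwise minimizer is found, provided it is jointly measurable, is globally optimal. Conversely, a candidate that fails to be pointwise optimal on a set of positive occupancy measure has strictly larger objective. This second direction gives uniqueness almost everywhere with respect to the occupancy measure.

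Step one is the drift. For fixed $A\succ0$, expand around $\bar b^\omega$, using $\sum_k\omega_k=1$ so that the cross term vanishes:
\[
 \sum_k\omega_k(b_k-\beta)^\top A^{-1}(b_k-\beta)=\tr(A^{-1}S_b^\omega)+(\beta-\bar b^\omega)^\top A^{-1}(\beta-\bar b^\omega).
\]
For every $A$ this is uniquely minimized at $\beta=\bar b^\omega$. When $\bar b^\omega\in\mathcal B(t,x)$, the drift choice $\beta=\bar b^\omega$ is therefore optimal jointly with every $A$. Substituting it leaves exactly $\Psi^{\leftarrow}_{t,x}$ in \eqref{eq:reverse_cov_selector}.

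Step two is the covariance. $\Psi^{\leftarrow}_{t,x}$ is continuous on the compact set $\mathcal A(t,x)\subset\Spp$, so a minimizer exists. For uniqueness, I would use that $A\mapsto\tr(A^{-1}S)$ is convex on $\Spp$ because matrix inversion is operator convex and $S\succeq0$. By \Cref{prop:basic}(ii) with $M=I$, each $A\mapsto\BW^2(a_k,A)$ is strictly convex on $\Spp$, and $\lambda>0$ with weights $\omega_k\ge0$ summing to one. Hence $\Psi^{\leftarrow}_{t,x}$ is strictly convex, and it has a unique minimizer on the convex set $\mathcal A(t,x)$. No commutativity enters, since only convexity is used.

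Step three is measurability and continuity, which I expect to be the main technical point. The plan is to apply Berge's maximum theorem. The objective $(t,x,A)\mapsto\Psi^{\leftarrow}_{t,x}(A)$ is jointly continuous on the support of $\rho$, because $\omega_k$, $b_k$, $a_k$ are continuous and $\BW^2$ is continuous. If the multifunction $\mathcal A$ is continuous in the Hausdorff sense with compact values, the argmin correspondence is upper hemicontinuous. Since it is single-valued, it is a continuous function, and the same holds for $\beta^{\leftarrow}=\bar b^\omega$.

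Without Hausdorff continuity, a measurable selection theorem gives a measurable selector instead. Here I would use Kuratowski--Ryll-Nardzewski, or measurable maximum theorems for Carathéodory integrands. That selector is unique almost everywhere, so the integral is well defined.

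Finally, suppose $\rho>0$ everywhere. Then the occupancy-a.e. uniqueness holds Lebesgue-a.e. in $(t,x)$. Two regular (continuous) coefficient fields that agree almost everywhere agree everywhere. Hence, within the class satisfying Assumption~\ref{ass:regular}, the selected field is the unique Markov minimizer of $\mathcal J^{\leftarrow}$.
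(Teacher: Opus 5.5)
Your proposal is correct and follows essentially the same route as the paper: pointwise reduction via the disintegration \eqref{eq:reverse_disintegration}, the variance decomposition of the weighted drift quadratic to obtain $\beta^{\leftarrow}=\bar b^{\omega}$, and strict convexity of $\Psi^{\leftarrow}_{t,x}$ on the compact convex set $\mathcal A(t,x)$ for existence and uniqueness of the covariance selector. That strict convexity comes from convexity of $A\mapsto\tr(A^{-1}S_b^{\omega})$ together with strict convexity of $\BW^2(a_k,\cdot)$, and the maximum theorem then gives continuity. Your additional remarks fill in details the paper leaves implicit: measurable selection, and the passage from occupancy-a.e.\ uniqueness to uniqueness within the continuous regular class, which uses $\rho>0$ and Lebesgue $\nu_t$.
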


\begin{proof}
For fixed $A$,
\begin{align}\label{eq:reverse_ANOVA}
 \sum_k\omega_k(b_k-\beta)^\top A^{-1}(b_k-\beta)
 =\tr(A^{-1}S_b^{\omega})
 +(\beta-\bar b^{\omega})^\top A^{-1}(\beta-\bar b^{\omega}).
\end{align}
Thus \eqref{eq:reverse_drift_selector} is the unique unconstrained minimizer and remains optimal whenever it belongs to $\mathcal B(t,x)$.  The map $A\mapsto\tr(A^{-1}S_b^{\omega})$ is convex on $\Spp$.  For every positive definite $a_k$, the map $A\mapsto\BW^2(a_k,A)$ is strictly convex by the trace-square-root result used in \Cref{prop:selectors}; see \cite[Theorem~7]{BhatiaJainLim2019}.  Hence \eqref{eq:reverse_cov_selector} is strictly convex.  Continuity and compact Euclidean convexity give existence and uniqueness.  The maximum theorem yields continuity of the selector.  Finally, \eqref{eq:reverse_disintegration} shows that pointwise minimization minimizes the integrated objective.
\end{proof}

\begin{remark}[A sufficient condition for Lipschitz projected coefficients]\label{rem:reverse_lipschitz}
The regularity clause in \Cref{thm:reverse_projection} can be verified by a quantitative convexity condition.  Suppose, on a fixed ellipticity box $\underline a I_d\le A\le\overline a I_d$, that the expert densities are locally Lipschitz and the aggregate density is locally bounded away from zero, so that $\omega_k$, $\bar b^\omega$, and $S_b^\omega$ are locally Lipschitz.  If
\[
 S_b^\omega(t,x)\succeq s_0 I_d
\]
on the region of interest, then for every symmetric direction $H$,
\[
 D^2\!\left[\frac12\tr(A^{-1}S_b^\omega)\right][H,H]
 =\tr(A^{-1}HA^{-1}HA^{-1}S_b^\omega)
 \ge \frac{s_0}{\overline a^3}\norm{H}_F^2.
\]
Hence the covariance objective is uniformly strongly convex.  Standard perturbation estimates for strongly convex parametric programs then imply local Lipschitz continuity of the unique covariance selector when the expert coefficients vary locally Lipschitzly and the admissible covariance set is fixed; see, e.g., \cite{BonnansShapiro2000}.  The same argument applies when the full objective in \eqref{eq:reverse_cov_selector} is uniformly strongly convex, even if $S_b^\omega$ is singular. Moving constraint sets require additional regularity of the constrained optimization problem. Together with the required time regularity and growth bounds, the local coefficient estimates give a well-posed projected diffusion; membership in Assumption~\ref{ass:regular} requires the corresponding global bounds.
\end{remark}

At an interior matrix minimizer, differentiation of \eqref{eq:reverse_cov_selector} gives
\begin{equation}\label{eq:reverse_matrix_FOC}
 A^{-1}S_b^{\omega}A^{-1}
 =\lambda\left[
 I_d-\sum_k\omega_k
 a_k^{1/2}(a_k^{1/2}Aa_k^{1/2})^{-1/2}a_k^{1/2}
 \right].
\end{equation}
The left side is the covariance penalty generated by drift disagreement.  It vanishes only when the expert drifts coincide at the state under consideration.

\begin{corollary}[Scalar expert-first projection increases volatility]\label{cor:reverse_scalar}
In one dimension, write $a_k=\sigma_k^2$, $A=\sigma^2$, and define
\[
 \bar\sigma^{\omega}:=\sum_k\omega_k\sigma_k,
 \qquad
 s_b^2:=\sum_k\omega_k(b_k-\bar b^{\omega})^2.
\]
If $\sigma>0$ is unrestricted, the unique expert-first projection volatility solves
\begin{equation}\label{eq:reverse_scalar_equation}
 \lambda(\sigma^{\leftarrow}-\bar\sigma^{\omega})
 =\frac{s_b^2}{(\sigma^{\leftarrow})^3}.
\end{equation}
Hence
\begin{equation}\label{eq:reverse_inflation}
 \sigma^{\leftarrow}=\bar\sigma^{\omega}
 \quad\Longleftrightarrow\quad s_b^2=0,
 \qquad
 \sigma^{\leftarrow}>\bar\sigma^{\omega}
 \quad\text{if }s_b^2>0.
\end{equation}
Under the specified normalization, drift dispersion raises the selected volatility above the posterior mean of expert volatilities.
\end{corollary}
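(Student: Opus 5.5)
The plan is to specialize the covariance selector \eqref{eq:reverse_cov_selector} of \Cref{thm:reverse_projection} to $d=1$ and reduce everything to one strictly monotone scalar equation. In one dimension $\BW^2(a_k,A)=(\sigma_k-\sigma)^2$ with $\sigma_k=a_k^{1/2}$ and $\sigma=A^{1/2}>0$. Hence, after substituting the drift selector $\beta^\leftarrow=\bar b^\omega$, the objective becomes
\[
 \psi(\sigma):=\frac{s_b^2}{2\sigma^2}+\frac\lambda2\sum_k\omega_k(\sigma_k-\sigma)^2,\qquad \sigma>0.
\]
I would work directly in the variable $\sigma$ rather than $A$. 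The map $\sigma\mapsto\sigma^2$ is a bijection of $(0,\infty)$, so minimizers correspond one to one.

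First, existence and uniqueness. The function $\psi$ is $C^\infty$ on $(0,\infty)$, and
\[
 \psi''(\sigma)=3s_b^2\sigma^{-4}+\lambda\sum_k\omega_k=3s_b^2\sigma^{-4}+\lambda>0,
\]
so $\psi$ is strictly convex and has at most one critical point. As $\sigma\to\infty$ the quadratic term forces $\psi\to\infty$. The derivative is
\[
 \psi'(\sigma)=-s_b^2\sigma^{-3}+\lambda(\sigma-\bar\sigma^\omega).
\]
The behaviour at $\sigma\downarrow0$ splits into two cases.
\begin{itemize}
 \item If $s_b^2>0$, then $\psi\to\infty$ and $\psi'\to-\infty$. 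Together with $\psi'(\sigma)\to+\infty$ as $\sigma\to\infty$, the strictly increasing function $\psi'$ has exactly one zero. That zero is the unique minimizer, and it satisfies \eqref{eq:reverse_scalar_equation}.
 \item If $s_b^2=0$, then $\psi'(\sigma)=\lambda(\sigma-\bar\sigma^\omega)$. The unique minimizer is $\bar\sigma^\omega$, which is positive because every $\sigma_k>0$, and it again solves \eqref{eq:reverse_scalar_equation}.
\end{itemize}
One small point: unrestricted $\sigma$ means the compactness hypothesis of \Cref{thm:reverse_projection} is not used. The explicit coercivity argument above replaces it.

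Second, the dichotomy \eqref{eq:reverse_inflation}. At the solution, \eqref{eq:reverse_scalar_equation} gives $\lambda(\sigma^\leftarrow-\bar\sigma^\omega)=s_b^2(\sigma^\leftarrow)^{-3}$. The right-hand side is $\ge0$, and it vanishes iff $s_b^2=0$, because $\sigma^\leftarrow>0$. Since $\lambda>0$, this yields $\sigma^\leftarrow\ge\bar\sigma^\omega$, with equality iff $s_b^2=0$. So strict inflation holds exactly when the drifts are dispersed. The interpretive sentence at the end of the corollary is then just this inequality restated.

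I expect no serious obstacle. The only care needed is the boundary behaviour at $\sigma\downarrow0$ in the degenerate case $s_b^2=0$, which is handled by noting that $\bar\sigma^\omega$ lies in the interior of $(0,\infty)$.
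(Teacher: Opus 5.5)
Your proof is correct and follows essentially the same route as the paper: reduce to the scalar objective $\frac{s_b^2}{2\sigma^2}+\frac\lambda2\sum_k\omega_k(\sigma-\sigma_k)^2$, use coercivity and positivity of the second derivative to get a unique critical point, and read the dichotomy off the sign of the first-order condition. Your explicit treatment of the case $s_b^2=0$ and of the final sign argument spells out what the paper compresses into ``this proves.''
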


\begin{proof}
After drift minimization, the scalar objective is
\[
 \frac{s_b^2}{2\sigma^2}
 +\frac\lambda2\sum_k\omega_k(\sigma-\sigma_k)^2.
\]
It tends to infinity as $\sigma\downarrow0$ when $s_b^2>0$ and as $\sigma\to\infty$.  Its derivative is
$-s_b^2/\sigma^3+\lambda(\sigma-\bar\sigma^{\omega})$, and its second derivative is positive.  This proves \eqref{eq:reverse_scalar_equation}--\eqref{eq:reverse_inflation}; the case $s_b^2=0$ reduces to the scalar Bures--Wasserstein barycenter.
\end{proof}

\begin{remark}[Dependence on the drift normalization]
\label{rem:normalization_dependence}
The inflation in \Cref{cor:reverse_scalar} uses the second-argument normalization in \eqref{eq:ell_def}. If instead the drift term in $\ellit((b,A),(\bar b,B))$ were normalized by the first covariance $A$, the expert-first pointwise objective would be
\[
 \frac12\sum_k\omega_k(b_k-\beta)^\top a_k^{-1}(b_k-\beta)
 +\frac\lambda2\sum_k\omega_k\BW^2(a_k,A).
\]
For unrestricted coefficients its selectors would be
\begin{equation}\label{eq:alternative_normalization}
 \beta_{\rm alt}=\left(\sum_k\omega_k a_k^{-1}\right)^{-1}
                  \sum_k\omega_k a_k^{-1}b_k,
 \qquad
 A_{\rm alt}=\operatorname{Bar}_{\rm BW}(a_1,\ldots,a_K;\omega).
\end{equation}
Drift dispersion would then have no effect on covariance. Both conventions recover path-space KL on the common-covariance subproblem, so that reduction does not distinguish them. We use the second-argument convention to preserve the Gaussian KL mean term in \Cref{thm:local_gaussian}. The volatility increase is a consequence of this stated loss, not an implication of arithmetic pooling alone. The numeraire argument in \Cref{subsec:numeraire_motivation} motivates retaining both KL directions but does not select the added covariance loss.
\end{remark}

\begin{remark}[Calibration by a scalar volatility tolerance]\label{rem:scalar_tolerance}
In the normalized coordinates used in \Cref{cor:reverse_scalar}, fix a
state $(t,x)$ and put
\[
 u:=\frac{\sigma^{\leftarrow}}{\bar\sigma^{\omega}},
 \qquad
 r:=\frac{s_b^2}{\lambda(\bar\sigma^{\omega})^4}.
\]
Equation \eqref{eq:reverse_scalar_equation} becomes
$u^3(u-1)=r$ with $u\ge1$.  Since $u\mapsto u^3(u-1)$ is strictly
increasing on $[1,\infty)$, for every $\varepsilon>0$ the proportional
volatility tolerance has the exact calibration
\begin{equation}\label{eq:scalar_tolerance_calibration}
 \sigma^{\leftarrow}\le(1+\varepsilon)\bar\sigma^{\omega}
 \quad\Longleftrightarrow\quad
 \lambda\ge
 \frac{s_b^2}
 {(\bar\sigma^{\omega})^4\varepsilon(1+\varepsilon)^3}.
\end{equation}
When $s_b^2=0$, every $\lambda>0$ satisfies the tolerance.  A single
parameter imposes the same tolerance throughout a region by taking
$\lambda$ at least the supremum of the right-hand side there, provided
that supremum is finite.  This pointwise tolerance complements the
integrated covariance budget in \eqref{eq:covariance_budget}; it applies
to the unrestricted scalar selector of \Cref{cor:reverse_scalar}.
\end{remark}

The following family establishes well-posedness and strict drift-dispersion inflation with common expert volatility. The extension to heterogeneous input volatilities is given in \Cref{cor:reverse_OU_heterogeneous}.

\begin{proposition}[A regular Ornstein--Uhlenbeck projection with common expert volatility]
\label{prop:reverse_OU_closure}
Let $d=1$, $\pi_k>0$, and
\[
 b_k(x)=-\kappa x+\mu_k,\qquad a_k=\sigma^2,
 \qquad \kappa,\sigma>0,
\]
with a common deterministic initial state $x_0$.  Put
$\Delta=\max_k\mu_k-\min_k\mu_k$.
The unrestricted expert-first selectors from
\Cref{thm:reverse_projection,cor:reverse_scalar} extend continuously to
$t=0$ and satisfy Assumption~\ref{ass:regular} on every finite horizon.
In particular, their volatility satisfies
\begin{equation}\label{eq:reverse_OU_bounds}
 \sigma\le\sigma^{\leftarrow}(t,x)
 \le\sigma+\frac{\Delta^2}{4\lambda\sigma^3}.
\end{equation}
If $\Delta>0$, the first inequality is strict for every $t>0$ and
$x\in\R$.  Consequently these coefficients define the unique regular
Markov diffusion minimizer of $\mathcal J^{\leftarrow}$ whenever the
admissible coefficient class contains them.
\end{proposition}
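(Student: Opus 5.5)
The plan is to make everything explicit, because the expert marginals are Gaussian with a common variance. Under expert $k$, $X_t\sim N(m_k(t),v(t))$ with
\[
 m_k(t)=x_0e^{-\kappa t}+\frac{\mu_k}{\kappa}(1-e^{-\kappa t}),\qquad
 v(t)=\frac{\sigma^2}{2\kappa}(1-e^{-2\kappa t}).
\]
The variance $v(t)$ is the same for all experts. Hence in \eqref{eq:occupancy_weights} the log-odds are affine in $x$:
\[
 \log\frac{\omega_k}{\omega_j}
 =\log\frac{\pi_k}{\pi_j}
 +\frac{m_k-m_j}{v}\,x-\frac{m_k^2-m_j^2}{2v}.
\]
So $\omega(t,\cdot)$ is a softmax of affine functions $\alpha_k(t)x+\gamma_k(t)$.

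The first step is to check that $\alpha_k,\gamma_k$ extend smoothly to $t=0$ despite $v(0)=0$.

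\emph{Slopes.} One computes $(m_k-m_j)/v=2(\mu_k-\mu_j)/\bigl(\sigma^2(1+e^{-\kappa t})\bigr)$, which is smooth and bounded on $[0,\infty)$.

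\emph{Intercepts.} Write $(m_k^2-m_j^2)/(2v)=\frac{m_k-m_j}{v}\cdot\frac{m_k+m_j}{2}$. This is a bounded smooth slope times a smooth function, hence smooth and bounded on $[0,T]$.

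\emph{Consequences.} $\omega_k$ is $C^1$ on $[0,T]\times\R$ and takes values in $(0,1)$. Its $x$-derivative is bounded by $\max_{k,j}|\alpha_k-\alpha_j|\le C\Delta/\sigma^2$. Its $t$-derivative is bounded by $C(1+|x|)$. At $t=0$ the weights reduce to $\pi_k$.

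This is the step I expect to be the real obstacle. The concern is the degeneration $v\downarrow0$, and it is resolved only because the common volatility makes the quadratic terms in $x$ cancel.

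With the weights controlled, the drift is
\[
 \beta^{\leftarrow}(t,x)=-\kappa x+\sum_k\omega_k\mu_k .
\]
By \eqref{eq:reverse_mean_dispersion}, $s_b^2=\sum_k\omega_k(\mu_k-\bar\mu^\omega)^2$, because the common term $-\kappa x$ cancels in the differences. Both quantities are bounded, globally Lipschitz in $x$ with $x$-independent constants, and Lipschitz in $t$ with constant $C(1+|x|)$. Popoviciu's variance inequality gives $s_b^2\le\Delta^2/4$.

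For the volatility, I apply \Cref{cor:reverse_scalar} with $\bar\sigma^\omega=\sigma$. Then $u=\sigma^{\leftarrow}$ is the unique root of
\[
 G(u,s):=\lambda(u-\sigma)u^3-s=0,\qquad u\ge\sigma,\quad s=s_b^2 .
\]
This root satisfies $\lambda(u-\sigma)=s/u^3\le s/\sigma^3$, which yields \eqref{eq:reverse_OU_bounds}.

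For strictness, take $t>0$ and $\Delta>0$. All $\omega_k>0$ and the $\mu_k$ are not all equal, so $s_b^2>0$. Then \eqref{eq:reverse_inflation} gives $\sigma^{\leftarrow}>\sigma$. The same holds at $t=0$, where the weights are $\pi_k$.

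For regularity of the volatility, note that
\[
 \partial_uG=\lambda u^2(4u-3\sigma)\ge\lambda\sigma^3>0\quad\text{on }u\ge\sigma .
\]
By the implicit function theorem, $s\mapsto u(s)$ is $C^1$ with derivative at most $1/(\lambda\sigma^3)$. Hence $a^{\leftarrow}=(\sigma^{\leftarrow})^2$ inherits global Lipschitz continuity in $x$, and time-Lipschitz continuity with envelope $C(1+|x|)$ on $[0,T]$. Lipschitz continuity in time dominates the $|t-s|^{1/2}$ envelope in \eqref{eq:time_regularity} on a compact interval.

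Ellipticity holds between $\sigma^2$ and $\bigl(\sigma+\Delta^2/(4\lambda\sigma^3)\bigr)^2$. Linear growth of the drift is immediate. This verifies Assumption~\ref{ass:regular} on every finite horizon.

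Finally, the Gaussian densities are strictly positive for $t>0$. The uniqueness clause of \Cref{thm:reverse_projection} then applies to the unconstrained selectors. The objective is strictly convex on $\R\times(0,\infty)$ by \Cref{cor:reverse_scalar}, and $t=0$ is a null time set. Together these give that the coefficients are the unique regular Markov minimizer whenever the admissible class contains them.
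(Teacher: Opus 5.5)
Your argument is correct and follows the paper's proof essentially step for step: the experts' marginals are Gaussian with a common variance, so the quadratic term cancels and the weights are a softmax with slopes $2(\mu_k-\mu_j)/(\sigma^2(1+e^{-\kappa t}))$ and intercepts that are smooth on $[0,T]$; then Popoviciu's bound $s_b^2\le\Delta^2/4$ and the implicit-function bound $0<u'(s)\le 1/(\lambda\sigma^3)$ for the root of $\lambda(u-\sigma)u^3=s$ finish the job. One small slip: the extended weights at $t=0$ are $\pi_k e^{\mu_k(x-x_0)/\sigma^2}/\sum_j\pi_j e^{\mu_j(x-x_0)/\sigma^2}$, which equal $\pi_k$ only at $x=x_0$, so your (unneeded) strictness claim at $t=0$ should instead rest on the strict positivity of these extended weights.
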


\begin{proof}
Write $q_t=e^{-\kappa t}$, $c_t=(1-q_t)/\kappa$, and
$v_t=\sigma^2(1-q_t^2)/(2\kappa)$.  Expert $k$ has marginal law
$N(q_tx_0+c_t\mu_k,v_t)$.  Cancelling the common quadratic term in
the Gaussian densities gives, for $t>0$,
\begin{equation}\label{eq:reverse_OU_weights}
 \omega_k(t,x)=
 \frac{\pi_k\exp\{h_t\mu_k(x-q_tx_0)-\tfrac12c_th_t\mu_k^2\}}
 {\sum_j\pi_j\exp\{h_t\mu_j(x-q_tx_0)-\tfrac12c_th_t\mu_j^2\}},
 \qquad h_t=\frac{c_t}{v_t}=\frac{2}{\sigma^2(1+q_t)}.
\end{equation}
This formula has a smooth extension to $t=0$, with
$\omega_k(0,x_0)=\pi_k$.  Its spatial derivatives are uniformly bounded,
since
\[
 \partial_x\omega_k=h_t\omega_k(\mu_k-\bar\mu^\omega),
 \qquad \bar\mu^\omega=\sum_j\omega_j\mu_j,
 \qquad h_t\le 2/\sigma^2.
\]
On $[0,T]$, differentiating \eqref{eq:reverse_OU_weights} in time also
gives $|\partial_t\omega_k(t,x)|\le C(1+|x|)$.
Hence $\beta^{\leftarrow}=-\kappa x+\bar\mu^\omega$ is globally
Lipschitz in $x$ with linear growth. For
$r:=s_b^2=\sum_k\omega_k(\mu_k-\bar\mu^\omega)^2\le\Delta^2/4$,
\[
 \partial_xr=h_t\sum_k\omega_k(\mu_k-\bar\mu^\omega)^3,
 \qquad |\partial_xr|\le h_t\Delta^3.
\]
Thus $r$ is globally Lipschitz in $x$ and Lipschitz in time with the same linear-growth envelope.
The positive solution $s=s(r)$ of
$\lambda(s-\sigma)s^3=r$ satisfies
\[
 0<s'(r)=\frac{1}{\lambda s^2(4s-3\sigma)}
 \le\frac{1}{\lambda\sigma^3}.
\]
Thus $\sigma^{\leftarrow}=s(r)$ and its square have the asserted
regularity; \eqref{eq:reverse_OU_bounds} gives the uniform ellipticity
and upper bound.  All weights are positive, so $r>0$ when
$\Delta>0$.  The minimizing property follows from the pointwise
decomposition in \Cref{thm:reverse_projection}.
\end{proof}

\begin{remark}[Why the common expert volatility matters]
\label{rem:reverse_OU_restriction}
The common volatility cancels the quadratic state term in every
log-density ratio; \eqref{eq:reverse_OU_weights} shows explicitly why
the remaining slope stays bounded as $t\downarrow0$.
With unequal expert volatilities, the quadratic coefficient can be of
order $t^{-1}$, so Gaussianity alone gives no such uniform estimate.
Nor does Gaussianity ensure a bounded selected covariance at positive times:
two centered Ornstein--Uhlenbeck experts can have identical marginal
variances at a positive time but different mean-reversion rates.
Their posterior weights are then constant in $x$, their drift
dispersion is proportional to $x^2$, and
\eqref{eq:reverse_scalar_equation} gives
$\sigma^{\leftarrow}(t,x)\asymp |x|^{1/2}$.
\end{remark}

\begin{corollary}[Heterogeneous Ornstein--Uhlenbeck volatilities after a positive time]
\label{cor:reverse_OU_heterogeneous}
Let $b_k(x)=-\kappa x+\mu_k$ and $a_k=\sigma_k^2$, where
$\kappa>0$, $\sigma_k>0$, and all experts start from $x_0$ at time zero.
Fix $0<t_0<T$ and retain their original marginal densities
$\rho_k(t,x)$ for $t\in[t_0,T]$. Define
\begin{equation}\label{eq:reverse_truncated_objective}
 \mathcal J^{\leftarrow}_{t_0}(\beta,A)
 :=\sum_k\pi_k\E^{P^{m_k}}\int_{t_0}^T
 \ellit_\lambda\big((b_k,\sigma_k^2),(\beta,A)\big)(t,X_t)\dd t.
\end{equation}
Put $\sigma_-:=\min_k\sigma_k$, $\sigma_+:=\max_k\sigma_k$, and
$\Delta:=\max_k\mu_k-\min_k\mu_k$. The unrestricted selectors
\eqref{eq:reverse_drift_selector} and \eqref{eq:reverse_scalar_equation}
satisfy the coefficient requirements of Assumption~\ref{ass:regular}
on $[t_0,T]$, with constants allowed to depend on $t_0$, and
\begin{equation}\label{eq:reverse_OU_heterogeneous_bounds}
 \sigma_-\le\bar\sigma^\omega\le\sigma^{\leftarrow}
 \le\sigma_++\frac{\Delta^2}{4\lambda\sigma_-^3}.
\end{equation}
If $\Delta>0$, then $\sigma^{\leftarrow}>\bar\sigma^\omega$ everywhere.
For any prescribed candidate initial law at $t_0$ with finite second
moment, these coefficients define the unique regular Markov diffusion
minimizer of \eqref{eq:reverse_truncated_objective}, whenever the
admissible coefficient class contains them.
\end{corollary}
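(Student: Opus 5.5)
The plan mirrors the proof of \Cref{prop:reverse_OU_closure}, with the marginal laws computed explicitly and every bound made uniform over the compact window $[t_0,T]$. Expert $k$ has marginal $N(q_tx_0+c_t\mu_k,v_{k,t})$, where $v_{k,t}=\sigma_k^2(1-q_t^2)/(2\kappa)$. On $[t_0,T]$ each $v_{k,t}$ lies in a compact subinterval of $(0,\infty)$.

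First, write the log-weights
\[
\log(\pi_k\rho_k)=-\tfrac12\log v_{k,t}-\tfrac{(x-q_tx_0-c_t\mu_k)^2}{2v_{k,t}}+\log\pi_k.
\]
The spatial derivative is $\partial_x\log\rho_k=-(x-q_tx_0-c_t\mu_k)/v_{k,t}$, so
\[
\partial_x\omega_k=\omega_k\bigl(\partial_x\log\rho_k-\textstyle\sum_j\omega_j\partial_x\log\rho_j\bigr).
\]
This is bounded by $C(1+|x|)$ with $C$ depending on $t_0$. That bound alone does not give global Lipschitz continuity. The intended remedy is to use the boundedness of the quantities being averaged.

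Second, treat the drift and the volatility barycenter. Write $\bar b^\omega=-\kappa x+\bar\mu^\omega$, so $\partial_x\bar\mu^\omega=\sum_k\mu_k\partial_x\omega_k$. When the $\sigma_k$ differ, the dominating expert as $|x|\to\infty$ is the one with the largest variance. The weights then converge exponentially fast in $x^2$ to indicator weights. Concretely, for $\sigma_k<\sigma_j$ the ratio $\omega_k/\omega_j\le C e^{-cx^2+C|x|}$. Hence
\[
|\partial_x\omega_k|\le C(1+|x|)\min_{j\neq k}\ldots
\]
where the relevant products $\omega_k\omega_j(\ldots)$ carry a factor $e^{-cx^2+C|x|}$ whenever $\sigma_k\neq\sigma_j$. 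For $\sigma_k=\sigma_j$ the difference $\partial_x\log\rho_k-\partial_x\log\rho_j$ is bounded, as in the common-volatility case. Splitting the sum over pairs $(k,j)$ into equal-variance and unequal-variance groups therefore gives a bounded $\partial_x\omega_k$ on $[t_0,T]\times\R$. I expect this to be the main obstacle: the bound must be uniform in $x$, and the cancellation must be organized group by group.

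Third, with bounded $\partial_x\omega_k$ in hand, the remaining regularity follows directly. The quantities $\bar\mu^\omega$, $\bar\sigma^\omega=\sum\omega_k\sigma_k$ and
\[
r=s_b^2=\sum\omega_k(\mu_k-\bar\mu^\omega)^2\le\Delta^2/4
\]
are globally Lipschitz in $x$. The drift difference $b_k-\bar b^\omega=\mu_k-\bar\mu^\omega$ is free of $x$, so $s_b^2$ is bounded. Time derivatives come from differentiating $q_t,c_t,v_{k,t}$, which are smooth with $v_{k,t}$ bounded below. They give $|\partial_t\omega_k|\le C(1+|x|^2)\omega_k(\ldots)$, and the same pairwise cancellation reduces this to $C(1+|x|)$. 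That yields the Lipschitz-in-time envelope, which implies \eqref{eq:time_regularity}.

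Fourth, handle the volatility selector. From \eqref{eq:reverse_scalar_equation}, $\sigma^{\leftarrow}=s(\bar\sigma^\omega,r)$ solves $\lambda(s-\bar\sigma)s^3=r$. The implicit function theorem gives
\[
\partial_r s=\frac{1}{\lambda s^2(4s-3\bar\sigma)},\qquad \partial_{\bar\sigma}s=\frac{s}{4s-3\bar\sigma}\in(0,1].
\]
Both are bounded because $s\ge\bar\sigma\ge\sigma_-$. The bounds \eqref{eq:reverse_OU_heterogeneous_bounds} follow from \Cref{cor:reverse_scalar}: we have $\bar\sigma^\omega\ge\sigma_-$, and
\[
s-\bar\sigma=\frac{r}{\lambda s^3}\le\frac{\Delta^2}{4\lambda\sigma_-^3}.
\]
Strictness holds because all weights are positive, so $r>0$ when $\Delta>0$.

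Finally, uniqueness and minimality use the pointwise decomposition of \Cref{thm:reverse_projection}, applied to the objective on $[t_0,T]$. The disintegration \eqref{eq:reverse_disintegration} with $\int_{t_0}^T$ involves only the fixed expert marginals and a strictly positive $\rho$. The candidate initial law at $t_0$ with finite second moment is used only to guarantee a well-posed strong solution under the verified Lipschitz and growth bounds.
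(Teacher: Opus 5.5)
Your proposal is correct and follows essentially the paper's route: the pairwise identity $\partial_z\omega_k=\sum_j\omega_k\omega_j\,\partial_z L_{kj}$ with $\omega_k\omega_j\le e^{-|L_{kj}|}$, Gaussian decay from the quadratic coefficient of $L_{kj}$ for unequal-variance pairs, the implicit-function bounds on $s(\bar\sigma^\omega,r)$, and then pointwise minimization on $[t_0,T]$. The one difference is in equal-variance pairs: you use that $\partial_x L_{kj}$ is bounded and $\partial_t L_{kj}$ grows linearly, whereas the paper gets exponential decay from the nonvanishing linear coefficient. This gives you $|\partial_t\omega_k|\le C(1+|x|)$ rather than a bounded time derivative, which is still enough for \eqref{eq:time_regularity} on the compact window.
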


\begin{proof}
With $q_t=e^{-\kappa t}$, $c_t=(1-q_t)/\kappa$, and
$d_t=(1-q_t^2)/(2\kappa)$, expert $k$ has law
$N(q_tx_0+c_t\mu_k,\sigma_k^2d_t)$. Write
$L_{ij}=\log(\pi_i\rho_i/(\pi_j\rho_j))$. For $z=x,t$,
\[
 \partial_z\omega_i=\sum_j\omega_i\omega_j\partial_zL_{ij},
 \qquad \omega_i\omega_j\le e^{-|L_{ij}|}.
\]
If $\sigma_i\ne\sigma_j$, the quadratic coefficient of $L_{ij}$
is bounded away from zero uniformly on $[t_0,T]$; if
$\sigma_i=\sigma_j$ but $\mu_i\ne\mu_j$, its linear coefficient is
$c_t(\mu_i-\mu_j)/(\sigma_i^2d_t)$ and is bounded away from zero.
In either case, exponential decay of $e^{-|L_{ij}|}$ in the tails
dominates the degree-at-most-two polynomials $\partial_zL_{ij}$,
whose coefficients are uniformly bounded on $[t_0,T]$.
For identical expert coefficients, $\partial_zL_{ij}=0$.
Thus all $\partial_x\omega_i$ and $\partial_t\omega_i$ are uniformly
bounded. It follows that
$\beta^{\leftarrow}=-\kappa x+\sum_k\omega_k\mu_k$,
$u:=\bar\sigma^\omega$, and
$r:=\sum_k\omega_k(\mu_k-\sum_j\omega_j\mu_j)^2$
have the required spatial and temporal regularity, with
$0\le r\le\Delta^2/4$ and $\sigma_-\le u\le\sigma_+$.
The solution $s=s(u,r)\ge u$ of $\lambda(s-u)s^3=r$ obeys
\[
 0<\partial_rs=\frac{1}{\lambda s^2(4s-3u)}
 \le\frac{1}{\lambda\sigma_-^3},\qquad
 0<\partial_us=\frac{s}{4s-3u}\le1.
\]
These estimates and \eqref{eq:reverse_OU_heterogeneous_bounds}
give the same regularity for $s$ and $s^2$, as well as bounded
uniform ellipticity. Positivity of the weights gives strict inflation
when $\Delta>0$. Pointwise minimization in
\eqref{eq:reverse_disintegration}, with integration restricted to
$[t_0,T]$, proves optimality and uniqueness.
\end{proof}

The densities in this corollary are those of the original experts
observed after time $t_0$; the experts are not restarted from a common
point at $t_0$. The conclusion concerns the truncated objective
\eqref{eq:reverse_truncated_objective} and does not assert uniform
regularity as $t_0\downarrow0$.

The right panel of \Cref{fig:orientation_volatility_corrections} illustrates the strictly increasing response of the expert-first consensus volatility to drift dispersion.

\begin{proposition}[Mimicking benchmark]\label{prop:mimicking}
Realize the arithmetic pool on an enlarged space by drawing a latent expert index $K$ with probabilities $\pi_k$ and then evolving
\[
 \dd X_t=b_K(t,X_t)\dd t+a_K(t,X_t)^{1/2}\dd W_t.
\]
Under the regularity and nondegeneracy hypotheses of the Markovian mimicking theorem, there exists a Markov diffusion with the same one-time marginals as $P^A$ and coefficients
\begin{equation}\label{eq:mimicking_coefficients}
 \widehat b^{\,\mathrm{mim}}(t,x)
 =\E[b_K(t,X_t)\mid X_t=x]
 =\sum_k\omega_k(t,x)b_k(t,x)=\bar b^\omega(t,x),
\end{equation}
\begin{equation}\label{eq:mimicking_covariance}
 \widehat a^{\,\mathrm{mim}}(t,x)
 =\E[a_K(t,X_t)\mid X_t=x]
 =\sum_k\omega_k(t,x)a_k(t,x).
\end{equation}
Consequently, the mimicking diffusion and the exact arithmetic pool assign the same expectation to every integrable European payoff $f(X_t)$ at each fixed maturity $t$.
\end{proposition}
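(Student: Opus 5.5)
The approach is to invoke the Markovian mimicking theorem of Gy\"ongy and Brunick--Shreve for the one-dimensional It\^o process obtained from the latent-index realization, and then to identify the conditional expectations with the occupancy weights $\omega_k$ of \eqref{eq:occupancy_weights}. First, I realize $P^A$ as in \Cref{prop:mixture_not_markov}. On a probability space carrying an independent index $K$ with $\Pp(K=k)=\pi_k$ and a Brownian motion $W$, let $X$ solve the $K$-th SDE. By Assumption~\ref{ass:regular} each conditional SDE has a unique strong solution, so $X$ is a well-defined It\^o process. Its law is $\sum_k\pi_kP_k=P^A$, with adapted drift $\beta_t=b_K(t,X_t)$ and diffusion rate $\alpha_t=a_K(t,X_t)$. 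I take the filtration generated by $K$ and $W$, with $K$ being $\mathcal F_0$-measurable, so that $W$ remains a Brownian motion.

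Second, I apply the mimicking theorem to this It\^o process. The boundedness and uniform ellipticity of $a_K$, together with the linear growth of $b_K$, are exactly the nondegeneracy and integrability hypotheses invoked in the statement. The theorem gives a Markov diffusion $\widehat X$ with the same one-time marginals and coefficients $\widehat b(t,x)=\E[\beta_t\mid X_t=x]$ and $\widehat a(t,x)=\E[\alpha_t\mid X_t=x]$. Which existence statement is used, Gy\"ongy's weak one or Brunick--Shreve's, I leave as part of the assumed hypotheses; this bookkeeping is the only delicate point.

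Third, I compute the conditional expectations. The joint law of $(K,X_t)$ has density $\pi_k\rho_k(t,x)$ with respect to counting measure $\otimes\,\nu_t$. Bayes' formula then gives $\Pp(K=k\mid X_t=x)=\omega_k(t,x)$ on $\{\rho>0\}$. Because $b_K(t,X_t)=\sum_k\mathbf 1_{\{K=k\}}b_k(t,X_t)$, and $b_k(t,X_t)$ is a function of $X_t$, I can pull it out of the conditional expectation:
\[
\E[b_K(t,X_t)\mid X_t]=\sum_k b_k(t,X_t)\,\Pp(K=k\mid X_t)=\bar b^\omega(t,X_t).
\]
The same computation with $a_k$ in place of $b_k$ yields \eqref{eq:mimicking_covariance}. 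Both identities hold $\rho\,\nu_t$-a.e., and versions off the support are irrelevant.

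Finally, equality of one-time marginals gives $\E f(\widehat X_t)=\E^{P^A}f(X_t)$ for every integrable $f$. This follows by definition of the law for indicator functions, then by linearity and monotone convergence applied separately to $f^\pm$. I expect the main obstacle to be only the verification of the mimicking hypotheses, especially measurability and the existence of continuous densities; here I rely on the standing assumption on $\rho_k$ and on the hypotheses named in the statement, rather than proving uniqueness of the mimicking SDE.
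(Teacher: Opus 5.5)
Your proposal is correct and follows the paper's own two-step argument: Bayes' rule with the occupancy weights \eqref{eq:occupancy_weights} gives the conditional-expectation formulas, and the Gy\"ongy/Brunick--Shreve mimicking theorem gives the marginal-matching Markov diffusion. Two small cautions, neither of which affects the argument because the statement assumes the mimicking hypotheses: linear-growth drift is not literally Gy\"ongy's bounded-coefficient hypothesis (Brunick--Shreve's integrability conditions do cover it), and the state process is $d$-dimensional rather than one-dimensional.
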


\begin{proof}
The conditional-expectation formulas follow from Bayes' rule and \eqref{eq:occupancy_weights}.  Existence of a weak Markov diffusion with the same fixed-time marginals follows from \cite{BrunickShreve2013,Gyongy1986}.
\end{proof}

\begin{remark}[Mimicking versus the expert-first projection]\label{rem:mimicking_comparison}
The mimicking benchmark and the expert-first projection agree on drift:
\[
 \widehat b^{\,\mathrm{mim}}=\beta^\leftarrow=\bar b^\omega.
\]
They generally disagree on covariance.  Mimicking uses the posterior arithmetic mean of covariance matrices in \eqref{eq:mimicking_covariance}; the expert-first variational projection uses \eqref{eq:reverse_cov_selector}.  If drift dispersion vanishes, the latter is the Bures--Wasserstein barycenter.  In one dimension,
\[
 a^{\leftarrow}=\left(\sum_k\omega_k\sigma_k\right)^2
 \le \sum_k\omega_k\sigma_k^2=\widehat a^{\,\mathrm{mim}},
\]
with equality only when the expert volatilities coincide.  If drift dispersion is positive, \Cref{cor:reverse_scalar} raises $\sigma^\leftarrow$ above the weighted mean of expert volatilities, and its variance may lie below or above the mimicking variance depending on the size of the dispersion term.

Mimicking is appropriate when the objective is exact preservation of one-time marginals and therefore of European prices.  The expert-first projection answers a different question: it is the unique Markov coefficient field minimizing the expert-first objective, whose drift-dispersion penalty follows from the reference-covariance normalization and \eqref{eq:reverse_ANOVA}.  It need not preserve one-time marginals, but it retains the paper's drift--covariance criterion and reacts to disagreement in drifts, which the mimicking covariance ignores.  The two surrogates can therefore differ for path-dependent pricing, dynamic hedging, and stochastic control even when they agree on the current-state posterior drift.
\end{remark}

\section{Candidate-first barycenters in the diffusion class}\label{sec:barycenter}

We now analyze the candidate-first objective $J^{\rightarrow}$ in \eqref{eq:forward_IT_objective}.  This is the geometric orientation: its common-covariance limiting case is the barycenter in \cite{JaimungalPesenti2026}, while heterogeneous covariance is handled by transport.  Let $m_k=(b_k,a_k)$, $k=1,\ldots,K$, satisfy Assumption \ref{ass:regular}, with $\pi_k>0$ and $\sum_k\pi_k=1$.  A candidate model is controlled through its local characteristics $(\beta_t,A_t)$.  The drift takes values in a compact convex set $\mathcal B(t,x)\subset\R^d$.  The covariance takes values in a compact \emph{Euclidean-convex} set $\mathcal A(t,x)\subset\Spp$ containing the expert covariances.  Canonical choices are an ellipticity box
\[
 \{A\in\mathbb S^d:\underline a I_d\le A\le\overline a I_d\}
\]
or an affine financial slice such as the stochastic-volatility set in \Cref{cor:Heston_slice}.  Euclidean convexity, rather than geodesic convexity for the Bures--Wasserstein metric, is the relevant condition for uniqueness of the Hessian-corrected local selector.

For a controlled weak solution
\[
 \dd X_s=\beta_s\dd s+A_s^{1/2}\dd W_s,
 \qquad (\beta_s,A_s)\in\mathcal B(s,X_s)\times\mathcal A(s,X_s),
\]
define the running barycenter cost
\begin{equation}\label{eq:running_bary}
 L_\lambda(t,x;\beta,A)
 :=\frac12\sum_{k=1}^K\pi_k
 (\beta-b_k)^\top a_k^{-1}(\beta-b_k)(t,x)
 +\frac\lambda2\sum_{k=1}^K\pi_k\BW^2(A,a_k(t,x)).
\end{equation}
The value function is
\begin{equation}\label{eq:value_bary}
 V(t,x):=\inf_{(\beta,A)}
 \E_{t,x}\int_t^T L_\lambda(s,X_s;\beta_s,A_s)\dd s.
\end{equation}

\begin{assumption}[Compact admissible characteristics]\label{ass:control}
The multifunctions $\mathcal B$ and $\mathcal A$ have nonempty compact convex values, measurable graphs, and are continuous in $(t,x)$ in the Hausdorff topology.  There is a constant $C$ such that
\[
 \sup_{\beta\in\mathcal B(t,x)}|\beta|\le C(1+|x|),
 \qquad
 \underline a I_d\le A\le\overline a I_d
 \quad\text{for every }A\in\mathcal A(t,x).
\]
The expert drifts are globally Lipschitz with linear growth, and the expert covariance fields are bounded, globally Lipschitz, and satisfy the same ellipticity bounds. In addition, the admissible sets have a fixed compact control parametrization: there are compact metric spaces $U_b,U_a$ and jointly continuous maps $\beta(t,x,u)$ and $A(t,x,v)$ whose images are $\mathcal B(t,x)$ and $\mathcal A(t,x)$. These maps are globally Lipschitz in $x$, uniformly in time and control, and satisfy \eqref{eq:time_regularity} uniformly in the control parameters. The expert coefficients satisfy the same time bound.
\end{assumption}

\begin{theorem}[Existence and Hamilton--Jacobi--Bellman characterization]\label{thm:HJB}
Under Assumption \ref{ass:control}, the infimum in \eqref{eq:value_bary} is attained by a Markov control.  The value function is the unique continuous viscosity solution with at most quadratic growth of
\begin{equation}\label{eq:HJB}
 \partial_tV+
 \inf_{\substack{\beta\in\mathcal B(t,x)\\A\in\mathcal A(t,x)}}
 \left\{
 \beta^\top\nabla V+\frac12\tr(A D^2V)
 +L_\lambda(t,x;\beta,A)
 \right\}=0,
 \qquad V(T,x)=0.
\end{equation}
If the value function is classical, any measurable minimizing feedback whose controlled equation admits an admissible weak solution is optimal.
\end{theorem}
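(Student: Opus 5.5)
The plan is to treat \eqref{eq:value_bary} as a standard finite-horizon stochastic control problem with compact control sets, and to establish the three claims in turn: existence, the viscosity characterization, and verification.

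\textbf{Reduction to standard form.} Through the parametrization in Assumption~\ref{ass:control}, the controls are $(u,v)\in U_b\times U_a$. The drift $\beta(t,x,u)$ is Lipschitz in $x$ with linear growth. The diffusion coefficient is $A(t,x,v)^{1/2}$, and it is Lipschitz in $x$ because the principal square root is Lipschitz on the ellipticity box $\underline aI_d\le A\le\overline aI_d$. Its time modulus is inherited from \eqref{eq:time_regularity}.

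The running cost $L_\lambda$ is continuous and has at most quadratic growth in $x$, uniformly in the controls. The drift part is bounded by $C(1+|x|^2)$ since $a_k^{-1}\le\underline a^{-1}I_d$. The transport part is bounded because all covariances are bounded. Continuity of $A\mapsto\BW^2(A,a_k)$ and Lipschitz continuity in $x$ of the expert coefficients give local Lipschitz bounds of the same type as in the proof of \Cref{thm:euler_conv}.

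With these estimates and the moment bounds of the controlled SDE, I will show that $V$ is continuous with quadratic growth. Continuity comes from synchronous coupling of two initial points under a common control, together with $L^2$ stability estimates.

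\textbf{Existence of an optimal Markov control.} I would pass to relaxed controls, meaning measure-valued $\mu_s(\dd u\,\dd v)$, or equivalently to weak formulations on the canonical path space. The laws of the pairs (state, relaxed control) are tight because the coefficients have linear growth and the control spaces are compact. The cost is lower semicontinuous along this convergence, so a relaxed optimum exists.

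To remove relaxation, I use the Roxin--Filippov convexity condition. The averaged characteristics $\int\beta\,\dd\mu$ and $\int A\,\dd\mu$ lie in $\mathcal B(t,x)$ and $\mathcal A(t,x)$ by \emph{Euclidean} convexity; this is exactly where Euclidean rather than geodesic convexity is needed. Moreover, $L_\lambda$ is a sum of a convex function of $\beta$ and a convex function of $A$ (\Cref{prop:basic}(ii) and \cite{BhatiaJainLim2019}), hence jointly convex. Jensen therefore shows that replacing the relaxed control by its mean strict control does not increase the cost.

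Finally, the optimal control may depend on the whole path. I would replace $(\beta_s,A_s)$ by the conditional expectations $\E[\beta_s\mid X_s]$ and $\E[A_s\mid X_s]$. By the Markovian projection theorem \cite{Gyongy1986,BrunickShreve2013} (or the El Karoui--Nguyen--Jeanblanc argument), this produces a Markov control with the same one-time marginals. Because the expected running cost depends only on the marginals and $L_\lambda$ is convex, conditional Jensen gives a cost no larger. The projected control remains admissible because the admissible sets are convex and closed.

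I expect this projection step to be the main obstacle. Uniform ellipticity and the boundedness of the projected covariance should supply the nondegeneracy required by the mimicking theorem. However, the projected coefficients are only measurable, so the resulting controlled equation has to be understood as a weak solution. This is why the theorem is stated in the weak formulation.

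\textbf{Viscosity characterization and verification.} The dynamic programming principle follows from the standard weak-formulation arguments via measurable selection. Combined with continuity of $V$, the usual test-function argument shows that $V$ is a viscosity solution of \eqref{eq:HJB}.

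Uniqueness in the class of continuous functions with quadratic growth follows from a comparison principle. The Hamiltonian is an infimum over compact sets of functions that are affine in $(\nabla V,D^2V)$. Its coefficients are Lipschitz in $x$, with the diffusion given through a Lipschitz square root, and the running cost is locally Lipschitz with quadratic growth. The standard Ishii-lemma proof applies after the weight change $V\mapsto e^{-ct}V/(1+|x|^2)$ or an equivalent quadratic-growth penalization.

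For the last claim, suppose $V$ is classical and $(\beta^*,A^*)$ is a measurable minimizing feedback with an admissible weak solution. Apply It\^o's formula to $V(s,X_s)$ along that solution and localize using the moment bounds. The HJB equation then gives equality with the cost of the feedback. Applying the same It\^o argument to any other admissible control gives the inequality, which proves optimality.
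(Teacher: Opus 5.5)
Your proposal is correct and shares the paper's skeleton. Both use uniform moment bounds, relaxed controls and tightness, and Euclidean convexity of $\mathcal B,\mathcal A$ together with convexity of $L_\lambda$ to replace relaxed by averaged characteristics. Both then use dynamic programming, comparison in the quadratic-growth class, and It\^o verification. You depart from the paper in two places.

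\textbf{The Markov property.} The paper invokes the Markov-selection part of Haussmann--Lepeltier. You instead project the optimal open-loop characteristics onto the current state and use the mimicking theorem with conditional Jensen. Your route is more explicit: the Markov control is $\bigl(\E[\beta_s\mid X_s],\E[A_s\mid X_s]\bigr)$. The cost comparison uses exactly two facts the paper relies on: $L_\lambda$ is separable in $(\beta,A)$, because the drift term is normalized by the expert covariances, and the admissible sets are Euclidean-convex. It should rest on Brunick--Shreve rather than Gy\"ongy, since the drift has only linear growth. Their theorem needs only $\E\int_0^T(|\beta_s|+|A_s|)\dd s<\infty$, so your nondegeneracy argument is unnecessary. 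Mimicking produces a Markov control separately for each initial point $(t,x)$, which is all the statement asks for. The selection route is designed to give an optimal Markov family consistent across initial conditions.

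\textbf{Uniqueness.} The paper time-reverses the equation and verifies the hypotheses of Da Lio--Ley, whose comparison theorem is stated directly for solutions of quadratic growth. Your weighted reduction to bounded solutions is standard. However, you then have to redo the doubling-of-variables argument with a linear-growth drift and a running cost whose local Lipschitz constant grows like $1+|x|^2$; Da Lio--Ley package exactly this. The exponential weight must also make the zeroth-order term monotone in the backward direction: use $e^{ct}$ in calendar time, or $e^{-c\tau}$ after $\tau=T-t$, not $e^{-ct}$ in calendar time.

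Finally, proving continuity of $V$ by synchronous coupling under a common open-loop control tacitly uses the standard equality of weak- and strong-formulation values. The paper takes continuity from the Fleming--Soner compact-control theory instead.
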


\begin{proof}
Linear growth of the drift and boundedness of the covariance give, uniformly
over admissible controls,
\[
 \E_{t,x}\sup_{t\le s\le T}|X_s|^q\le C_q(1+|x|^q),\qquad q\ge2.
\]
The running cost is continuous, nonnegative, and bounded above by
$C(1+|x|^2)$. Compactness of the control parametrization gives tightness
for relaxed controls, and these moment bounds permit localization of
the quadratic cost. The epigraph of $L_\lambda$ in the admissible
characteristics $(\beta,A)$ is convex: the drift term is quadratic,
and $\BW^2(\cdot,a_k)$ is convex. Thus averaging relaxed
characteristics preserves the generator and cannot increase the cost.
The relaxed-control existence and Markov-selection result of
\cite{HaussmannLepeltier1990} gives an optimal weak Markov control.
The dynamic programming principle and viscosity characterization follow
from the compact-control theory in \cite{FlemingSoner2006}, with the
same moment bounds used to remove localization. In particular,
$V$ is continuous and $0\le V(t,x)\le C(1+|x|^2)$.

For comparison, set $w(\tau,x)=V(T-\tau,x)$. Its equation is
$\partial_\tau w+\mathcal G(\tau,x,Dw,D^2w)=0$, where
\[
 \mathcal G(\tau,x,p,R)=\sup_{u\in U_b,\,v\in U_a}
 \left\{-\beta^\top p-\tfrac12\tr(AR)
             -L_\lambda(T-\tau,x;\beta,A)\right\},
\]
and the coefficients are evaluated at $(T-\tau,x,u,v)$.
This is the Hamiltonian $G$ in \cite{DaLioLey2006}, with their
$H\equiv0$, $g=\beta$, $f=L_\lambda$, and $c=A^{1/2}/\sqrt2$.
Their conditions (A2)(i)--(iv) hold: the control space is compact
(and embeds as a bounded subset of a Banach space); $g$ is uniformly
Lipschitz in $x$ with linear growth; $f$ is locally uniformly
continuous in $x$ with quadratic growth; and $c$ is uniformly
Lipschitz and bounded, since the square-root map is Lipschitz on
the fixed ellipticity box. The zero initial datum satisfies their
(A3). The comparison theorem explicitly assumes
$|U(t,x)|,|W(t,x)|\le\widehat C(1+|x|^2)$; see
\cite[Theorem~2.1, condition~(16)]{DaLioLey2006}.
This includes quadratic growth, so it applies to the stated solution class and gives uniqueness. Finally, localized It\^o's formula and the moment bounds
give the verification assertion at the classical value function.
\end{proof}

\begin{remark}[Interior classical regularity]\label{rem:classical_regularity}
The drift formula below requires a spatial gradient; the covariance formula requires a spatial Hessian. We state the combined selector at classical points.  Under the standard stronger assumptions of parabolic Bellman regularity theory -- for example, H\"older-smooth data, a fixed uniformly elliptic covariance box, and sufficient regularity of the admissible multifunctions -- the Hamiltonian in \eqref{eq:HJB} is uniformly elliptic and concave in $D^2V$.  Parabolic Evans--Krylov estimates and subsequent Schauder theory then give interior $C^{1+\alpha/2,\,2+\alpha}$ regularity away from the terminal boundary; see \cite{Krylov1987}.  In that regime \Cref{prop:selectors,cor:hessian_selector} hold pointwise on compact interior cylinders rather than merely conditionally on the existence of a classical solution.
\end{remark}

Define
\begin{equation}\label{eq:Gg}
 G(t,x):=\sum_{k=1}^K\pi_k a_k(t,x)^{-1},
 \qquad
 g(t,x):=\sum_{k=1}^K\pi_k a_k(t,x)^{-1}b_k(t,x).
\end{equation}

\begin{proposition}[Optimal local characteristics]\label{prop:selectors}
At a classical point of $V$, if the unconstrained drift minimizer belongs to $\mathcal B(t,x)$, then
\begin{equation}\label{eq:optimal_drift}
 \beta^*(t,x)=G(t,x)^{-1}\big(g(t,x)-\nabla V(t,x)\big).
\end{equation}
The covariance selector is the unique minimizer over the compact convex set $\mathcal A(t,x)$ of
\begin{equation}\label{eq:optimal_cov}
 A\longmapsto
 \frac12\tr\big(A D^2V(t,x)\big)
 +\frac\lambda2\sum_{k=1}^K\pi_k\BW^2(A,a_k(t,x)).
\end{equation}
No commutativity of the covariance matrices is required.
\end{proposition}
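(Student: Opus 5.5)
The plan is to read both selectors off the pointwise minimization in the HJB \eqref{eq:HJB}. At a classical point $(t,x)$, the quantity to minimize is
\[
 \beta^\top\nabla V+\tfrac12\tr(AD^2V)+L_\lambda(t,x;\beta,A)
 =\Big[\beta^\top\nabla V+\tfrac12\sum_k\pi_k(\beta-b_k)^\top a_k^{-1}(\beta-b_k)\Big]
 +\Big[\tfrac12\tr(AD^2V)+\tfrac\lambda2\sum_k\pi_k\BW^2(A,a_k)\Big].
\]
The first bracket involves only $\beta$ and the second only $A$. The admissible set is the product $\mathcal B(t,x)\times\mathcal A(t,x)$, so the joint infimum is the sum of two separate infima. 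Each selector can therefore be treated on its own.

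\textbf{Drift.} The drift bracket is a quadratic in $\beta$ with Hessian $G=\sum_k\pi_ka_k^{-1}\succ0$, so it is strictly convex. Its gradient is $\nabla V+G\beta-g$. Setting this to zero gives the unique unconstrained minimizer $\beta^*=G^{-1}(g-\nabla V)$, which is \eqref{eq:optimal_drift}. If this point lies in $\mathcal B(t,x)$, it is also the constrained minimizer.

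\textbf{Covariance, existence.} The covariance bracket $\Phi(A)=\tfrac12\tr(AD^2V)+\tfrac\lambda2\sum_k\pi_k\BW^2(A,a_k)$ is continuous on $\Spp$. A minimizer over the compact set $\mathcal A(t,x)$ therefore exists.

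\textbf{Covariance, uniqueness.} This is the step that needs care. The term $\tr(AD^2V)$ is linear in $A$, so it adds no curvature, even when $D^2V$ is indefinite. Strict convexity must come from the transport term:
\[
 \BW^2(A,a_k)=\tr A+\tr a_k-2\tr\big[(a_k^{1/2}Aa_k^{1/2})^{1/2}\big].
\]
Here $\tr A$ is linear in $A$. The map $A\mapsto a_k^{1/2}Aa_k^{1/2}$ is an invertible linear map on symmetric matrices, because $a_k\succ0$. The function $C\mapsto\tr C^{1/2}$ is strictly concave on $\Spp$; this is the trace-square-root result cited in the proof of \Cref{thm:reverse_projection} \cite[Theorem~7]{BhatiaJainLim2019}.
\begin{itemize}
\item Composing a strictly concave function with an injective linear map keeps it strictly concave, so $-2\tr[(a_k^{1/2}Aa_k^{1/2})^{1/2}]$ is strictly convex in $A$.
\item Each $\BW^2(\cdot,a_k)$ is therefore strictly convex on $\Spp$.
\item Since $\pi_k>0$ and $\lambda>0$, $\Phi$ is strictly convex on $\Spp$.
\end{itemize}
The set $\mathcal A(t,x)\subset\Spp$ is convex in the ordinary Euclidean sense. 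A strictly convex function has at most one minimizer on a Euclidean-convex set, so the minimizer is unique. This is the reason Euclidean convexity, rather than geodesic convexity, is the hypothesis the statement needs.

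\textbf{No commutativity needed.} Nothing in the argument diagonalizes the $a_k$ or $A$ simultaneously. Strict concavity of $\tr C^{1/2}$ holds for all positive definite $C$, so the conclusion holds for arbitrary, non-commuting covariance matrices.
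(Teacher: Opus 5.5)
Your proposal is correct and follows the paper's proof essentially step for step. The drift selector comes from the first-order condition for the strictly convex quadratic $\tfrac12\beta^\top G\beta-\beta^\top g+\beta^\top\nabla V$, and the covariance selector's existence and uniqueness come from continuity and compactness together with strict concavity of $C\mapsto\tr C^{1/2}$ on $\Spp$ composed with the injective congruence $A\mapsto a_k^{1/2}Aa_k^{1/2}$ and the Euclidean convexity of $\mathcal A(t,x)$, with your explicit separation of the Hamiltonian over the product set $\mathcal B(t,x)\times\mathcal A(t,x)$ being a detail the paper leaves implicit.
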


\begin{proof}
The drift-dependent part of the Hamiltonian is
\[
 \frac12\beta^\top G\beta-\beta^\top g+\beta^\top\nabla V+\text{constant},
\]
so its first-order condition gives \eqref{eq:optimal_drift}.  For fixed $B\in\Spp$, the map
\[
 A\longmapsto \tr\!\left[(B^{1/2}AB^{1/2})^{1/2}\right]
\]
is strictly concave on $\Spp$.  Indeed, the strict concavity result in \cite[Theorem~7]{BhatiaJainLim2019} states that $C\mapsto\tr(C^{1/2})$ is strictly concave on $\Spp$, and the affine congruence $A\mapsto B^{1/2}AB^{1/2}$ is injective because $B\succ0$.  Therefore $A\mapsto\BW^2(A,B)$ is strictly convex.  The positive weighted sum in \eqref{eq:optimal_cov}, plus its linear Hessian term, remains strictly convex.  Continuity and compactness give existence, while Euclidean convexity gives uniqueness.
\end{proof}

\subsection{The matrix covariance selector}

To give the selector in the original state metric as well, we allow general $M$ in the next theorem. For $A,B\in\Spp$, write
\begin{equation}\label{eq:fidelity}
 \Fid(A,B):=\tr\!\left[(B^{1/2}AB^{1/2})^{1/2}\right].
\end{equation}

\begin{lemma}[Fidelity invariance under paired congruence transformations]\label{lem:fidelity_congruence}
Let $A,B,C\in\Spp$ and set
\[
 X=C^{1/2}AC^{1/2},
 \qquad \widetilde B=C^{-1/2}BC^{-1/2}.
\]
Then
\begin{equation}\label{eq:fidelity_invariance}
 \Fid(A,B)=\Fid(X,\widetilde B).
\end{equation}
\end{lemma}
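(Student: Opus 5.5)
The plan is to express the fidelity as a trace of a square root of a matrix that is similar to $BA$. Then the two sides of \eqref{eq:fidelity_invariance} can be compared through a similarity transformation.

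\emph{Step 1.} Show that for $A,B\in\Spp$,
\[
\Fid(A,B)=\sum_i \sqrt{\lambda_i(AB)},
\]
where the $\lambda_i(AB)$ are the eigenvalues of $AB$. Indeed, $B^{1/2}AB^{1/2}$ is positive definite, and
\[
B^{1/2}AB^{1/2}=B^{-1/2}(BA)B^{1/2}
\]
is similar to $BA$. Hence the eigenvalues of $B^{1/2}AB^{1/2}$ are exactly those of $BA$, which coincide with those of $AB$. They are all positive. The trace of the principal square root of a positive definite matrix is the sum of the square roots of its eigenvalues, which gives the formula.

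\emph{Step 2.} Apply Step 1 to the pair $(X,\widetilde B)$. A direct computation gives
\[
X\widetilde B=C^{1/2}AC^{1/2}C^{-1/2}BC^{-1/2}=C^{1/2}(AB)C^{-1/2},
\]
so $X\widetilde B$ is similar to $AB$ and has the same spectrum. Both $X$ and $\widetilde B$ lie in $\Spp$ because congruence by an invertible matrix preserves positive definiteness, so Step 1 applies to this pair as well. Therefore
\[
\Fid(X,\widetilde B)=\sum_i\sqrt{\lambda_i(X\widetilde B)}=\sum_i\sqrt{\lambda_i(AB)}=\Fid(A,B).
\]

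The only step requiring care is the spectral identity in Step 1: the square-root trace depends only on eigenvalues, and these are preserved under the similarity. This is routine, so no real obstacle is expected.
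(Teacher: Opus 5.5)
Your proof is correct and follows essentially the same route as the paper: both reduce the two fidelity arguments by similarity to $AB$, using $X\widetilde B=C^{1/2}(AB)C^{-1/2}$, and conclude from the spectrum that the traces of the square roots agree. The only cosmetic difference is that you pass through $BA$ and its shared spectrum with $AB$, whereas the paper writes $B^{1/2}AB^{1/2}=B^{1/2}(AB)B^{-1/2}$ as a direct similarity to $AB$.
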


\begin{proof}
The positive definite matrix $B^{1/2}AB^{1/2}$ is similar to $AB$.  Likewise, $\widetilde B^{1/2}X\widetilde B^{1/2}$ is similar to $X\widetilde B$.  But
\[
 X\widetilde B=C^{1/2}AB C^{-1/2},
\]
so $X\widetilde B$ is similar to $AB$.  The two symmetric positive definite fidelity arguments therefore have the same positive eigenvalues.  Taking positive square roots and summing the eigenvalues proves \eqref{eq:fidelity_invariance}.
\end{proof}

\begin{theorem}[Congruence representation and unrestricted finiteness]
\label{thm:congruence_selector}
Here $\operatorname{Bar}_{\mathrm{BW}}$ denotes the ordinary Bures--Wasserstein barycenter. Fix $B_1,\ldots,B_K\in\Spp$, $M\in\Spp$, positive weights
$\pi_k$ with $\sum_k\pi_k=1$, a symmetric matrix $H$, and $\lambda>0$.
Define
\begin{equation}\label{eq:Psi_H}
 \Psi_{H,M}(A):=\frac12\tr(HA)
 +\frac\lambda2\sum_{k=1}^K\pi_kd_M^2(A,B_k),
 \qquad A\in\Spp,
\end{equation}
and put $D=M+\lambda^{-1}H$.
\begin{enumerate}[label=(\roman*),leftmargin=2em]
\item If $D\succ0$, set
\[
 X=D^{1/2}AD^{1/2},\qquad
 \widetilde B_k=D^{-1/2}MB_kMD^{-1/2}.
\]
Then
\begin{equation}\label{eq:congruence_identity}
 \Psi_{H,M}(A)
 =\frac\lambda2\sum_k\pi_k\BW^2(X,\widetilde B_k)+K_{D,M},
\end{equation}
where
\begin{equation}\label{eq:K_C}
 K_{D,M}:=\frac\lambda2\sum_k\pi_k
 \big\{\tr(MB_k)-\tr(D^{-1}MB_kM)\big\}
\end{equation}
does not depend on $A$.  The unique minimizer is
\begin{equation}\label{eq:Astar_congruence}
 A^*=D^{-1/2}
 \operatorname{Bar}_{\mathrm{BW}}
   (\widetilde B_1,\ldots,\widetilde B_K;\pi)
 D^{-1/2}.
\end{equation}
The transformed barycenter $X^*$ is equivalently the unique positive
definite solution of
\begin{equation}\label{eq:transformed_fixed_point}
 X^*=\sum_k\pi_k
 \big((X^*)^{1/2}\widetilde B_k(X^*)^{1/2}\big)^{1/2}.
\end{equation}
\item If $D\not\succ0$, then
\begin{equation}\label{eq:unbounded_selector}
 \inf_{A\in\Spp}\Psi_{H,M}(A)=-\infty.
\end{equation}
Thus $H+\lambda M\succ0$ is necessary and sufficient for a finite
unrestricted local covariance infimum.
\end{enumerate}
\end{theorem}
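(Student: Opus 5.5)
The plan is to rewrite the objective so that it depends on $A$ only through a linear trace term and fidelity terms, and then absorb the linear term into a congruence. First I will express $d_M^2$ through the fidelity \eqref{eq:fidelity}. From \eqref{eq:metric_bures} and \eqref{eq:bures},
\[
 d_M^2(A,B)=\tr(MA)+\tr(MB)-2\Fid\big(M^{1/2}AM^{1/2},M^{1/2}BM^{1/2}\big).
\]
The last argument simplifies by the same similarity trick as in \Cref{lem:fidelity_congruence}. The positive eigenvalues of $(M^{1/2}BM^{1/2})^{1/2}M^{1/2}AM^{1/2}(M^{1/2}BM^{1/2})^{1/2}$ are those of $M^{1/2}AMBM^{1/2}$, hence of $A(MBM)$. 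Therefore $\Fid(M^{1/2}AM^{1/2},M^{1/2}BM^{1/2})=\Fid(A,MBM)$. Collecting the terms linear in $A$ gives
\[
 \Psi_{H,M}(A)=\frac\lambda2\tr(DA)+\frac\lambda2\sum_k\pi_k\tr(MB_k)-\lambda\sum_k\pi_k\Fid(A,MB_kM),
\]
with $D=M+\lambda^{-1}H$. This single formula drives both parts of the theorem.

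For (i), with $D\succ0$, I apply \Cref{lem:fidelity_congruence} with $C=D$. This gives $\Fid(A,MB_kM)=\Fid(X,\widetilde B_k)$, where $X=D^{1/2}AD^{1/2}$ and $\widetilde B_k=D^{-1/2}MB_kMD^{-1/2}$. Moreover $\tr(DA)=\tr X$. Since $\BW^2(X,\widetilde B_k)=\tr X+\tr\widetilde B_k-2\Fid(X,\widetilde B_k)$ and $\tr\widetilde B_k=\tr(D^{-1}MB_kM)$, substituting yields \eqref{eq:congruence_identity} with the constant \eqref{eq:K_C}. The map $A\mapsto X$ is a bijection of $\Spp$. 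Hence minimizing $\Psi_{H,M}$ is equivalent to minimizing the ordinary weighted Bures--Wasserstein barycenter functional at positive definite data $\widetilde B_k$. Its minimizer exists, is unique and positive definite, and is characterized by the fixed-point equation \eqref{eq:transformed_fixed_point}, by Agueh--Carlier \cite{AguehCarlier2011}. Uniqueness also follows from the strict convexity used in \Cref{prop:selectors}. Undoing the congruence gives \eqref{eq:Astar_congruence}.

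For (ii), suppose $D\not\succ0$. Then there is a unit vector $v$ with $v^\top Dv\le0$, and I use the test family $A_t=I_d+t\,vv^\top$ for $t\to\infty$. The linear term is $\frac\lambda2(\tr D+t\,v^\top Dv)$. The fidelity terms are nonnegative, and each grows at least like $\sqrt t$. Indeed, $\Fid(A_t,C)\ge\lambda_{\max}\big((C^{1/2}A_tC^{1/2})^{1/2}\big)\ge\sqrt{t\,v^\top Cv}$ for $C=MB_kM\succ0$. If $v^\top Dv<0$, the linear term already tends to $-\infty$, and the subtracted fidelities only help. The main obstacle is the boundary case $v^\top Dv=0$. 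There the linear term stays bounded, and $-\lambda\sum_k\pi_k\Fid(A_t,MB_kM)\le-c\sqrt t\to-\infty$ gives \eqref{eq:unbounded_selector}. Combined with (i), which gives a finite infimum when $D\succ0$, this proves the stated equivalence with $H+\lambda M\succ0$.
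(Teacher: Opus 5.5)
Your proposal is correct and follows the paper's proof essentially step for step: you rewrite $d_M^2$ through $\Fid(A,MBM)$, collect the linear terms into $\tr(DA)$, apply paired congruence with $C=D$ and the Agueh--Carlier barycenter theory for (i), and use the ray $A_t=I_d+t\,vv^\top$ for (ii). The only differences are cosmetic: in the boundary case you need only $v^\top Dv=0$ rather than $v\in\ker D$, and you bound the fidelity below by the largest eigenvalue of $(C^{1/2}A_tC^{1/2})^{1/2}$ rather than by the Frobenius-norm estimate $\sqrt{\tr(A_tR_k)}$.
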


\begin{proof}
The Gaussian transport formula implies
\begin{equation}\label{eq:metric_fidelity}
 d_M^2(A,B)=\tr(MA)+\tr(MB)-2\Fid(A,MBM).
\end{equation}
Indeed, the products of the two whitened covariance matrices in
\eqref{eq:metric_bures} and the product $A(MBM)$ are similar; their
positive square-root eigenvalues therefore have the same sum.  Since
the weights sum to one, expansion gives
\begin{equation}\label{eq:metric_selector_expansion}
 \Psi_{H,M}(A)=\frac\lambda2\tr(DA)
 +\frac\lambda2\sum_k\pi_k\tr(MB_k)
 -\lambda\sum_k\pi_k\Fid(A,MB_kM).
\end{equation}
If $D\succ0$, then $\tr(DA)=\tr(X)$ and paired congruence
invariance from \Cref{lem:fidelity_congruence} gives
$\Fid(A,MB_kM)=\Fid(X,\widetilde B_k)$.  Adding and subtracting
$\frac\lambda2\sum_k\pi_k\tr(\widetilde B_k)$ proves
\eqref{eq:congruence_identity}.  The map $A\mapsto X$ is a bijection
of $\Spp$, and all $\widetilde B_k$ are positive definite.  Existence,
uniqueness, and the fixed-point equation for their Bures--Wasserstein
barycenter follow from \cite{AguehCarlier2011,AlvarezEsteban2016}.

If a unit vector $v$ satisfies $v^\top Dv<0$, take
$A_t=I_d+t vv^\top$.  The first term in
\eqref{eq:metric_selector_expansion} tends linearly to $-\infty$,
and the fidelity terms are nonnegative and subtracted.  Hence the
objective tends to $-\infty$.

The remaining case is $D\succeq0$ singular.  Choose a unit vector
$v\in\ker D$ and use the same ray.  The linear term is constant,
whereas, with $R_k=MB_kM\succ0$,
\[
 \Fid(A_t,R_k)
 =\|A_t^{1/2}R_k^{1/2}\|_*
 \ge\|A_t^{1/2}R_k^{1/2}\|_F
 =\sqrt{\tr(A_tR_k)}
 \ge\sqrt{t\,v^\top R_kv}.
\]
Each positively weighted fidelity contribution therefore forces the
objective to $-\infty$.  This proves (ii).
\end{proof}

\begin{remark}[Relation to Gaussian Wasserstein proximal steps]\label{rem:JKO_proximal}
For $M=I_d$ and $K=1$, the Bures--Wasserstein barycenter in transformed coordinates is simply $\widetilde B_1$, and \eqref{eq:Astar_congruence} reduces to
\begin{equation}\label{eq:single_expert_prox}
 A^*=C^{-1}B_1C^{-1},
 \qquad C=I_d+\lambda^{-1}H.
\end{equation}
This is the centered-Gaussian Wasserstein proximal update for the quadratic potential $x\mapsto\frac12x^\top Hx$ with time step $\lambda^{-1}$, using the same Bures--Wasserstein metric as Gaussian variational inference and Wasserstein gradient-flow algorithms; see \cite{DiaoEtAl2023,LambertEtAl2022}.  Tschiderer~\cite{Tschiderer2026} also studies volatility control in Wasserstein gradient-flow representations, for diffusions with a common invariant measure. Here the local multi-expert problem reduces by congruence to a covariance barycenter, with unrestricted finiteness threshold $H+\lambda M\succ0$.
\end{remark}

\begin{proposition}[The constrained covariance selector]
\label{prop:covariance_KKT}
Let $H=H^\top$, $M\succ0$, and $B_k\succ0$, and define
\[
 R_k=MB_kM,\qquad
 T_k(A)=R_k^{1/2}(R_k^{1/2}AR_k^{1/2})^{-1/2}R_k^{1/2}.
\]
The gradient of the objective in \eqref{eq:Psi_H} is
\begin{equation}\label{eq:covariance_gradient}
 \nabla_A\Psi_{H,M}(A)=\frac12H+
 \frac\lambda2\left(M-\sum_k\pi_kT_k(A)\right).
\end{equation}
On a nonempty compact convex set $\mathcal A\subset\Spp$, the unique minimizer $A^*$ is characterized by
\begin{equation}\label{eq:covariance_VI}
 \big\langle\nabla_A\Psi_{H,M}(A^*),A-A^*\big\rangle_F\ge0,
 \qquad A\in\mathcal A,
\end{equation}
where $\langle U,W\rangle_F=\tr(UW)$ for symmetric matrices. On the box
$\underline a I_d\preceq A\preceq\overline a I_d$, with $0<\underline a<\overline a$, the condition is equivalent to the existence of $U_-,U_+\succeq0$ such that
\begin{align}\label{eq:covariance_KKT}
 \nabla_A\Psi_{H,M}(A^*)-U_-+U_+&=0,\\
 \tr[U_-(A^*-\underline a I_d)]
 =\tr[U_+(\overline a I_d-A^*)]&=0.\notag
\end{align}
These characterizations hold whether or not the unrestricted finiteness threshold is satisfied.
\end{proposition}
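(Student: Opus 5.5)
Three things have to be proved: the gradient formula, the variational inequality characterization on a compact convex set, and the box KKT reformulation. We start from the expansion \eqref{eq:metric_selector_expansion} established in the proof of \Cref{thm:congruence_selector}, which holds for every symmetric $H$ with no sign condition on $D$:
\[
 \Psi_{H,M}(A)=\tfrac12\tr(HA)+\tfrac\lambda2\tr(MA)+\tfrac\lambda2\sum_k\pi_k\tr(MB_k)-\lambda\sum_k\pi_k\Fid(A,R_k).
\]

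\emph{Gradient.} The linear terms contribute $\tfrac12H+\tfrac\lambda2M$. For the fidelity, write $\Fid(A,R)=\tr[(R^{1/2}AR^{1/2})^{1/2}]$ and differentiate $C\mapsto\tr(C^{1/2})$ on $\Spp$. Its Fréchet derivative in direction $E$ is $\tfrac12\tr(C^{-1/2}E)$; one can check this by the integral representation of the square root, or by cyclicity after differentiating $C^{1/2}C^{1/2}=C$ and solving the Sylvester equation under the trace. With $C=R^{1/2}AR^{1/2}$ and $E=R^{1/2}\dot AR^{1/2}$, cyclicity gives directional derivative $\tfrac12\tr(T(A)\dot A)$, where $T(A)=R^{1/2}(R^{1/2}AR^{1/2})^{-1/2}R^{1/2}$ is symmetric. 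Multiplying by $-\lambda\pi_k$ and summing yields \eqref{eq:covariance_gradient}. Smoothness of the map on the open set $\Spp$ makes this a genuine Fréchet gradient with respect to $\langle\cdot,\cdot\rangle_F$.

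\emph{Variational inequality.} The first step is strict convexity of $\Psi_{H,M}$ on $\Spp$. The linear terms do not affect convexity. Each $-\Fid(\cdot,R_k)$ is strictly convex by \cite[Theorem~7]{BhatiaJainLim2019} composed with the injective congruence, exactly as in \Cref{prop:selectors}. None of this uses $D\succ0$, which is why the statement holds regardless of the finiteness threshold. A compact $\mathcal A\subset\Spp$ and a continuous $\Psi$ give existence, and strict convexity on a convex set gives uniqueness. The standard first-order optimality characterization of a differentiable convex function on a convex set then gives \eqref{eq:covariance_VI}. Necessity comes from the one-sided derivative along $A^*+s(A-A^*)$, which stays in $\mathcal A$. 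Sufficiency comes from the gradient inequality $\Psi(A)\ge\Psi(A^*)+\langle\nabla\Psi(A^*),A-A^*\rangle_F$. Differentiability is needed only in a neighbourhood of $\mathcal A$ in $\Spp$, and this is available since $\mathcal A$ is compact in the open cone.

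\emph{Box KKT.} This is the step needing most care. Sufficiency is short. If $U_\pm\succeq0$ satisfy \eqref{eq:covariance_KKT}, then for $A$ in the box
\[
 \langle\nabla\Psi,A-A^*\rangle=\tr[U_-(A-\underline aI)]+\tr[U_+(\overline aI-A)]\ge0,
\]
using complementarity and the fact that the trace of a product of positive semidefinite matrices is nonnegative.

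For necessity we avoid general conic duality and diagonalize instead. Put $G=\nabla\Psi(A^*)$ and $A^*=Q\Lambda Q^\top$. Test \eqref{eq:covariance_VI} with $A=A^*\pm\epsilon Quu^\top Q^\top$ along eigenvector directions, and then with mixing directions $Q(e_ie_j^\top+e_je_i^\top)Q^\top$ inside eigenspaces. The constraint qualification here is Slater: $\tfrac{\underline a+\overline a}2I$ is strictly feasible.

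The cleanest route is to observe that, in this case, \eqref{eq:covariance_VI} says $A^*$ minimizes the linear functional $\langle G,\cdot\rangle$ over the box. Because $\underline a<\overline a$, the box has nonempty interior, so Slater's condition holds and conic Lagrangian duality for this semidefinite program gives multipliers $U_\pm\succeq0$ with zero duality gap. Alternatively, one can take them explicitly. Since $A^*$ minimizes a linear function over the box, it commutes with $G$, and we set
\[
 U_-=G_+\ \text{on the eigenspace of }A^*\text{ for eigenvalue }\underline a,\qquad U_+=G_-\ \text{on the eigenspace for }\overline a,
\]
and check that $G$ vanishes on the eigenspaces with eigenvalues strictly between $\underline a$ and $\overline a$. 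Establishing this commutation, including the zero block on interior eigenspaces, is where the eigenvector tests are used, and it is the main obstacle. We would simply cite Slater and SDP duality \cite{BonnansShapiro2000} if the explicit verification becomes heavy.
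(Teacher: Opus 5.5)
Your proposal is correct and follows essentially the paper's route. You differentiate the trace-square-root expansion \eqref{eq:metric_selector_expansion}, use strict convexity (which never needs $D\succ0$) for existence, uniqueness and the variational inequality, and obtain the box multipliers from the Slater point $\tfrac12(\underline a+\overline a)I_d$ and convex KKT duality.

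Your explicit alternative does not need the commutation step you flag as the main obstacle. Let $G_\pm$ be the positive and negative parts of $G=\nabla_A\Psi_{H,M}(A^*)$, and write $A^*=\underline aI_d+(\overline a-\underline a)P$. The variational inequality says that $\tr(GP)$ attains its minimum $-\tr G_-$ over $0\preceq P\preceq I_d$. This forces $\tr(G_+P)=\tr\big(G_-(I_d-P)\big)=0$, so $U_-=G_+$ and $U_+=G_-$ satisfy \eqref{eq:covariance_KKT} directly.
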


\begin{proof}
Differentiate the trace-square-root expression in \eqref{eq:metric_selector_expansion} to obtain \eqref{eq:covariance_gradient}. Strict convexity gives existence and uniqueness on the compact set and makes \eqref{eq:covariance_VI} necessary and sufficient. For the box, $\frac12(\underline a+\overline a)I_d$ is strictly feasible. The convex Karush--Kuhn--Tucker conditions therefore give \eqref{eq:covariance_KKT}, including complementary slackness.
\end{proof}

\begin{corollary}[Hessian-corrected covariance barycenter]\label{cor:hessian_selector}
At a classical point $(t,x)$ of the HJB, put $H=D^2V(t,x)$ and $B_k=a_k(t,x)$.  If
\begin{equation}\label{eq:C_condition}
 C(t,x):=I_d+\lambda^{-1}D^2V(t,x)\succ0
\end{equation}
and covariance is locally unrestricted, define
\[
 \widehat a_k(t,x):=C(t,x)^{-1/2}a_k(t,x)C(t,x)^{-1/2}.
\]
Here $\operatorname{Bar}_{\mathrm{BW}}(B_1,\ldots,B_K;\pi)$ denotes the unique positive-definite Bures--Wasserstein barycenter of $B_1,\ldots,B_K$ with weights $\pi$.  Then the unique selector is
\begin{equation}\label{eq:Hessian_cov_formula}
 A^*(t,x)=C(t,x)^{-1/2}
 \operatorname{Bar}_{\mathrm{BW}}\!\left(
 \widehat a_1(t,x),\ldots,\widehat a_K(t,x);\pi
 \right)C(t,x)^{-1/2}.
\end{equation}
If \eqref{eq:C_condition} fails, the unrestricted local Hamiltonian is $-\infty$.  On a compact Euclidean-convex covariance set, the selector is the unique solution of the variational inequality in \Cref{prop:covariance_KKT}, including boundary cases.
\end{corollary}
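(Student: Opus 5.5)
The plan is to obtain the corollary by specializing \Cref{thm:congruence_selector} and \Cref{prop:covariance_KKT} to the standing coordinate convention $M=I_d$. At a classical point the HJB Hamiltonian in \eqref{eq:HJB} separates additively into a drift part and a covariance part. The reason is that $L_\lambda$ in \eqref{eq:running_bary} is a sum of a term in $\beta$ and a term in $A$, and the generator terms $\beta^\top\nabla V$ and $\tfrac12\tr(AD^2V)$ are likewise separated. Hence the covariance selector minimizes exactly the function \eqref{eq:optimal_cov}, as in \Cref{prop:selectors}. This function is $\Psi_{H,I_d}$ of \eqref{eq:Psi_H} with $H=D^2V(t,x)$ and $B_k=a_k(t,x)$. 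One small check is that $\BW^2(A,a_k)=\BW^2(a_k,A)$. This symmetry holds because the Gaussian $W_2$ cost is symmetric, so the argument order in \eqref{eq:running_bary} matches $d_{I}^2(A,B_k)$.

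\textbf{Unrestricted case.} With $M=I_d$ the matrix $D$ of \Cref{thm:congruence_selector} equals $I_d+\lambda^{-1}D^2V=C(t,x)$, and \eqref{eq:C_condition} is exactly $D\succ0$. Part (i) then gives a unique minimizer \eqref{eq:Astar_congruence}. There $\widetilde B_k=D^{-1/2}MB_kMD^{-1/2}=C^{-1/2}a_kC^{-1/2}=\widehat a_k$. This is precisely \eqref{eq:Hessian_cov_formula}, and the barycenter is positive definite because all $\widehat a_k\succ0$. If \eqref{eq:C_condition} fails, part (ii) gives $\inf_A\Psi=-\infty$. The rest of the Hamiltonian is the drift infimum, which is finite since $G\succ0$, plus $\partial_tV$. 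Therefore the unrestricted local Hamiltonian is $-\infty$.

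\textbf{Constrained case.} On a compact Euclidean-convex covariance set I would invoke \Cref{prop:covariance_KKT} with $M=I_d$. It supplies existence, uniqueness via the strict convexity from \Cref{prop:selectors}, and the variational-inequality characterization \eqref{eq:covariance_VI}, including boundary points, with no positivity requirement on $C$.

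\textbf{Expected difficulties.} The main point to verify carefully is the bookkeeping of $M$ in the paper's coordinate convention. The transformed expert matrices must come out as $C^{-1/2}a_kC^{-1/2}$ and not with extra factors of $M$; this is immediate once $M=I_d$ is set. The other delicate point is the phrase ``local Hamiltonian is $-\infty$''. It requires the drift infimum to be finite, which I would justify from positive definiteness of $G$ in \eqref{eq:Gg} under the ellipticity bounds of Assumption~\ref{ass:control}.
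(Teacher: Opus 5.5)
Your proposal is correct and is essentially the paper's argument. The paper gives no separate proof; the corollary is \Cref{thm:congruence_selector} and \Cref{prop:covariance_KKT} specialized to $M=I_d$ in the standing coordinates, where $D=C(t,x)$ and $\widetilde B_k=\widehat a_k(t,x)$. Your two extra checks, the symmetry of $\BW^2$ and the finiteness of the drift infimum (which makes the full Hamiltonian $-\infty$ when \eqref{eq:C_condition} fails), are the bookkeeping the paper leaves implicit.
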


In the original coordinates, if $V_x(t,x)=V_y(t,M^{1/2}x)$ and $H_x=D_x^2V_x$, then
\[
 D_y^2V_y=M^{-1/2}H_xM^{-1/2},\qquad
 I+\lambda^{-1}D_y^2V_y\succ0
 \quad\Longleftrightarrow\quad H_x+\lambda M\succ0.
\]
The selector in those coordinates is given directly by \Cref{thm:congruence_selector}, with $H=H_x$ and the original expert covariances. It equals $M^{-1/2}A_y^*M^{-1/2}$ by uniqueness and linear equivariance.

The threshold compares continuation-value curvature with the covariance-loss metric. Sufficiently negative curvature makes the unrestricted local objective unbounded below; even a null direction at equality is enough because of the negative square-root term in \eqref{eq:metric_selector_expansion}. Compact covariance restrictions keep the Hamiltonian finite. Failure of the threshold forces a boundary solution on a full-dimensional compact covariance set, but an affine financial restriction may admit a minimizer in its relative interior.

\begin{remark}[The role of $\lambda$]\label{rem:lambda_limits}
For a fixed matrix $H$ and a fixed compact convex covariance set, as $\lambda\to\infty$ the selector converges to the constrained Bures--Wasserstein barycenter.  As $\lambda\downarrow0$, accumulation points minimize $A\mapsto\tr(HA)$ and need not be unique when $H$ has flat directions.  In the unrestricted problem,
\[
 I_d+\lambda^{-1}H\succ0
 \quad\Longleftrightarrow\quad
 H\succ-\lambda I_d.
\]
Hence, if $H\succ0$, the congruence formula remains valid for every $\lambda>0$ and the minimizer approaches the boundary $A=0$ as $\lambda\downarrow0$.  If $\lambda_{\min}(H)<0$, unrestricted finiteness fails once $0<\lambda\le-\lambda_{\min}(H)$.  When $H\succeq0$ is singular, the condition holds for every $\lambda>0$, but the flat directions are selected by the Bures--Wasserstein term and the limit need not collapse to zero.  These are local comparisons with $H$ held fixed. In the full control problem the value Hessian also depends on $\lambda$; the integrated comparison valid for that problem is \Cref{prop:budget_tradeoff}.
\end{remark}

\subsection{Two limiting cases}

If the expert coefficients and admissible characteristic sets depend only on time, the minimum running cost is deterministic and can be attained pointwise in time. The value function is then independent of $x$, so $\nabla V=D^2V=0$.

\begin{corollary}[Explicit deterministic barycenter]\label{cor:det_bary}
Suppose $b_k=b_k(t)$, $a_k=a_k(t)$, and the admissible sets depend only on time. Assume that the drift in \eqref{eq:det_drift} belongs to $\mathcal B(t)$ and that $\mathcal A(t)$ contains the unconstrained Bures--Wasserstein barycenter.  Then
\begin{equation}\label{eq:det_drift}
 b^*(t)=\left(\sum_k\pi_k a_k(t)^{-1}\right)^{-1}
 \left(\sum_k\pi_k a_k(t)^{-1}b_k(t)\right),
\end{equation}
and $a^*(t)$ is the ordinary Bures--Wasserstein barycenter of $a_1(t),\ldots,a_K(t)$.  In one dimension, writing $a_k=\sigma_k^2$,
\begin{equation}\label{eq:scalar_vol_bary}
 \sigma^*(t)=\sum_{k=1}^K\pi_k\sigma_k(t),
 \qquad
 a^*(t)=\left(\sum_{k=1}^K\pi_k\sqrt{a_k(t)}\right)^2.
\end{equation}
If the covariance set is active, the same statement holds with the unique constrained Bures--Wasserstein minimizer.
\end{corollary}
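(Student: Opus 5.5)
The plan is to show that the value function is spatially constant and then minimize the running cost pointwise in time. Since $b_k,a_k,\mathcal B,\mathcal A$ depend only on $t$, the running cost $L_\lambda(t,x;\beta,A)$ in \eqref{eq:running_bary} does not depend on $x$. Set
\[
 \ell(t):=\min_{\beta\in\mathcal B(t),\,A\in\mathcal A(t)}L_\lambda(t;\beta,A).
\]
The minimum is attained by compactness and continuity, and $\ell$ is measurable; we take a measurable selection of minimizers, for instance via the unique minimizers found below. Every admissible control then gives expected cost at least $\int_t^T\ell(s)\dd s$. The deterministic open-loop control $s\mapsto(\beta^*(s),A^*(s))$ attains this value: its controlled equation is a Gaussian process with deterministic coefficients and therefore has a weak solution. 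Hence $V(t,x)=\int_t^T\ell(s)\dd s$ is independent of $x$, so $\nabla V=D^2V=0$. This is consistent with \Cref{thm:HJB}, since the function solves \eqref{eq:HJB} (a.e. in $t$, or classically if $\ell$ is continuous under Hausdorff continuity of the sets).

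Next, apply \Cref{prop:selectors} with $\nabla V=0$. Formula \eqref{eq:optimal_drift} becomes $\beta^*=G^{-1}g$, which is \eqref{eq:det_drift}. This is legitimate because the unconstrained drift minimizer is assumed to lie in $\mathcal B(t)$, and the drift term is a strictly convex quadratic. For the covariance, $H=0$ in \eqref{eq:optimal_cov}, so the problem is minimizing $\sum_k\pi_k\BW^2(A,a_k)$. With $M=I$, \Cref{thm:congruence_selector}(i) applies with $D=I\succ0$, and its unrestricted minimizer is the ordinary Bures--Wasserstein barycenter. Since $\mathcal A(t)$ contains that barycenter, it is also the constrained minimizer. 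Uniqueness follows from strict convexity (\Cref{prop:selectors}). If the barycenter is not in $\mathcal A(t)$, the same strict convexity on the compact convex set gives the unique constrained minimizer, which proves the final sentence.

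For the scalar case, $\BW^2(A,a_k)=(\sqrt A-\sqrt{a_k})^2$, directly from \eqref{eq:bures} with $d=1$. Writing $A=\sigma^2$, we minimize $\sum_k\pi_k(\sigma-\sigma_k)^2$ over $\sigma>0$, which gives $\sigma^*=\sum_k\pi_k\sigma_k$. Alternatively, the fixed-point equation \eqref{eq:transformed_fixed_point} reads $X=\sum_k\pi_k\sqrt{Xa_k}$, so $\sqrt X=\sum_k\pi_k\sqrt{a_k}$.

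The only step needing care is the first one: justifying that pointwise-in-time minimization is optimal among all weak controls and identifying $V$ without appealing to classical regularity. This is handled by the direct lower bound $L_\lambda\ge\ell$ combined with attainment by the deterministic control, so no HJB regularity is actually required.
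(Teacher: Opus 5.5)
Your proposal is correct and takes essentially the paper's approach. The paper also notes that the running cost and admissible sets are independent of $x$, so $V$ depends only on time with $\nabla V=D^2V=0$. It then reads off the drift and covariance selectors from \Cref{prop:selectors}, with the one-dimensional Bures--Wasserstein formula giving the averaged volatility; your lower bound $L_\lambda\ge\ell(s)$ together with attainment by the deterministic control supplies the justification the paper gives in a single sentence.
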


\begin{corollary}[Common-covariance candidate-first case]\label{cor:JP}
Suppose $a_k(t,x)=a_0(t,x)$ for every $k$ and the drift in \eqref{eq:JP_drift} is admissible. Then the covariance set may be restricted to the singleton $\{a_0(t,x)\}$, the Bures--Wasserstein term vanishes, and \eqref{eq:HJB} reduces to
\begin{align}\label{eq:JP_HJB}
 \partial_tV+\bar b^\top\nabla V+\frac12\tr(a_0D^2V)
 -\frac12\nabla V^\top a_0\nabla V
 +\frac12\sum_k\pi_k(b_k-\bar b)^\top a_0^{-1}(b_k-\bar b)=0,
\end{align}
with $V(T,x)=0$, $\bar b=\sum_k\pi_kb_k$, and
\begin{equation}\label{eq:JP_drift}
 b^*(t,x)=\bar b(t,x)-a_0(t,x)\nabla V(t,x).
\end{equation}
These are exactly the pure Kullback--Leibler barycenter equations in \cite{JaimungalPesenti2026}.
\end{corollary}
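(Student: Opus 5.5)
We reduce the HJB \eqref{eq:HJB} to the stated semilinear equation by carrying out the inner minimization explicitly. With $a_k=a_0$ for every $k$, we first restrict the covariance set to the singleton $\{a_0(t,x)\}$. This restriction is legitimate and loses nothing. Indeed, $\BW^2(A,a_0)\ge0$ with equality only at $A=a_0$, so $a_0$ minimizes the transport part. The remaining term $\frac12\tr(AD^2V)$ is then not optimized over $A$, because with the singleton restriction the covariance is fixed; this is the stated convention that makes the problem exactly the common-covariance KL problem. On the singleton the Bures--Wasserstein term vanishes identically, and the Hamiltonian's covariance part becomes $\frac12\tr(a_0D^2V)$.

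Next we minimize over $\beta$. Here $G=a_0^{-1}$ and $g=a_0^{-1}\bar b$ with $\bar b=\sum_k\pi_kb_k$, since the $\pi_k$ sum to one. \Cref{prop:selectors} formula \eqref{eq:optimal_drift} then gives $\beta^*=a_0(a_0^{-1}\bar b-\nabla V)=\bar b-a_0\nabla V$, which is \eqref{eq:JP_drift}. This selector is admissible by hypothesis.

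To obtain the minimized value, we use the same ANOVA identity as \eqref{eq:reverse_ANOVA}, now with weights $\pi_k$ and matrix $a_0^{-1}$:
\[
\frac12\sum_k\pi_k(\beta-b_k)^\top a_0^{-1}(\beta-b_k)=\frac12(\beta-\bar b)^\top a_0^{-1}(\beta-\bar b)+\frac12\sum_k\pi_k(b_k-\bar b)^\top a_0^{-1}(b_k-\bar b).
\]
We add $\beta^\top\nabla V$ and complete the square in $\beta-\bar b$. The minimum of $\frac12 u^\top a_0^{-1}u+u^\top\nabla V$ is $-\frac12\nabla V^\top a_0\nabla V$, and the remaining linear term is $\bar b^\top\nabla V$. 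Substituting these into \eqref{eq:HJB} yields \eqref{eq:JP_HJB} with terminal condition $V(T,x)=0$. Finally, we compare the result term by term with the Jaimungal--Pesenti KL-barycenter equations in \cite{JaimungalPesenti2026}. Consistency is also guaranteed by \Cref{thm:KL_reduction}: on a common-covariance class, $\D_\lambda$ coincides with path-space KL, so the objective is their objective.

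The only point needing care is the singleton restriction. If the admissible covariance set were larger, the term $\frac12\tr(AD^2V)$ could favour $A\ne a_0$ through the Hessian correction of \Cref{cor:hessian_selector}. The corollary's statement avoids this by explicitly permitting the restriction, so we would simply remark on it rather than prove that $a_0$ is optimal over a larger set.
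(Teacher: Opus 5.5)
Your proof is correct and is the same direct substitution the paper intends. With the singleton covariance set the Bures--Wasserstein term drops, $G=a_0^{-1}$ and $g=a_0^{-1}\bar b$, and completing the square in $\beta-\bar b$ yields \eqref{eq:JP_drift} and \eqref{eq:JP_HJB}.

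One correction: delete the first-paragraph claim that the singleton restriction ``loses nothing.'' A larger covariance set genuinely changes the problem, as your own last paragraph concedes. Concretely, \Cref{thm:LQ_solution} with all $\sigma_k=\sigma_0$ but distinct $\kappa_k$ gives $\sigma_t^*=\lambda\sigma_0/(\lambda+p_t)<\sigma_0$ for $t<T$. So the singleton is a modelling restriction permitted by the statement, not a consequence of optimality.
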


The common-covariance candidate-first case should be contrasted with the expert-first arithmetic construction in \Cref{thm:dynamic_arithmetic_pool}.  Candidate-first Kullback--Leibler produces the gradient-corrected Markov drift \eqref{eq:JP_drift} and a normalized geometric density.  Expert-first log-wealth aggregation produces the arithmetic mixture and the path-posterior drift \eqref{eq:dynamic_arithmetic_drift}; its Markov projection is \eqref{eq:reverse_drift_selector}.  These are different economic aggregation rules even before covariance transport is introduced.

\Cref{thm:AW_reduction,cor:JP} are two limiting cases of the candidate-first construction: adapted Bures--Wasserstein transport cost when only covariance moves, and geometric Kullback--Leibler aggregation under a common covariance when only drift moves.  The congruence formula \eqref{eq:Hessian_cov_formula} describes its interior.  The expert-first arithmetic projection is characterized separately in \Cref{sec:reverse_projection}.

\subsection{A scalar linear--quadratic example with heterogeneous Ornstein--Uhlenbeck experts}\label{sec:LQ_example}

Consider scalar expert models
\begin{equation}\label{eq:LQ_experts}
 \dd X_t=-\kappa_kX_t\dd t+\sigma_k\dd W_t^k,
 \qquad \kappa_k>0,\quad\sigma_k>0,
\end{equation}
and candidate controls
\[
 \dd X_t=\beta_t\dd t+\sigma_t\dd W_t,
 \qquad \sigma_t>0.
\]
The admissible class consists of progressively measurable controls for which the state equation is well defined and the running cost is finite.  Define
\begin{equation}\label{eq:LQ_constants}
 G:=\sum_k\frac{\pi_k}{\sigma_k^2},
 \qquad
 q:=\sum_k\frac{\pi_k\kappa_k}{\sigma_k^2},
 \qquad
 c:=\sum_k\frac{\pi_k\kappa_k^2}{\sigma_k^2},
 \qquad
 \bar\sigma:=\sum_k\pi_k\sigma_k.
\end{equation}

\begin{theorem}[Explicit candidate-first volatility aggregation]\label{thm:LQ_solution}
The value function has the quadratic form
\begin{equation}\label{eq:LQ_value}
 V(t,x)=\frac12p_t x^2+r_t.
\end{equation}
The coefficient $p$ is the nonnegative solution of
\begin{equation}\label{eq:LQ_riccati}
 \dot p_t+c-\frac{(q+p_t)^2}{G}=0,
 \qquad p_T=0.
\end{equation}
Writing $\chi:=\sqrt{cG}$, Cauchy--Schwarz gives $q\le\chi$.  If $q<\chi$, then
\begin{equation}\label{eq:p_explicit}
 p_t=\chi\tanh\!\left(
 \sqrt{\frac cG}(T-t)+\operatorname{arctanh}\frac q\chi
 \right)-q;
\end{equation}
if $q=\chi$, then $p\equiv0$.  The optimal local characteristics are
\begin{equation}\label{eq:LQ_controls}
 \beta^*(t,x)=-\frac{q+p_t}{G}x,
 \qquad
 \boxed{\displaystyle \sigma_t^*=\frac{\lambda\bar\sigma}{\lambda+p_t}}.
\end{equation}
Finally, $r_T=0$ and
\begin{equation}\label{eq:r_ode}
 \dot r_t+\frac\lambda2\sum_k\pi_k\sigma_k^2
 -\frac{\lambda^2\bar\sigma^2}{2(\lambda+p_t)}=0.
\end{equation}
Moreover $0\le p_t\le p_0$.  Hence, with
\begin{equation}\label{eq:LQ_compact_bound}
 \underline\sigma_*:=\frac{\lambda\bar\sigma}{\lambda+p_0}>0,
\end{equation}
we have
\[
 \underline\sigma_*\le\sigma_t^*\le\bar\sigma,
 \qquad 0\le t\le T.
\]
If the experts do not all have the same mean-reversion coefficient on the support of the weights, then $p_t>0$ for every $t<T$ and
\begin{equation}\label{eq:vol_shrink}
 0<\sigma_t^*<\bar\sigma=\sum_k\pi_k\sigma_k,
 \qquad t<T,
\end{equation}
while $\sigma_T^*=\bar\sigma$.  If all $\kappa_k$ are equal, then $p\equiv0$ and $\sigma_t^*=\bar\sigma$ for all $t$, even when the $\sigma_k$ differ.  Thus the Hessian channel transmits \emph{drift} heterogeneity into the volatility selector; volatility disagreement by itself does not activate this correction in the present LQ model.
\end{theorem}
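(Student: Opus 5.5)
The plan is a quadratic-ansatz verification argument, not an appeal to \Cref{thm:HJB}: the controls here are unconstrained and non-compact, so \Cref{ass:control} fails.

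\textbf{Step 1: the Hamiltonian.} Write down the HJB \eqref{eq:HJB} with the running cost \eqref{eq:running_bary}. In one dimension the Bures--Wasserstein term is $\BW^2(\sigma^2,\sigma_k^2)=(\sigma-\sigma_k)^2$. Expanding the drift part with the constants \eqref{eq:LQ_constants} gives
\[
 \tfrac12 G\beta^2+\beta xq+\tfrac12 cx^2+\beta V_x+\tfrac12\sigma^2V_{xx}
 +\tfrac\lambda2\sum_k\pi_k(\sigma-\sigma_k)^2 .
\]

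\textbf{Step 2: pointwise minimization under the ansatz.} Insert $V=\tfrac12p_tx^2+r_t$, so $V_x=p_tx$ and $V_{xx}=p_t$.
\begin{itemize}
\item Minimizing in $\beta$ gives $\beta^*=-(q+p_t)x/G$. This is \eqref{eq:optimal_drift} with $G$ scalar and $g=-qx$.
\item Minimizing in $\sigma$, the objective $\tfrac12p\sigma^2+\tfrac\lambda2\sum\pi_k(\sigma-\sigma_k)^2$ has first-order condition $(p+\lambda)\sigma=\lambda\bar\sigma$. This requires $\lambda+p_t>0$, which is the scalar case of \Cref{cor:hessian_selector}, i.e.\ $p_t>-\lambda$.
\end{itemize}
Substituting back and matching coefficients:
\begin{itemize}
\item the $x^2$ terms give $\tfrac12\dot p+\tfrac12c-\tfrac{(q+p)^2}{2G}=0$, which is \eqref{eq:LQ_riccati};
\item the constant terms give \eqref{eq:r_ode}, using $\min_\sigma=\tfrac\lambda2\sum\pi_k\sigma_k^2-\tfrac{\lambda^2\bar\sigma^2}{2(\lambda+p)}$.
\end{itemize}

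\textbf{Step 3: the Riccati equation.} Set $u=q+p$. Then $\dot u=u^2/G-c$ with $u_T=q$, and $\chi^2=cG$.
\begin{itemize}
\item Cauchy--Schwarz: $q^2=(\sum\pi_k\sigma_k^{-1}\cdot\kappa_k\sigma_k^{-1})^2\le Gc$. Equality holds iff $\kappa_k$ is constant on the support of the weights.
\item Write $\tau=T-t$. Then $du/d\tau=c-u^2/G$. If $0\le q<\chi$, the solution is $u=\chi\tanh(\sqrt{c/G}\,\tau+\operatorname{arctanh}(q/\chi))$, which is increasing in $\tau$, bounded by $\chi$, and exists globally. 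This gives \eqref{eq:p_explicit}.
\item Since $u$ increases from $q$, we get $p\ge0$, with $p_t>0$ for $t<T$ and $p$ decreasing in $t$. Hence $0\le p_t\le p_0$.
\item If $q=\chi$, then $u\equiv q$ is the equilibrium and $p\equiv0$.
\end{itemize}
The bounds on $\sigma^*$, including \eqref{eq:vol_shrink}, and the equal-$\kappa$ case then follow immediately from $\sigma^*=\lambda\bar\sigma/(\lambda+p)$.

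\textbf{Step 4: verification (main obstacle).} This is the step I expect to be hardest, because the control set is unbounded. The approach:
\begin{itemize}
\item Apply It\^o to $V(t,X_t)$ for an arbitrary admissible control with finite cost.
\item The Hamiltonian inequality gives $\E V(T,X_T)-V(t,x)\ge-\E\int L$, up to a local-martingale term $\int p_sX_s\sigma_s\,dW$.
\item To remove the local martingale, localize by stopping times. Since $V\ge r_t$, which is bounded below (and in fact $V\ge0$ as $p\ge0$), Fatou's lemma lets us pass to the limit in the right direction and obtain $V(t,x)\le\E\int L$.
\item For the candidate optimum, the feedback $\beta^*$ is linear and $\sigma^*$ is deterministic and bounded. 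The optimal state is therefore Gaussian with all moments finite, and the It\^o identity holds with equality.
\end{itemize}
This proves optimality and identifies $V$ as the value function.
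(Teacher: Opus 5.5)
Steps 1--3 are correct and agree with the paper's computation: the Hamiltonian, the pointwise selectors, the substitution $u=q+p$ in backward time, and the Cauchy--Schwarz equality case. The gap is in Step 4, where you remove the localization for an arbitrary finite-cost control. It\^o's formula and the HJB inequality give, for localizing times $\tau_n$,
\[
 V(t,x)\le \E\int_t^{\tau_n\wedge T}L\,\dd s+\E\, V(\tau_n\wedge T,X_{\tau_n\wedge T}).
\]
To conclude $V(t,x)\le\E\int_t^TL\,\dd s$, you need
$\limsup_n\E\, V(\tau_n\wedge T,X_{\tau_n\wedge T})\le \E\, V(T,X_T)=0$. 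That is an \emph{upper} bound, which requires domination or uniform integrability.

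Nonnegativity of $V$ together with Fatou gives only $\liminf_n\E\, V(\tau_n\wedge T,X_{\tau_n\wedge T})\ge 0$, which is the wrong direction. Lower boundedness is what closes the inequality $V\ge J^*$ along the optimal control. It does not give the comparison with arbitrary controls. Abstractly, $V(s,X_s)+\int_t^sL\,\dd u$ is a nonnegative local submartingale, and such processes (for instance nonnegative strict local martingales) can satisfy $\E Y_T<Y_t$.

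The missing ingredient is a second-moment bound on the running maximum, derived from finiteness of the cost. The paper obtains it as follows.
\begin{enumerate}[label=(\roman*),leftmargin=2em]
\item Finite volatility cost gives $\E\int_t^T\sigma_s^2\dd s<\infty$, since $\sum_k\pi_k(\sigma-\sigma_k)^2$ controls $\sigma^2$ up to a constant.
\item If $\kappa_i\ne\kappa_j$ for two positively weighted experts, the identity $(\kappa_i-\kappa_j)X_s=(\beta_s+\kappa_iX_s)-(\beta_s+\kappa_jX_s)$ and finite drift cost give $\E\int X_s^2\dd s<\infty$. This then gives $\E\int\beta_s^2\dd s<\infty$.
\item The state equation and Burkholder--Davis--Gundy yield $\E\sup_s|X_s|^2<\infty$.
\end{enumerate}
Then $0\le V(\tau_n\wedge T,X_{\tau_n\wedge T})\le\tfrac12p_0\sup_sX_s^2+\sup|r|$ is dominated, and dominated convergence sends it to $V(T,X_T)=0$.

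When all $\kappa_k$ coincide, the drift cost does not control $X$. In that case, however, $p\equiv0$, so $V$ is deterministic and bounded and no quadratic terminal term appears. This case split is therefore necessary, not cosmetic.

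With this moment argument in place of the Fatou step, your verification goes through. The paper phrases the result as the completed-square identity \eqref{eq:LQ_verification}.
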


\begin{proof}
Insert the ansatz \eqref{eq:LQ_value} into the HJB.  The drift-dependent terms are
\[
 \frac12G\beta^2+(q+p_t)x\beta+\frac12cx^2,
\]
whose unique minimum is attained at the first control in \eqref{eq:LQ_controls} and contributes $\frac12[c-(q+p_t)^2/G]x^2$.  The volatility-dependent terms are
\[
 \frac12p_t\sigma^2+\frac\lambda2\sum_k\pi_k(\sigma-\sigma_k)^2
 =\frac12(\lambda+p_t)\sigma^2-\lambda\bar\sigma\sigma
 +\frac\lambda2\sum_k\pi_k\sigma_k^2.
\]
Once $p_t\ge0$ is established below, this expression is strictly convex and its minimizer is the second control in \eqref{eq:LQ_controls}.  Matching the quadratic and constant coefficients gives \eqref{eq:LQ_riccati} and \eqref{eq:r_ode}.

To solve the Riccati equation, set $s=T-t$ and $u_s=q+p_{T-s}$.  Then
\[
 u'_s=c-\frac{u_s^2}{G},
 \qquad u_0=q.
\]
Cauchy--Schwarz gives $q^2\le Gc=\chi^2$.  Therefore $u'_0\ge0$, and the scalar flow leaves the interval $[q,\chi]$ invariant.  It follows that $u_s$ is nondecreasing in backward time $s$ and hence
\[
 p_{T-s}=u_s-q\ge0.
\]
If $q<\chi$, then $u'_0>0$ and $p_t>0$ for every $t<T$; if $q=\chi$, the equilibrium solution is $u\equiv\chi$ and $p\equiv0$.  The explicit formula \eqref{eq:p_explicit} also shows that $p_t$ decreases monotonically in calendar time, so $0\le p_t\le p_0$.  Equality $q^2=Gc$ in Cauchy--Schwarz holds if and only if all $\kappa_k$ are equal on the support of the positive weights.  The volatility levels $\sigma_k$ affect the weights $1/\sigma_k^2$ but not this equality condition.  Equations \eqref{eq:LQ_compact_bound} and \eqref{eq:vol_shrink} now follow from \eqref{eq:LQ_controls}.

The preceding bounds also close the admissibility issue.  Put
\[
 K_*:=\max\left\{\max_k\kappa_k,\frac{q+p_0}{G}\right\},
 \qquad
 \underline\sigma_c:=\min\left\{\underline\sigma_*,\min_k\sigma_k\right\},
 \qquad
 \overline\sigma_c:=\max_k\sigma_k,
\]
\[
 \mathcal B_*(x):=[-K_*|x|,K_*|x|],
 \qquad
 \mathcal A_*:=[\underline\sigma_c^2,\overline\sigma_c^2].
\]
These are compact convex pointwise control sets with linear growth, contain the expert characteristics, and contain the Hamiltonian minimizers for every $(t,x)$.  Restricting to them therefore does not change the pointwise Hamiltonian, and they satisfy Assumption \ref{ass:control}.

For completeness, the unrestricted problem can be verified directly.  Write $J_{t,x}(\beta,\sigma)$ for its expected running cost.  Applying It\^o's formula to \eqref{eq:LQ_value}, localizing, and completing squares yields
\begin{equation}\label{eq:LQ_verification}
\begin{aligned}
 J_{t,x}(\beta,\sigma)-V(t,x)
 =\E_{t,x}\int_t^T
 \Bigg[&\frac G2\left(\beta_s+\frac{q+p_s}{G}X_s\right)^2\\
 &+\frac{\lambda+p_s}{2}
 \left(\sigma_s-\frac{\lambda\bar\sigma}{\lambda+p_s}\right)^2\Bigg]
 \dd s\ge0.
\end{aligned}
\end{equation}
The feedback control \eqref{eq:LQ_controls} makes both squares vanish.  Its drift is linear with bounded deterministic coefficient and its volatility lies in the compact interval above, so its state has finite moments.

It remains to justify removal of localization for an arbitrary unrestricted admissible control with finite cost.  Finite volatility cost implies
\[
 \E\int_t^T\sigma_s^2\dd s<\infty,
\]
because $\sum_k\pi_k(\sigma_s-\sigma_k)^2$ controls $\sigma_s^2$ up to a deterministic constant.  If two mean-reversion coefficients $\kappa_i\ne\kappa_j$ have positive weights, finite drift cost and
\[
 (\kappa_i-\kappa_j)X_s
 =(\beta_s+\kappa_iX_s)-(\beta_s+\kappa_jX_s)
\]
give $\E\int_t^T X_s^2\dd s<\infty$; then
$\beta_s^2\le2(\beta_s+\kappa_iX_s)^2+2\kappa_i^2X_s^2$ gives
$\E\int_t^T\beta_s^2\dd s<\infty$.  The state equation and the Burkholder--Davis--Gundy inequality therefore yield
\[
 \E\sup_{t\le s\le T}|X_s|^2<\infty.
\]
Since $p$ and $r$ are bounded, the localized terminal terms in It\^o's formula are uniformly integrable and converge to $V(T,X_T)=0$.  If all $\kappa_k$ coincide, then $p\equiv0$ and the value function is deterministic, so no quadratic terminal term is present.  Thus \eqref{eq:LQ_verification} is valid over the full finite-cost admissible class.  This proves optimality both for the compact formulation of \Cref{thm:HJB} and for the original unrestricted problem.
\end{proof}

\begin{figure}[htbp]
 \centering
 \includegraphics[width=0.485\textwidth]{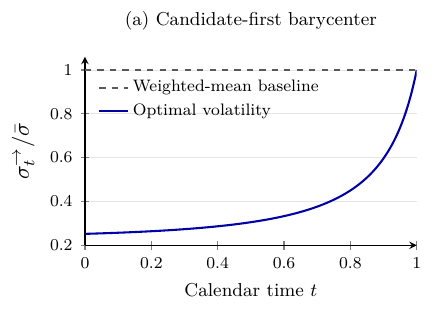}\hfill
 \includegraphics[width=0.485\textwidth]{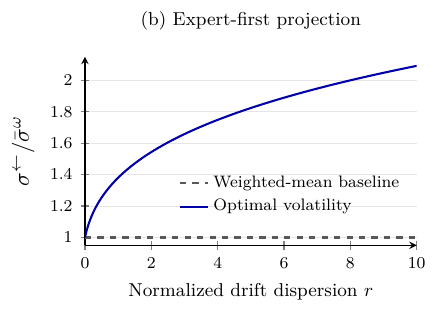}
 \caption{Effect of drift disagreement on the covariance selectors.  Left: the candidate-first ratio $\sigma_t^{\rightarrow}/\bar\sigma$ from \Cref{thm:LQ_solution}; mean-reversion disagreement makes it smaller than the pointwise weighted mean of expert volatilities before maturity.  Parameters are $T=1$, $\lambda=1$, $\pi=(0.3,0.4,0.3)$, $\kappa=(0.5,1.5,3)$, and $\sigma=(0.2,0.35,0.5)$.  Right: the expert-first scalar projection from \Cref{cor:reverse_scalar}; the normalized volatility $\sigma^{\leftarrow}/\bar\sigma^{\omega}$ increases above one with normalized drift dispersion $s_b^2/[\lambda(\bar\sigma^{\omega})^4]$.}
 \label{fig:orientation_volatility_corrections}
\end{figure}

In this linear--quadratic model, $p_t\ge0$, so the unrestricted finiteness threshold holds for every $\lambda>0$. The example illustrates the Hessian correction but does not exhibit threshold failure.

\begin{remark}[Matrix Ornstein--Uhlenbeck companion]
For $d$-dimensional linear experts with constant covariance matrices $A_k$ and a quadratic value function $V(t,x)=\frac12x^\top P_tx+r_t$, the covariance selector is, whenever $C_t=I+\lambda^{-1}P_t\succ0$,
\[
 A_t^*=C_t^{-1/2}
 \operatorname{Bar}_{\mathrm{BW}}\!\left(
 C_t^{-1/2}A_1C_t^{-1/2},\ldots,C_t^{-1/2}A_KC_t^{-1/2};\pi
 \right)C_t^{-1/2}.
\]
Thus the value Hessian changes covariance coordinates before aggregation.  The scalar formula \eqref{eq:LQ_controls} is its one-dimensional specialization.
\end{remark}

\subsection{Noncommuting covariance slices in a stochastic-volatility model}

At a fixed variance state $v>0$, the covariance matrix in \cite{Heston1993} has the form
\begin{equation}\label{eq:Heston_matrix}
 A(v;\rho,\xi)=v
 \begin{pmatrix}
 1&\rho\xi\\
 \rho\xi&\xi^2
 \end{pmatrix}.
\end{equation}
Different $(\rho,\xi)$ generally produce noncommuting matrices.

\begin{corollary}[Unique selector on a bounded stochastic-volatility slice]\label{cor:Heston_slice}
Fix $0<\underline a<\overline a$ and define
\begin{equation}\label{eq:Heston_slice}
 \mathcal H_v:=\left\{A\in\mathbb S_{++}^2:
 A_{11}=v,\quad \underline a I_2\le A\le\overline a I_2\right\}.
\end{equation}
Whenever $\mathcal H_v$ is nonempty, it is compact and Euclidean-convex.  For every symmetric Hessian $H$ and every collection of positive definite expert matrices $B_k$, the objective $\Psi_{H,I}$ in \eqref{eq:Psi_H} has a unique minimizer on $\mathcal H_v$.  That minimizer determines unique effective stochastic-volatility parameters
\begin{equation}\label{eq:Heston_recover}
 \xi^*=\sqrt{A_{22}^*/v},
 \qquad
 \rho^*=\frac{A_{12}^*}{\sqrt{vA_{22}^*}}.
\end{equation}
No simultaneous diagonalization or commutativity assumption is needed.
\end{corollary}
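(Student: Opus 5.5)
The argument splits into three parts: the geometry of $\mathcal H_v$, existence and uniqueness of the minimizer, and recovery of the Heston parameters.

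\emph{Geometry of $\mathcal H_v$.} The set is the intersection of three pieces:
\begin{itemize}
\item the affine hyperplane $\{A_{11}=v\}$ in $\mathbb S^2$;
\item the closed convex cone condition $A-\underline a I_2\succeq0$;
\item the closed convex condition $\overline a I_2-A\succeq0$.
\end{itemize}
Each is Euclidean-convex and closed, so their intersection is closed and convex. It is bounded because $\underline a I_2\le A\le\overline a I_2$ bounds every entry: $|A_{ij}|\le\overline a$. Since $\underline a>0$, every element lies in $\mathbb S^2_{++}$, so $\mathcal H_v$ is a compact convex subset of $\Spp$ with $d=2$. Nonemptiness requires $\underline a\le v\le\overline a$, and we simply assume it as stated.

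\emph{Existence and uniqueness.} This follows from \Cref{prop:covariance_KKT} with $M=I$, or directly from the strict convexity argument in \Cref{prop:selectors}. The term $\tfrac12\tr(HA)$ is linear in $A$. Each map $A\mapsto\BW^2(A,B_k)$ is strictly convex on $\Spp$, by strict concavity of $C\mapsto\tr C^{1/2}$ \cite[Theorem~7]{BhatiaJainLim2019} composed with the injective congruence $A\mapsto B_k^{1/2}AB_k^{1/2}$. With positive weights, $\Psi_{H,I}$ is therefore strictly convex on $\mathcal H_v$. It is also continuous there, since $\BW^2$ is continuous. Compactness gives a minimizer, and strict convexity on a convex set gives uniqueness. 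None of this uses $H\succ-\lambda I$: the threshold in \Cref{thm:congruence_selector} concerns only the unrestricted problem, and the linear term is bounded on the compact slice. Likewise no commutativity of the $B_k$ is used, because the strict convexity and compactness arguments never diagonalize.

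\emph{Recovering the parameters.} The point needing care is that \eqref{eq:Heston_recover} is well defined and inverts \eqref{eq:Heston_matrix}. We have $A^*\succeq\underline a I_2$, so $A^*_{22}\ge\underline a>0$ and $\xi^*>0$ is well defined. Set $\rho^*=A^*_{12}/\sqrt{vA^*_{22}}$. Positive definiteness gives $vA^*_{22}-(A^*_{12})^2>0$, hence $|\rho^*|<1$.

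Substituting back, $v(\xi^*)^2=A^*_{22}$ and $v\rho^*\xi^*=A^*_{12}$, so $A^*=A(v;\rho^*,\xi^*)$. Conversely, the map $(\rho,\xi)\mapsto A(v;\rho,\xi)$ is injective on $\xi>0$, $|\rho|<1$: from $A_{22}$ we read off $\xi$, and then $A_{12}$ determines $\rho$. Hence uniqueness of $A^*$ yields uniqueness of $(\rho^*,\xi^*)$.

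The only mild subtlety is the positivity $A^*_{22}>0$ and the strict bound on the correlation, and both are supplied by the lower ellipticity bound.
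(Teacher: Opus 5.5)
Your proposal is correct and follows essentially the same route as the paper: $\mathcal H_v$ is the intersection of an affine hyperplane with a compact Loewner interval, uniqueness comes from the strict convexity established in \Cref{prop:selectors}, and positive definiteness of $A^*$ makes $(\xi^*,\rho^*)$ well defined with $|\rho^*|<1$. Your extra check that $(\rho,\xi)\mapsto A(v;\rho,\xi)$ is injective is a harmless refinement that the paper leaves implicit.
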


\begin{proof}
The set is the intersection of an affine hyperplane with a compact Loewner interval and is therefore compact and convex.  Strict convexity from \Cref{prop:selectors} gives uniqueness.  Positive definiteness implies $|A_{12}^*|<\sqrt{vA_{22}^*}$, so \eqref{eq:Heston_recover} is well defined and $|\rho^*|<1$.
\end{proof}

\section{Exact transition kernels and an additional coupling term}\label{sec:OU}

The Euler identity \eqref{eq:euler_exact_identity} is exactly additive by construction: it freezes the local Gaussian coefficients over each step.  The finite part of the transition-kernel entropy built from the \emph{exact} transition kernels need not be additive.  The following scalar Ornstein--Uhlenbeck calculation identifies the resulting gap explicitly.  The HJB and linear--quadratic examples above belong to the Euler/local-characteristic problem; the present section records the additional information that enters when Euler kernels are replaced by exact transitions.

Let $Q$ be the law of
\begin{equation}\label{eq:OU_Q}
 \dd X_t=(\mu-\kappa X_t)\dd t+\sigma\dd W_t,
\end{equation}
and $P$ the law of
\begin{equation}\label{eq:OU_P}
 \dd X_t=(\mu_0-\kappa_0X_t)\dd t+\sigma_0\dd W_t,
\end{equation}
with a common deterministic initial value.  Let $Q^N,P^N$ be their laws on the uniform grid $\{jT/N:j=0,\ldots,N\}$ and set $r=\sigma^2/\sigma_0^2$,
\[
 \mathcal K(r)=\frac12\{r-1-\log r\}.
\]

\begin{proposition}[Two-scale entropy of Ornstein--Uhlenbeck skeletons]\label{prop:OU_expansion}
As $N\to\infty$,
\begin{align}\label{eq:OU_entropy}
 \KL(Q^N\|P^N)
 &=N\mathcal K(r)
 +\frac12\E^Q\int_0^T
 \left\{
 \tfrac{\big[(\mu-\mu_0)-(\kappa-\kappa_0)X_t\big]^2}{\sigma_0^2}
 +(r-1)(\kappa_0-\kappa)
 \right\}\dd t
 +O(N^{-1}).
\end{align}
The squared-drift integrand is the information term appearing in $\D_\lambda$. The remaining integrand is an exact drift--volatility coupling correction.  It vanishes if either $\sigma=\sigma_0$ or $\kappa=\kappa_0$, but it may have either sign in general.
\end{proposition}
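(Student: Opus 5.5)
The plan is to use the Markov chain rule for relative entropy on the grid, compute each one-step Gaussian divergence exactly from the explicit Ornstein--Uhlenbeck transitions, expand it in $h=T/N$, and sum the expansions. Both skeletons are time-homogeneous Gaussian Markov chains started at the same point. The chain rule therefore gives
\[
 \KL(Q^N\|P^N)=\sum_{j=0}^{N-1}\E^Q\,\KL\big(Q_h(X_{t_j},\cdot)\,\big\|\,P_h(X_{t_j},\cdot)\big),
\]
where $Q_h(x,\cdot)$ and $P_h(x,\cdot)$ are the exact one-step kernels. Under $Q$ the kernel is
\[
 N\!\left(xe^{-\kappa h}+\mu\tfrac{1-e^{-\kappa h}}{\kappa},\;v_h\right),
 \qquad v_h=\sigma^2\tfrac{1-e^{-2\kappa h}}{2\kappa},
\]
and $P_h(x,\cdot)$ is the same expression with $(\mu_0,\kappa_0,\sigma_0)$, giving mean $m_h^0(x)$ and variance $v_h^0$. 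The Gaussian formula from the proof of \Cref{thm:local_gaussian} then writes each step as $\mathcal K(v_h/v_h^0)+\tfrac12(m_h(x)-m_h^0(x))^2/v_h^0$.

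The first step is the variance ratio. Expanding the exponentials gives $v_h/v_h^0=r\{1+(\kappa_0-\kappa)h+O(h^2)\}$, with a constant depending only on the parameters. Since $\mathcal K'(r)=\tfrac12(1-1/r)$, a Taylor expansion gives
\[
 \mathcal K(v_h/v_h^0)=\mathcal K(r)+\tfrac12(r-1)(\kappa_0-\kappa)h+O(h^2).
\]
Summing over the $N$ steps produces $N\mathcal K(r)$, the coupling integral $\tfrac12\int_0^T(r-1)(\kappa_0-\kappa)\dd t$ (its integrand is deterministic), and an $O(Nh^2)=O(N^{-1})$ error.

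The second step is the mean term. We have $m_h(x)-m_h^0(x)=h[(\mu-\mu_0)-(\kappa-\kappa_0)x]+O(h^2(1+|x|))$ and $v_h^0=\sigma_0^2h(1+O(h))$. Hence each mean contribution equals
\[
 \tfrac h2\,\frac{[(\mu-\mu_0)-(\kappa-\kappa_0)x]^2}{\sigma_0^2}+O\big(h^2(1+x^2)\big).
\]
Under $Q$ the second moments of $X_{t_j}$ are bounded uniformly on $[0,T]$, so the summed remainder is again $O(N^{-1})$. What remains is the Riemann sum $\sum_j h\,\E^Q f(X_{t_j})$ for the quadratic $f(x)=[(\mu-\mu_0)-(\kappa-\kappa_0)x]^2/\sigma_0^2$. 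Because the skeleton consists of exact samples of the diffusion, $\phi(t):=\E^Q f(X_t)$ is an explicit combination of the Gaussian mean $e^{-\kappa t}x_0+\mu(1-e^{-\kappa t})/\kappa$ and variance $\sigma^2(1-e^{-2\kappa t})/(2\kappa)$. It is therefore $C^1$ on $[0,T]$, and the left Riemann sum differs from $\int_0^T\phi$ by $O(h)$. Combining the two steps yields \eqref{eq:OU_entropy}.

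The main obstacle is bookkeeping rather than any single estimate. All $O(h^2)$ one-step remainders must be uniform, with at most quadratic growth in $x$, so that they sum to $O(N^{-1})$. I would handle this by writing the exact mean and variance expansions with Lagrange remainders bounded on $h\in(0,T]$. I would also check that $\log(v_h/v_h^0)$ stays regular as $h\downarrow0$, which holds because the ratio tends to $r>0$.

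For the sign claims, note that the coupling integrand $(r-1)(\kappa_0-\kappa)$ is identically zero exactly when $r=1$ or $\kappa=\kappa_0$. Otherwise its sign is that of $(\sigma-\sigma_0)(\kappa_0-\kappa)$, so both signs occur.
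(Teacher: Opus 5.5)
Your proposal is correct and follows essentially the same route as the paper: the relative-entropy chain rule over the grid, the explicit Gaussian Ornstein--Uhlenbeck kernels, the expansions $v_h/v_{0,h}=r\{1+(\kappa_0-\kappa)h+O(h^2)\}$ and $m_h-m_{0,h}=h[(\mu-\mu_0)-(\kappa-\kappa_0)x]+O((1+|x|)h^2)$, and summation using moment bounds and Riemann convergence. Your extra details fill in steps the paper states tersely: the Taylor step via $\mathcal K'(r)=\tfrac12(1-1/r)$, the $C^1$ regularity of $t\mapsto\E^Q f(X_t)$ giving the $O(h)$ Riemann error, and the sign of $(\sigma-\sigma_0)(\kappa_0-\kappa)$.
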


\begin{proof}
For a step $h=T/N$, the conditional transition laws are Gaussian with
\[
 m_h(x)=e^{-\kappa h}x+\frac\mu\kappa(1-e^{-\kappa h}),
 \qquad
 v_h=\frac{\sigma^2}{2\kappa}(1-e^{-2\kappa h}),
\]
and analogous quantities $m_{0,h},v_{0,h}$ for $P$.  The chain rule for relative entropy gives
\[
 \KL(Q^N\|P^N)=\sum_{j=0}^{N-1}
 \E^Q\KL\big(N(m_h(X_{jh}),v_h)\|N(m_{0,h}(X_{jh}),v_{0,h})\big).
\]
Uniformly on sets with polynomial growth,
\[
 \frac{v_h}{v_{0,h}}
 =r\{1+(\kappa_0-\kappa)h+O(h^2)\},
\]
and
\[
 m_h(x)-m_{0,h}(x)
 =\big[(\mu-\mu_0)-(\kappa-\kappa_0)x\big]h+O((1+|x|)h^2).
\]
Substitution into the Gaussian Kullback--Leibler formula yields, per step,
\begin{align*}
 \mathcal K(r)+\frac h2\left\{
 \frac{\big[(\mu-\mu_0)-(\kappa-\kappa_0)x\big]^2}{\sigma_0^2}
 +(r-1)(\kappa_0-\kappa)
 \right\}+O((1+x^2)h^2).
\end{align*}
Summation, the Ornstein--Uhlenbeck moment bounds, and Riemann convergence prove \eqref{eq:OU_entropy}.
\end{proof}

\begin{remark}[The cross term is informative]\label{rem:OU_gap}
The centered entropy
$\KL(Q^N\|P^N)-N\mathcal K(r)$ is not nonnegative and therefore is not itself a divergence.  The signed coupling term comes from the order-$h^2$ correction to the exact conditional variance, which depends on mean reversion.  Accordingly, $\D_\lambda$ is claimed to be exact for the renormalized \emph{Euler-kernel} functional in \eqref{eq:euler_exact_identity}, not for the finite part of every exact diffusion transition-kernel entropy.  It is the nonnegative local-characteristic baseline; exact transition kernels may add a signed drift--volatility correction.  In non-Gaussian state-dependent models, analogous terms should be expected.  Identifying them by heat-kernel asymptotics is a separate stochastic-analysis problem.
\end{remark}

\section{Financial interpretation and martingale constraints}\label{sec:finance}

For portfolio or policy decisions, expert-first projection evaluates disagreement under the experts' state distributions, whereas candidate-first aggregation accounts for how its own coefficients change future exposure to disagreement. The likelihood-payoff interpretation requires attainable claims to yield an implemented portfolio. The covariance metric and budget specify additional approximation preferences; neither follows from the payoff identity.

A pricing model requires a further restriction: discounted traded assets must remain martingales. For example, if
\[
 \frac{\dd S_t}{S_t}=r_t\dd t+\sigma_t\dd W_t^Q,
\]
then the drift of the discounted log price is $-\sigma_t^2/2$. Independent choices of its drift and covariance need not satisfy this relation. Consider deterministic scalar experts with volatilities $\sigma_k>0$, and use $M=1$ in discounted log-price coordinates. The candidate-first pointwise objective under the martingale restriction is

\begin{equation}\label{eq:martingale_constrained}
 \sum_k\pi_k\left\{
 \frac{(\sigma^2-\sigma_k^2)^2}{8\sigma_k^2}
 +\frac\lambda2(\sigma-\sigma_k)^2
 \right\}.
\end{equation}
Its positive stationary point solves
\begin{equation}\label{eq:cubic_sigma}
 \left(\sum_k\frac{\pi_k}{\sigma_k^2}\right)\sigma^3
 +(2\lambda-1)\sigma
 -2\lambda\sum_k\pi_k\sigma_k=0.
\end{equation}
The positive solution of \eqref{eq:cubic_sigma} is unique for every $\lambda>0$; see \Cref{app:cubic}.  The derivative of \eqref{eq:martingale_constrained} is one half of the left-hand side of \eqref{eq:cubic_sigma}.  The sign analysis in \Cref{app:cubic} therefore shows that the objective decreases before the unique root and increases after it, so this root is the unique global minimizer.  As $\lambda\to\infty$, it approaches the weighted arithmetic mean of expert volatilities $\sum_k\pi_k\sigma_k$.  At the opposite extreme,
\begin{equation}\label{eq:cubic_small_lambda_limit}
 \sigma^2\longrightarrow
 \left(\sum_k\frac{\pi_k}{\sigma_k^2}\right)^{-1}
 \qquad\text{as }\lambda\downarrow0,
\end{equation}
the weighted harmonic mean of the expert variances, equivalently the reciprocal of the weighted mean precision.  Thus pure log-drift matching under the martingale restriction selects harmonic-mean variance, while overwhelming weight on covariance disagreement selects arithmetic-mean volatility.  For intermediate $\lambda$, the martingale drift restriction generates a continuous economically meaningful interpolation.

\section{A localized square-root volatility-factor example}\label{sec:cir}

Consider the square-root variance factor in \cite{CIR1985} under expert $k$,
\begin{equation}\label{eq:CIR_k}
 \dd V_t=\kappa_k(\theta_k-V_t)\dd t+\xi_k\sqrt{V_t}\dd B_t^k,
 \qquad V_t>0.
\end{equation}
A candidate model has parameters $(\kappa,\theta,\xi)$.  Different values of $\xi$ prescribe different quadratic variation,
\[
 [V]_t=\int_0^t\xi^2V_s\dd s,
\]
so the corresponding path laws are mutually singular whenever the pathwise identities differ with probability one.  A pure path-space Kullback--Leibler barycenter is therefore unavailable.

The square-root diffusion is not uniformly elliptic at zero.  We now make precise the localization used to extend the local-characteristic divergence to this setting.  Fix positive bounds
\[
 0<\underline\kappa<\overline\kappa,\qquad
 0<\underline\theta<\overline\theta,\qquad
 0<\underline\xi<\overline\xi,
\]
and $\delta>0$, and define the compact parameter set
\begin{equation}\label{eq:CIR_parameter_set}
 \Theta_\delta:=\left\{(\kappa,\theta,\xi):
 \begin{array}{c}
 \underline\kappa\le\kappa\le\overline\kappa,
 \quad \underline\theta\le\theta\le\overline\theta,
 \quad \underline\xi\le\xi\le\overline\xi,\\[1mm]
 2\kappa\theta-\xi^2\ge\delta
 \end{array}\right\}.
\end{equation}
Thus the candidate set is separated from the Feller boundary, rather than merely contained in its interior.  Assume that all expert parameters lie in $\Theta_\delta$ and that the candidate starts in its stationary distribution.

At state $v>0$, write
\begin{equation}\label{eq:CIR_local_integrand}
 \ell_k(v;\kappa,\theta,\xi)
 :=\frac{\big(\kappa(\theta-v)-\kappa_k(\theta_k-v)\big)^2}
 {2\xi_k^2v}
 +\frac\lambda2(\xi-\xi_k)^2v.
\end{equation}
For $n\ge1$, let
\begin{equation}\label{eq:CIR_tau}
 \tau_n:=\inf\{t\ge0:V_t\notin[n^{-1},n]\}
\end{equation}
and define the stopped objective
\begin{equation}\label{eq:CIR_truncated}
 \D_{\lambda}^{(n)}(\vartheta\|\vartheta_k)
 :=\E^{\vartheta}\int_0^{T\wedge\tau_n}
 \ell_k(V_t;\vartheta)\dd t,
 \qquad \vartheta=(\kappa,\theta,\xi).
\end{equation}
On $[n^{-1},n]$, the Cox--Ingersoll--Ross coefficients can be extended outside the interval to globally Lipschitz, uniformly elliptic coefficients.  By pathwise uniqueness, the extended and original processes agree up to $\tau_n$, so \eqref{eq:CIR_truncated} is covered by the uniformly elliptic local-characteristic construction.

\begin{proposition}[Localization away from the Feller boundary]\label{prop:CIR_localization}
For every $\vartheta,\vartheta_k\in\Theta_\delta$,
\begin{equation}\label{eq:CIR_localization_limit}
 \D_{\lambda}^{(n)}(\vartheta\|\vartheta_k)
 \uparrow
 \E^{\vartheta}\int_0^T\ell_k(V_t;\vartheta)\dd t
 <\infty.
\end{equation}
The limiting stationary objective is continuous on $\Theta_\delta\times\Theta_\delta$.  Consequently every weighted finite-expert barycenter over $\Theta_\delta$ exists.
\end{proposition}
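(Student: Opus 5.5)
We will prove the monotone limit by monotone convergence, finiteness by stationarity and moments of the Gamma law, and continuity of the limit by explicit formulas in the stationary moments.

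\emph{Monotone limit.} Since $\ell_k\ge0$ and $\tau_n$ is nondecreasing in $n$, the stopped integrals $\int_0^{T\wedge\tau_n}\ell_k(V_t;\vartheta)\dd t$ increase pointwise. Under the Feller condition $2\kappa\theta-\xi^2\ge\delta>0$ the process never hits zero and does not explode, so $\tau_n\uparrow\infty$ almost surely and $T\wedge\tau_n\uparrow T$. The monotone convergence theorem then gives the increasing limit in \eqref{eq:CIR_localization_limit}.

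\emph{Finiteness.} Expand the integrand \eqref{eq:CIR_local_integrand}. With $\alpha=\kappa\theta-\kappa_k\theta_k$ and $\gamma=\kappa_k-\kappa$, the numerator of the drift term is $(\alpha+\gamma v)^2$, so
\[
 \ell_k(v)=\frac{\alpha^2}{2\xi_k^2v}+\frac{\alpha\gamma}{\xi_k^2}+\frac{\gamma^2v}{2\xi_k^2}+\frac\lambda2(\xi-\xi_k)^2v .
\]
Because the candidate starts in its stationary law, Fubini gives
\[
 \E^\vartheta\int_0^T\ell_k\dd t=T\,\E_{\mathrm{stat}}\ell_k(V).
\]
The stationary law is $\mathrm{Gamma}(\text{shape } s=2\kappa\theta/\xi^2,\ \text{rate } 2\kappa/\xi^2)$. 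It has mean $\theta$, and for $s>1$ its inverse moment is
\[
 \E V^{-1}=\frac{2\kappa}{\xi^2(s-1)}=\frac{2\kappa}{2\kappa\theta-\xi^2}\le\frac{2\overline\kappa}{\delta}.
\]
The condition $s>1$ is exactly the Feller condition, which holds with margin $\delta$. Hence the objective equals
\[
 T\left[\frac{\alpha^2}{2\xi_k^2}\,\frac{2\kappa}{2\kappa\theta-\xi^2}+\frac{\alpha\gamma}{\xi_k^2}+\theta\left(\frac{\gamma^2}{2\xi_k^2}+\frac\lambda2(\xi-\xi_k)^2\right)\right],
\]
which is finite.

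\emph{Continuity and existence.} The closed form above is a rational function of $(\vartheta,\vartheta_k)$. Its denominators $\xi_k^2\ge\underline\xi^2$ and $2\kappa\theta-\xi^2\ge\delta$ are bounded away from zero on $\Theta_\delta\times\Theta_\delta$, so it is continuous there. The weighted sum over experts is therefore continuous on the set $\Theta_\delta$, which is compact as a closed and bounded subset of $\R^3$. Weierstrass's theorem then gives existence of the barycenter.

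\emph{Main obstacle.} The delicate step is the inverse-moment term: the factor $1/v$ makes $\ell_k$ singular at the boundary. The argument is sound only because the uniform separation $\delta$ from the Feller boundary keeps $\E V^{-1}$ uniformly bounded, and hence the objective continuous up to the boundary of $\Theta_\delta$. One further point must be checked: the stopped objective is taken under the true CIR law, not the extended one. By pathwise uniqueness the two agree up to $\tau_n$, so monotone convergence is applied directly under $P^\vartheta$.
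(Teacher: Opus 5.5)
Your proposal is correct and follows essentially the paper's route: monotone convergence with $\tau_n\uparrow\infty$ from the strict Feller condition, finiteness from the stationary Gamma law (inverse moment $2\kappa/(2\kappa\theta-\xi^2)\le 2\overline\kappa/\delta$ because the shape exceeds one), continuity from the explicit moment formula with denominators bounded below by $\underline\xi^2$ and $\delta$, and existence by compactness of $\Theta_\delta$. The only cosmetic difference is that you compute the closed form \eqref{eq:CIR_div} directly, whereas the paper first records the uniform domination bound \eqref{eq:CIR_domination} to get integrability.
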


\begin{proof}
For parameters in $\Theta_\delta$, expansion of \eqref{eq:CIR_local_integrand} gives the uniform bound
\begin{equation}\label{eq:CIR_domination}
 0\le \ell_k(v;\vartheta)\le C\left(1+v+v^{-1}\right),
 \qquad v>0,
\end{equation}
with $C$ independent of $\vartheta$ and $\vartheta_k$.  The stationary law is Gamma with shape
$\alpha=2\kappa\theta/\xi^2$.  From \eqref{eq:CIR_parameter_set},
\[
 \inf_{\Theta_\delta}\alpha
 \ge 1+\delta/\overline\xi^2>1.
\]
Thus $\E[V]+\E[V^{-1}]$ is finite, uniformly over $\Theta_\delta$, and \eqref{eq:CIR_domination} makes the full local-cost integral integrable.  The strict Feller condition implies that $V$ is positive and nonexplosive.  Hence, on every finite horizon, $\tau_n\uparrow\infty$ almost surely.  Since the integrand is nonnegative, monotone convergence proves \eqref{eq:CIR_localization_limit}.

The stationary Gamma density and the integrand depend continuously on the parameters.  The explicit stationary moments below, together with the denominator bound $2\kappa\theta-\xi^2\ge\delta$, give continuity of the limiting objective directly.  Compactness of $\Theta_\delta$ then gives existence of a minimizer.
\end{proof}

The stationary moments are
\begin{equation}\label{eq:CIR_stationary_moments}
 \E[V]=\theta,
 \qquad
 \E[V^{-1}]=\frac{2\kappa}{2\kappa\theta-\xi^2}.
\end{equation}
They yield a closed-form objective.

\begin{proposition}[Stationary Cox--Ingersoll--Ross divergence]\label{prop:CIR_formula}
Let
\[
 A_k:=\kappa\theta-\kappa_k\theta_k,
 \qquad B_k:=\kappa-\kappa_k.
\]
For $\vartheta,\vartheta_k\in\Theta_\delta$, the divergence rate from the candidate Cox--Ingersoll--Ross model to expert $k$ is
\begin{align}\label{eq:CIR_div}
 \frac1T\D_\lambda(P^{\kappa,\theta,\xi}\|P^{\kappa_k,\theta_k,\xi_k})
 &=\frac{1}{2\xi_k^2}
 \left\{
 A_k^2\frac{2\kappa}{2\kappa\theta-\xi^2}
 -2A_kB_k+B_k^2\theta
 \right\}
 +\frac\lambda2(\xi-\xi_k)^2\theta.
\end{align}
The weighted Cox--Ingersoll--Ross barycenter over $\Theta_\delta$ is therefore the minimizer of the explicit continuous objective obtained by summing \eqref{eq:CIR_div} with weights $\pi_k$.
\end{proposition}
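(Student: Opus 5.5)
The proof reduces the divergence to two stationary moments by stationarity and linearity of expectation. By \Cref{prop:CIR_localization}, the divergence $\D_\lambda(P^{\kappa,\theta,\xi}\|P^{\kappa_k,\theta_k,\xi_k})$ is the monotone limit of the stopped objectives, namely $\E^{\vartheta}\int_0^T\ell_k(V_t;\vartheta)\dd t$. The integrand is nonnegative and dominated by $C(1+V+V^{-1})$, which is integrable under the stationary Gamma law. Fubini therefore applies, and since the candidate starts in its stationary distribution, $V_t$ has that law for every $t$. The time integral equals $T\,\E^{\pi_\vartheta}[\ell_k(V;\vartheta)]$, which isolates the rate $\frac1T\D_\lambda$.

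Next I expand the integrand algebraically. The drift difference is
\[
 \kappa(\theta-v)-\kappa_k(\theta_k-v)=A_k-B_kv .
\]
Squaring and dividing by $2\xi_k^2v$ gives
\[
 \frac{1}{2\xi_k^2}\bigl\{A_k^2v^{-1}-2A_kB_k+B_k^2v\bigr\},
\]
and the transport term is $\frac\lambda2(\xi-\xi_k)^2v$. The objective is thus linear in the two moments $\E[V]$ and $\E[V^{-1}]$.

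It remains to verify \eqref{eq:CIR_stationary_moments}. The stationary law is Gamma with shape $\alpha=2\kappa\theta/\xi^2$ and rate $\beta=2\kappa/\xi^2$. This follows by solving the stationary Fokker--Planck equation, whose zero-flux solution has density proportional to $v^{\alpha-1}e^{-\beta v}$. Hence $\E[V]=\alpha/\beta=\theta$. For $\alpha>1$, which holds on $\Theta_\delta$ by \Cref{prop:CIR_localization}, the reciprocal moment is
\[
 \E[V^{-1}]=\frac{\beta}{\alpha-1}=\frac{2\kappa/\xi^2}{(2\kappa\theta-\xi^2)/\xi^2}=\frac{2\kappa}{2\kappa\theta-\xi^2}.
\]
Substituting these moments yields \eqref{eq:CIR_div}.

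Finally, I sum with weights $\pi_k$. The resulting objective is explicit and continuous on $\Theta_\delta$, because the denominators are bounded below by $\delta$. Its minimizer over the compact set $\Theta_\delta$ is the weighted barycenter, and existence is already given by \Cref{prop:CIR_localization}.

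The only delicate point is justifying the interchange of expectation and time integral together with the identification of the localized limit with the stationary expectation. That is handled by nonnegativity and monotone convergence, both already established in \Cref{prop:CIR_localization}. The reciprocal moment requires $\alpha>1$, which the Feller margin $\delta$ guarantees.
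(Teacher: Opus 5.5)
Your proposal is correct and follows the same route as the paper: you write the drift difference as $A_k-B_kv$, expand the local integrand into terms in $v^{-1}$, $1$, and $v$, and take stationary expectations using $\E[V]=\theta$ and $\E[V^{-1}]=2\kappa/(2\kappa\theta-\xi^2)$, with the passage from the localized objectives supplied by \Cref{prop:CIR_localization}. Your computation of the reciprocal Gamma moment, from shape $\alpha=2\kappa\theta/\xi^2>1$ and rate $2\kappa/\xi^2$, fills in a step that the paper states without proof.
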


\begin{proof}
At state $v$,
\[
 b(v)-b_k(v)=A_k-B_kv,
 \qquad a_k(v)=\xi_k^2v.
\]
The drift part of the local divergence is
\[
 \frac{(A_k-B_kv)^2}{2\xi_k^2v}.
\]
Taking the stationary expectation and using \eqref{eq:CIR_stationary_moments} gives the first term of \eqref{eq:CIR_div}.  In one dimension,
\[
 \BW^2(\xi^2v,\xi_k^2v)=(\xi-\xi_k)^2v,
\]
whose stationary expectation gives the second term.  The rigorous passage from the localized uniformly elliptic objectives follows from \Cref{prop:CIR_localization}.
\end{proof}

\begin{corollary}[Different volatility of volatility and common mean reversion]\label{cor:CIR_xi}
If $\kappa_k=\kappa_0$ and $\theta_k=\theta_0$ for all $k$, and the candidate is restricted to the same $(\kappa_0,\theta_0)$, then
\begin{equation}\label{eq:xi_bary}
 \xi^*=\sum_{k=1}^K\pi_k\xi_k.
\end{equation}
The minimized divergence rate is
\begin{equation}\label{eq:xi_dispersion}
 \frac{\lambda\theta_0}{2}\sum_k\pi_k(\xi_k-\xi^*)^2.
\end{equation}
\end{corollary}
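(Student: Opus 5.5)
With $\kappa=\kappa_k$ and $\theta=\theta_k$, the drift mismatch disappears and only the volatility-of-volatility term in \eqref{eq:CIR_div} survives, so the problem reduces to a one-variable quadratic. Concretely, restricting the candidate to $(\kappa_0,\theta_0,\xi)$ gives $A_k=\kappa_0\theta_0-\kappa_0\theta_0=0$ and $B_k=\kappa_0-\kappa_0=0$. Hence the drift bracket in \eqref{eq:CIR_div} vanishes identically; in particular the term involving $2\kappa/(2\kappa\theta-\xi^2)$ drops out, so no dependence on $\xi$ enters through $\E[V^{-1}]$. By \Cref{prop:CIR_formula}, the weighted divergence rate is therefore
\[
 \frac{\lambda\theta_0}{2}\sum_k\pi_k(\xi-\xi_k)^2 .
\]

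Because $\theta_0>0$ and $\lambda>0$, this is a strictly convex quadratic in $\xi$. The variance (ANOVA) identity
\[
 \sum_k\pi_k(\xi-\xi_k)^2=(\xi-\xi^*)^2+\sum_k\pi_k(\xi_k-\xi^*)^2,\qquad \xi^*=\sum_k\pi_k\xi_k,
\]
shows that the unique minimizer is $\xi^*$ and that the minimum value is \eqref{eq:xi_dispersion}. This is the scalar analogue of \eqref{eq:reverse_ANOVA}.

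The one point that needs checking is that $\xi^*$ is admissible, i.e.\ that $(\kappa_0,\theta_0,\xi^*)\in\Theta_\delta$, so that \Cref{prop:CIR_formula} applies at the minimizer. This follows from convexity. The bounds $\underline\xi\le\xi^*\le\overline\xi$ hold because $\xi^*$ is a convex combination of the $\xi_k$. The condition $2\kappa_0\theta_0-\xi^2\ge\delta$ is equivalent to $\xi\le\sqrt{2\kappa_0\theta_0-\delta}$ for $\xi>0$; it holds for every $\xi_k$, hence for their average. (If $\xi^*$ happened to lie outside the set, the constrained minimizer would instead be the projection of $\xi^*$ onto the admissible interval; convexity rules this out.) With admissibility established, $\xi^*$ is the barycenter over the restricted slice of $\Theta_\delta$.
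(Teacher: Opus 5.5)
Your proposal is correct and is essentially the argument the paper leaves implicit: substituting $A_k=B_k=0$ into the stationary formula of \Cref{prop:CIR_formula} leaves $\frac{\lambda\theta_0}{2}\sum_k\pi_k(\xi-\xi_k)^2$, and the variance identity gives both the minimizer $\xi^*$ and the minimum value. Your convexity check that $(\kappa_0,\theta_0,\xi^*)\in\Theta_\delta$ supplies a detail the paper omits but which is needed for that formula to apply at the minimizer.
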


Embedding \eqref{eq:CIR_k} in the stochastic-volatility model of \cite{Heston1993},
\[
 \frac{\dd S_t}{S_t}=r\dd t+\sqrt{V_t}\dd W_t^S,
 \qquad \dd\langle W^S,B\rangle_t=\rho\dd t,
\]
gives the joint instantaneous covariance
\[
 v\begin{pmatrix}
 1&\rho\xi\\
 \rho\xi&\xi^2
 \end{pmatrix}.
\]
Experts may therefore disagree simultaneously about volatility of volatility and leverage.  At fixed $v$, the noncommuting matrix problem is rigorous on the bounded affine stochastic-volatility slice of \Cref{cor:Heston_slice}; no common-covariance or simultaneous-diagonalization assumption is required.

\begin{remark}[Scope of the square-root stochastic-volatility result]\label{rem:Heston_scope}
The dynamic candidate-first Hamilton--Jacobi--Bellman theory in \Cref{thm:HJB} assumes uniform ellipticity and therefore does not yet cover a full square-root stochastic-volatility barycenter up to the boundary $v=0$.  The results established here are the localized stationary square-root divergence, existence on compact parameter sets separated from the Feller boundary, and the fixed-$v$ noncommuting covariance selector.  A dynamic candidate-first barycenter for square-root experts would require a degenerate Hamilton--Jacobi--Bellman equation or controlled-martingale-problem extension with boundary analysis and is left open.
\end{remark}

\section{Conclusion}\label{sec:conclusion}

Aggregating diffusion models with different covariances requires a choice of approximation loss as well as a choice of KL orientation. We retain the covariance normalization of Gaussian drift information and measure shock disagreement by quadratic transport in a fixed state metric. The resulting criterion has an exact Euler-kernel construction, is equivariant under linear coordinate changes with the metric transformed accordingly, and permits a direct interpretation of the covariance penalty through a disagreement budget.

The two orientations yield distinct diffusion aggregates. Expert-first projection uses fixed expert state distributions, posterior-mean drift, and a covariance correction for drift dispersion. Candidate-first aggregation changes its own state distribution and is governed by a control problem. Its covariance selector has a matrix congruence representation with exact unrestricted threshold $D^2V+\lambda M\succ0$ in the original state coordinates. The scalar results show why averaging volatilities alone misses part of either problem: disagreement about drift can itself alter the selected volatility.

Three extensions remain. A general exact-transition expansion would identify the drift--covariance terms beyond the Ornstein--Uhlenbeck correction established here. Dynamic square-root models require a degenerate HJB analysis at the volatility boundary. For applications, covariance budgets and martingale restrictions can be imposed together; quantifying the resulting errors for path-dependent payoffs would clarify how to choose a budget for a particular pricing or control task.

\subsection*{Funding}
This work was partially supported by the Grant Agency of the Czech Republic under grant 24-11146S.

\subsection*{Conflicts of Interest}
The author declares no competing interests.

\section*{Declaration of generative AI and AI-assisted technologies in the manuscript preparation process}
During the preparation of this work, the author used ChatGPT and Claude for drafting, language editing, organization, and checks of the mathematical exposition. The author reviewed and edited the resulting material and takes full responsibility for the content of the publication.

\appendix

\section{The scalar martingale-constrained cubic}\label{app:cubic}

Differentiating \eqref{eq:martingale_constrained} with respect to $\sigma$ gives
\[
 \sum_k\pi_k\left\{
 \frac{\sigma(\sigma^2-\sigma_k^2)}{2\sigma_k^2}
 +\lambda(\sigma-\sigma_k)
 \right\}=0.
\]
Multiplication by two and collection of powers gives \eqref{eq:cubic_sigma}.  Put
\[
 a:=\sum_k\frac{\pi_k}{\sigma_k^2}>0,
 \qquad
 \bar\sigma:=\sum_k\pi_k\sigma_k>0,
 \qquad
 f_\lambda(\sigma):=a\sigma^3+(2\lambda-1)\sigma-2\lambda\bar\sigma.
\]
For every $\lambda>0$, $f_\lambda(0)<0$ and $f_\lambda(\sigma)\to+\infty$ as $\sigma\to\infty$.  If $\lambda\ge1/2$, then $f_\lambda$ is strictly increasing on $(0,\infty)$.  If $0<\lambda<1/2$, then $f_\lambda'$ is negative near zero and vanishes once, at
\[
 \sigma_{\min}=\sqrt{\frac{1-2\lambda}{3a}}.
\]
Hence $f_\lambda$ first decreases from an already negative value and then increases strictly to $+\infty$.  In either case it crosses zero exactly once on $(0,\infty)$.

Let $\sigma_\lambda$ denote that root.  For $\lambda>1/2$, the root equation and $a\sigma_\lambda^3\ge0$ imply
\[
 0<\sigma_\lambda\le \frac{2\lambda}{2\lambda-1}\,\bar\sigma,
\]
so $(\sigma_\lambda)$ is bounded for large $\lambda$.  Dividing \eqref{eq:cubic_sigma} by $2\lambda$ and passing to a convergent subsequence shows that every subsequential limit equals $\bar\sigma$; hence
$\sigma_\lambda\to\bar\sigma$.  As $\lambda\downarrow0$, every positive accumulation point solves
\[
 a\sigma^3-\sigma=0.
\]
The unique positive solution is $\sigma=a^{-1/2}$.  The roots cannot converge to zero because, for all sufficiently small $\lambda$, $f_\lambda$ remains negative on a fixed interval $(0,\varepsilon]$.  Therefore
\[
 \sigma_\lambda^2\longrightarrow a^{-1}
 =\left(\sum_k\frac{\pi_k}{\sigma_k^2}\right)^{-1},
\]
which proves \eqref{eq:cubic_small_lambda_limit}.

\end{document}